\crefname{subsection}{subsection}{subsections}
\newcommand{\citet}{\cite}
\newcommand{\citep}{\cite}
\crefname{subsection}{subsection}{subsections}
\numberwithin{equation}{section}
\theoremstyle{acmdefinition}
\newcommand{\scr}{\mathcal}
\newcommand{\mb}{\mathbb}
\newcommand{\til}{\tilde}
\newcommand{\eps}{\varepsilon}
\newcommand{\LPOPT}{\text{LPOPT}}
\newcommand{\cs}{\tilde{c}}
\newcommand{\con}{\scr{T}}
\newcommand{\bE}{\mathbb{E}}
\newcommand{\Rem}{\mathtt{Rem}}
\newcommand{\Ber}{\textup{Ber}}
\newcommand{\bI}{\mathbbm{1}}
\newcommand{\fp}{\mathsf{f}}
\newcommand{\bp}{\mathsf{b}}
\newcommand{\rp}{\Lambda}
\newcommand{\sumx}{\rho}
\newtheorem{theorem}{Theorem}[section]
\newtheorem{lemma}[theorem]{Lemma}
\newtheorem{proposition}[theorem]{Proposition}
\theoremstyle{definition}
\newtheorem{definition}{Definition}
\newtheorem{remark}{Remark}
\begin{document}
\title{
Forward-backward Contention Resolution Schemes for Fair Rationing
}

\author{
Will Ma
\thanks{Graduate School of Business and Data Science Institute, Columbia University.
\texttt{wm2428@gsb.columbia.edu}
}
\and
Calum MacRury
\thanks{Graduate School of Business, Columbia University.
\texttt{cm4379@columbia.edu}}
\and 
Cliff Stein
\thanks{IEOR, Computer Science and Data Science Institute, Columbia University. Research supported in part  by NSF grant CCF-2218677, ONR grant ONR-13533312, and by the Wai T. Chang Chair in Industrial Engineering and Operations Research.
\texttt{cliff@ieor.columbia.edu}}
}

\date{}
\maketitle

\begin{abstract}
We use contention resolution schemes (CRS) to derive algorithms for the fair rationing of a single resource when agents have stochastic demands.
We aim to provide ex-ante guarantees on the level of service provided to each agent, who may measure service in different ways (Type-I, II, or III), calling for CRS under different feasibility constraints (rank-1 matroid or knapsack).
We are particularly interested in \textit{two-order} CRS where the agents are equally likely to arrive in a known forward order or its reverse, which is motivated by online rationing at food banks.
Indeed, for a mobile pantry driving along cities to ration food, it is equally efficient to drive that route in reverse on half of the days, and we show that doing so significantly improves the service guarantees that are possible, being more "fair" to the cities at the back of the route.

In particular, we derive a two-order CRS for rank-1 matroids with guarantee $1/(1+e^{-1/2})\approx 0.622$, which we prove is tight.
This improves upon the $1/2$ guarantee that is best-possible under a single order \citep{alaei2014bayesian}, while achieving separation with the $1-1/e\approx 0.632$ guarantee that is possible for random-order CRS \citep{lee2018optimal}.
Because CRS guarantees imply prophet inequalities, this also beats the two-order prophet inequality with ratio $(\sqrt{5}-1)/2\approx 0.618$ from \citet{arsenis2021constrained}, which was tight for single-threshold policies.
Rank-1 matroids suffice to provide guarantees under Type-II or III service, but Type-I service requires knapsack.
Accordingly, we derive a two-order CRS for knapsack with guarantee $1/3$, improving upon the $1/(3+e^{-2})\approx 0.319$ guarantee that is best-possible under a single order \citep{jiang2022tight}.
To our knowledge, $1/3$ provides the best-known guarantee for knapsack CRS even in the offline setting.
Finally, we provide an upper bound of $1/(2+e^{-1})\approx 0.422$ for two-order knapsack CRS, strictly smaller than the upper bound of $(1-e^{-2})/2\approx0.432$ for random-order knapsack CRS.

\end{abstract}

\section{Introduction}

Rationing a limited supply is a problem as old as society itself.
In some circumstances, the demands to manifest are also uncertain, as agents are sojourners who come and go.
Rationing with limited supply can be modelled as an online decision-making problem where the resource can either be put to good use serving present agents, or be rationed for future agents who may or may not show up.

Meanwhile, contention resolution schemes (CRS) are a modern tool for selecting a subset of agents, often online.
They provide probabilistic guarantees to each agent for being selected, and operate under both a
global budget constraint on the pool of agents selected, and
local stochasticity in whether each agent can be feasibly selected.
Since being introduced in the seminal works of \citet{chekuri2014submodular,feldman2021online}, CRS have seen a burgeoning literature studying different feasibility structures, arrival patterns, and other  variants, motivated by applications ranging from submodular optimization to Bayesian search to online stochastic matching.

In this paper we connect the two concepts, using CRS to derive rationing policies with guarantees on how well the demand of each agent will be served.
In some sense, our approach is quite natural---the global budget constraint in CRS captures the limited supply in rationing, while the local stochasticity in CRS captures the uncertain demand.
That being said, problem-specific nuances arise from the different ways in demand service is measured in rationing, and our approach is able to use CRS to consider different rationing problems in a unified manner (see \Cref{sec:rationingPrelim}).
The online rationing application also motivates a new "forward-backward" arrival pattern for CRS (see \Cref{sec:crsPrelim}).
Going full circle, our results for forward-backward CRS under the rank-1 matroid and knapsack feasibility structures (see \Cref{sec:results}) have implications beyond, improving two-order prophet inequalities and random-order/offline knapsack CRS.


\subsection{Rationing Preliminaries} \label{sec:rationingPrelim}

Rationing problems have been studied under different service definitions and arrival patterns, with different goals in mind.
We outline the differences and explain our approach to rationing.

\begin{definition}[Setup] \label{def:setup}
Agents $i\in[n]:=\{1,\ldots,n\}$ have random demands $D_i\ge 0$ drawn independently from known distributions $F_i$ with means $\mu_i>0$.
Each agent $i$ receives a (random) allocation $Y_i\ge0$, which must satisfy $\sum_{i=1}^n Y_i\le1$ due to having a limited supply of 1.
If agent $i$ has demand $d$ and receives allocation $y$, the \textit{service} provided is given by $s_i(y,d)$, where $s_i$ can take one of the three functional forms below.
The expected service provided to an agent $i$ is $\bE[s_i(Y_i,D_i)]$.

\begin{enumerate}
\item The Type-I service function defines $s_i(y,d)=\bI(y\ge d)$.

Assuming $Y_i\le D_i$, we have $\bE[s_i(Y_i,D_i)]=\Pr[Y_i=D_i]$.
\item The Type-II service function defines $s_i(y,d)=\min\{y,d\}/\mu_i$.

Assuming $Y_i\le D_i$, we have $\bE[s_i(Y_i,D_i)]=\bE[Y_i]/\bE[D_i]$.
\item The Type-III service function defines $s_i(y,d)=\min\{y,d\}/d$.

Assuming $Y_i\le D_i$, we have $\bE[s_i(Y_i,D_i)]=\bE[Y_i/D_i]$.
\end{enumerate}
For Type-III service, $0/0$ is treated as 1, i.e.\ if demand is 0 then 100\% service is trivially achieved.
\end{definition}

We will always assume $Y_i\le D_i$, which is without loss of generality if $D_i$ is truthfully\footnote{There is an alternate literature that allows for demands to be misreported, which we mention in \Cref{sec:furtherRelated}.} revealed before $Y_i$ has to be decided.
Under this assumption, different settings arise depending on whether the $D_i$'s are revealed all at once or one by one.

\begin{definition}[Offline vs.\ Online]
In offline rationing, $(F_i)_{i\in[n]}$ is known in advance, $(D_i)_{i\in[n]}$ is revealed at the beginning, and then the algorithm decides $Y_i\in[0,D_i]$ for all $i$ satisfying $\sum_{i=1}^n Y_i\le1$.

In online rationing, $(F_i)_{i\in[n]}$ is known in advance.
The arrival permutation $\Lambda$, which could be randomly drawn from a known independent distribution, is revealed\footnote{This also fully reveals the identity of each arriving agent.  See \citet{ezra2023next,ezra2024choosing} for some recent works that consider unknown orders or identities.} at the beginning.
Demands $D_i$ are then revealed in order following $\Lambda$, after which $Y_i\in[0,D_i]$ must be immediately decided, with $Y_i$ no greater than 1 minus the total allocation to agents who arrived before $i$.
\end{definition}

Past work has studied offline rationing, and online rationing when the arrival order $\Lambda$ is fixed, with different goals in mind that we outline below.

\begin{enumerate}
\item Optimal Algorithms for Fairness: \citet{lien2014sequential} consider online rationing under Type-III service, where they maximize $\bE[\min_{i\in[n]} Y_i/D_i]$, a Rawlsian fairness objective that aims to serve the worst-off agent as well as possible.  They use dynamic programming to characterize the optimal algorithm for this objective, assuming independent demands.
\item Simple Algorithms with Fairness Guarantees: \citet{manshadi2021fair} consider the same setting and objective as \citet{lien2014sequential}, but allow for correlated demands, which makes the optimal algorithm intractable.  Instead, they show that an elegant
heuristic achieves the best-possible competitive ratio
under worst-case correlated demands, relative to a benchmark
measuring demand scarcity.
They focus on Type-III service and prove that their algorithm is also best-possible for the objective $\min_{i\in[n]}\bE[Y_i/D_i]$, with the minimum outside the expectation.
\item Service Feasibility Determination: \citet{jiang2023achieving} consider offline rationing under all three types of service
and potentially correlated demands.
Instead of maximizing a fairness objective, they determine for given targets $(\beta_i)_{i\in[n]}$ whether it is possible to satisfy $\bE[s_i(Y_i,D_i)]\ge\beta_i$ for all $i$, and if so, what is the allocation algorithm.
They also investigate the minimum supply required to satisfy given targets $(\beta_i)_{i\in[n]}$, in combinatorial settings.
\end{enumerate}

We now state our approach to rationing using CRS, and explain how it relates to goals (1)--(3) above.
We establish the following general reduction.
\begin{theorem} \label{thm:reduction}
Under any combination of service functions from \Cref{def:setup}, if $(\beta_i)_{i\in[n]}$ lies in a convex region (see \Cref{def:exanteFeas}) which includes the vector $(\bE[s_i(Y_i,D_i)])_{i\in[n]}$ for any (online or offline) algorithm, then an $\alpha$-selectable CRS (defined in \Cref{sec:crsPrelim}) for knapsack can be used to define an online rationing algorithm satisfying
\begin{align} \label{eqn:reductionGuarantee}
\bE[s_i(Y_i,D_i)] &\ge\alpha\beta_i     &\forall i\in[n].
\end{align}
If every service function $s_i$ is of Type-II or Type-III, then an $\alpha$-selectable CRS for rank-1 matroids is sufficient to define an online rationing algorithm satisfying~\eqref{eqn:reductionGuarantee}.
\end{theorem}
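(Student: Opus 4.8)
The plan is to extract a benchmark allocation rule from the ex-ante region of \Cref{def:exanteFeas} and then run the CRS on an instance built from this benchmark, reading the rationing policy off the CRS's output. \emph{Step 1 (benchmark).} Since $(\beta_i)_{i\in[n]}$ lies in the region and the region contains the service vector of every (online or offline) algorithm, \Cref{def:exanteFeas} supplies a (possibly randomized) benchmark in which each agent $i$ receives $Y_i^\star=Y_i^\star(D_i)$ as a function of its own demand (and fresh randomness) alone, with $0\le Y_i^\star\le\min(D_i,1)$ a.s., $\bE[\sum_{i}Y_i^\star]\le1$, and $\bE[s_i(Y_i^\star,D_i)]\ge\beta_i$ for every $i$. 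This decoupled, expected-budget relaxation is exactly what the region encodes, and I will feed $Y_i^\star$ directly to the CRS without any further normalization.

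\emph{Step 2 (general $s_i$: knapsack).} Regard $C_i:=Y_i^\star(D_i)\in[0,1]$ as the size of item $i$, revealed upon $i$'s arrival. The $C_i$ are independent and $\sum_i\bE[C_i]\le1$, so they form an LP-feasible instance for a knapsack CRS in the sense of \Cref{sec:crsPrelim}. Run the given $\alpha$-selectable knapsack CRS along the arrival order and set $Y_i:=C_i$ if $i$ is selected and $Y_i:=0$ otherwise; then $\sum_iY_i\le1$ is inherited from the knapsack constraint. Conditioning on $D_i$ and the realized $C_i$, the CRS selects $i$ with probability at least $\alpha$ independently of $D_i$, and since every $s_i(\cdot,d)$ is nondecreasing with $s_i(0,d)\ge0$, we get $\bE[s_i(Y_i,D_i)\mid D_i,C_i]\ge\alpha\,s_i(C_i,D_i)$, hence $\bE[s_i(Y_i,D_i)]\ge\alpha\,\bE[s_i(Y_i^\star,D_i)]\ge\alpha\beta_i$, which is \eqref{eqn:reductionGuarantee}.

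\emph{Step 3 (only Type-II/III: rank-1 matroid).} When every $s_i$ is Type-II or Type-III, $s_i(\cdot,d)$ grows linearly up to the cap $d$, so \emph{divisible} service is useful, and we can resolve contention separately on each infinitesimal unit of supply with a single-slot CRS. Index the unit of supply by $p\in[0,1]$, draw independent offsets $U_i\sim\mathrm{Unif}[0,1]$, and declare that $i$ \emph{wants piece $p$} iff $p\in[U_i,\,U_i+Y_i^\star(D_i))\bmod 1$ (using $Y_i^\star\le1$). By the uniform offset, $\Pr[i\text{ wants }p]=\bE[Y_i^\star]$ for every $p$, so $\sum_i\Pr[i\text{ wants }p]\le1$ at every $p$ and the wants are independent across agents; thus for each fixed $p$ we may run the $\alpha$-selectable rank-1 matroid CRS along the arrival order on the instance in which $i$ is active exactly when $i$ wants $p$. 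Award $p$ to whichever agent it selects, and let $Y_i$ be the total measure of pieces awarded to $i$. Each piece goes to at most one agent, so $\sum_iY_i\le1$ and feasibility holds online; moreover $Y_i\le Y_i^\star(D_i)\le D_i$, and integrating the per-piece guarantee $\Pr[i\text{ wins }p\mid i\text{ wants }p]\ge\alpha$ over $p$ gives $\bE[Y_i\mid D_i]\ge\alpha\,Y_i^\star(D_i)$. Because $Y_i\le D_i$, Type-II service equals $Y_i/\mu_i$ and Type-III service equals $Y_i/D_i$, so this conditional allocation bound upgrades to $\bE[s_i(Y_i,D_i)\mid D_i]\ge\alpha\,s_i(Y_i^\star(D_i),D_i)$, and therefore $\bE[s_i(Y_i,D_i)]\ge\alpha\beta_i$.

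I expect Step 3 to be the crux. One must (i) legitimize instantiating a single rank-1-matroid CRS design simultaneously on a continuum of pieces---equivalently, discretize into $N$ equal pieces and let $N\to\infty$---so that the per-piece $\alpha$-guarantees may be integrated, and (ii) arrange the piece-wants so that they are at once independent across agents, of total probability at most $1$ at \emph{every} piece, and of total mass equal to $\bE[Y_i^\star]$; the random-offset coupling is precisely what makes all three hold simultaneously. The same device fails for Type-I service, where $s_i$ rewards only the \emph{exact} demand $D_i$: the commitment to agent $i$ is then an indivisible item of random size that cannot be split across pieces, which is why Type-I forces the knapsack reduction of Step 2 and its weaker constant.
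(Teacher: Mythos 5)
Your proof is correct. The knapsack reduction (Step~2) matches the paper's almost verbatim: the paper also defines the size as $\min\{D_i,1\}$ on the event $Q_i\le q_i$ and allocates $Y_i=S_i$ when the CRS accepts. Your Step~3, however, is a genuinely different route to the rank-1 matroid reduction, and it is worth explaining where the two diverge.

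The paper's argument (\Cref{lem:singleUnitReduction}) reads the conditional acceptance probabilities $c_\sigma(i)$ out of the CRS, defines the allocation $\min\{D_i,\Rem,\tau_i\}$, and picks the cap $\tau_i$ so that $\bE[Y_i\mid\Lambda=\sigma]=c_\sigma(i)x_i$ is preserved exactly. Proving that such a $\tau_i$ exists uses that $r\mapsto\int_0^{q_i}\min\{F^{-1}_i(q),r\}dq$ is concave, so the worst case for $\Rem$ is a bimodal distribution on $\{0,1\}$; upgrading the preserved mean to the stated service guarantee then takes an FKG-inequality argument for Type-III (and an extra rearrangement step when $q_i>F_i(1)$). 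Your argument sidesteps all of this: you divide the unit of supply into infinitesimal pieces, couple each agent's ``wants'' across pieces via a single uniform offset $U_i$ so that the marginal want-probability is exactly $\bE[Y_i^\star]$ at \emph{every} piece, and invoke the CRS once per piece with the \emph{same} input $\bm{x}=(\bE[Y_i^\star])_i$. The black-box guarantee $\Pr[i\text{ wins }p\mid i\text{ wants }p]\ge\alpha$ then integrates to $\bE[Y_i\mid D_i]\ge\alpha Y_i^\star(D_i)$, and linearity of Type-II/III service in $y$ on $[0,d]$ finishes it. The one step that deserves a line of justification (and you gesture at it) is that the per-piece $\alpha$-guarantee survives the additional conditioning on $D_i$: this holds because, given that $i$ is active for piece $p$, the CRS's decision is a function only of $\Lambda$, the other agents' activeness for $p$, and the CRS's own coins---all independent of $(D_i,U_i)$---so $\Pr[i\text{ wins }p\mid i\text{ wants }p,D_i]=\Pr[i\text{ wins }p\mid i\text{ wants }p]\ge\alpha$. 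Your approach trades the paper's calibration/FKG analysis for a limit $N\to\infty$ of parallel CRS instances; both are implementation caveats of comparable strength (the paper's algorithm already needs the distribution of $\Rem$, cf.\ \Cref{rem:historySampling}). What your version buys is that the CRS is used strictly as a black box (no $c_\sigma(i)$ needed) and the Type-III case requires no FKG or truncation casework; what the paper's version buys is a single CRS instance and an allocation rule ($\min\{D_i,\Rem,\tau_i\}$) that is directly interpretable as a threshold policy. Your diagnosis of why Type-I cannot be handled this way---because $\bI(y\ge d)$ is not linear in $y$, so an $\alpha$-fraction of the mass does not yield an $\alpha$-fraction of the indicator---is also the right one.
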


Our approach yields lower bounds on $\bE[s_i(Y_i,D_i)]$ separately for each agent $i$, allowing us to study the objective $\min_i\bE[s_i(Y_i,D_i)]$ for general types of service, or study feasibility determination.

\begin{enumerate}
\item For the goal of optimal algorithms, we note that with objective $\min_i\bE[s_i(Y_i,D_i)]$, even under independent demands, dynamic programming is intractable\footnote{
This is because with the minimum outside the expectation, the state is no longer captured by the minimum of $s_i(Y_i,D_i)$ over all agents $i$ who have arrived so far.  In fact, even the offline problem is highly non-trivial (see \citet{jiang2023achieving}).
} and the approach of \citet{lien2014sequential} would not work.
Our contribution here is to provide an $\alpha$-approximation algorithm, where we can use our convex relaxation to compute an upper bound $\beta$ on the objective achieved by any (online or offline) rationing algorithm, and then apply~\eqref{eqn:reductionGuarantee} with $\beta_1=\cdots=\beta_n=\beta$ to establish an online algorithm with $\min_i\bE[s_i(Y_i,D_i)]\ge\alpha\beta$, which is within an $\alpha$-factor from optimality for some constant $\alpha\le1$.
\item We are deriving simple algorithms with guarantees like in \citet{manshadi2021fair}, but able to provide approximations relative to the optimal algorithm, which their benchmark is not guaranteed to upper-bound (see \citet{manshadi2021fair}).
That being said, their benchmark is justified by their ability to handle correlated demands, and they are also able to handle both objectives $\bE[\min_{i\in[n]} Y_i/D_i]$ and $\min_{i\in[n]}\bE[Y_i/D_i]$.
\item Like \citet{jiang2023achieving}, our approach applies to all three types of service and does not have a particular fairness objective in mind.  We are essentially determining \textit{online} service feasibility, where~\eqref{eqn:reductionGuarantee} shows that if service targets $(\beta_i)_{i\in[n]}$ are feasible in our relaxation, then targets $(\alpha\beta_i)_{i\in[n]}$ are feasible using an online algorithm.
However, our approach does not provide a characterization of all feasible service vectors, the way that their approach does for the offline problem.
\end{enumerate}

We now justify our objective of $\min_i\bE[s_i(Y_i,D_i)]$ for goals (1) and (2) above, 
which is sometimes called "ex-ante" fairness. 
 In contrast,  \citet{lien2014sequential,manshadi2021fair} can both handle the "ex-post" fairness $\bE[\min_i Y_i/D_i]$ under Type-III service.
The latter is indeed well-justified by one-time allocation settings (e.g.\ ventilators during the COVID-19 pandemic), because the objective $\min_i Y_i/D_i$ can be empirically evaluated from one sample even if the true distributions are unknown.
By contrast, our motivation comes more from repeated allocation settings, where it is acceptable for $Y_i$ to be small one week if it is made up for in other weeks.
In fact, in classical supply chain contexts where a distributor allocates $Y^t_i$ to vendors $i$ with demands $D^t_i$ over weeks $t=1,\ldots,T$, the fill rate is measured by what fraction of a vendor's overall demand is served (see e.g. \citet{chopra2007supply,simchi2005logic}), i.e.
\begin{align} \label{eqn:empiricalFillRate}
\text{Fill Rate of }i=\frac{Y^1_i+\cdots+Y^T_i}{D^1_i+\cdots+D^T_i}=\frac{\frac1T(Y^1_i+\cdots+Y^T_i)}{\frac1T(D^1_i+\cdots+D^T_i)}.
\end{align}
In that sense, a reasonable\footnote{In fact, \citet{ma2020group} argue that ex-post fairness should go with Type-III service while ex-ante fairness should go with Type-II.  They study only the combinations $\bE[\min_i Y_i/D_i]$ and $\min_i\bE[Y_i]/\bE[D_i]$, which they call "short-run" and "long-run" fairness respectively.} fairness objective would be $\min_i \bE[Y_i]/\bE[D_i]$, i.e.\ ex-ante fairness under Type-II service, because the numerator in~\eqref{eqn:empiricalFillRate} is approximating $\bE[Y_i]$ while the denominator is approximating $\bE[D_i]$.
Meanwhile, in contexts where a supplier is allocating to manufacturers, manufacturer $i$ may only produce if their ordered stock $D_i$ if met in entirety, in which case one should consider Type-I service $\Pr[Y_i=D_i]$ instead.
Regardless, our approach works for ex-ante fairness maximization or service feasibility determination under arbitrary types of service.

Finally, we make a modeling contribution to online rationing, which is that permutation $\Lambda$ can be random.
It is well-known in online contention resolution schemes that averaging over a random order can improve guarantees; however, this observation was perhaps omitted in online rationing because for ex-post fairness objectives, a random order does not help.
We are particularly interested in the random permutation where $\Lambda$ is equally likely to be the order $1,\ldots,n$ or its reverse $n,\ldots,1$, recently considered in \citet{arsenis2021constrained}.
One motivation for this permutation is  online rationing at food banks, where a mobile pantry\footnote{See \citet{lien2014sequential,sinclair2023sequential,banerjee2023online} for more background on this application.
All of these papers assume independent demands like we do, as daily shocks in food demand tend to be independent across locations.} drives along cities to ration food, and it does not increase\footnote{This would not be the case if the cities were visited in a uniformly random order, which is the more common model of random permutations in the CRS literature \citep{adamczyk2018random,lee2018optimal}.} driving distance to visit the cities in reverse order on half of the days.
We now derive CRS's that are specialized for this forward-backward random permutation,
even beating the guarantee for single-unit prophet inequality from \citet{arsenis2021constrained}.

\subsection{Contention Resolution Preliminaries} \label{sec:crsPrelim}
For $n \in \mb{N}$, we refer to $[n]$ as a collection of \textit{elements}.
Let $\fp$ be the \textit{forward permutation}, i.e., $\fp(i) = i$ for all $i \in [n]$, and $\bp$ be
the \textit{backward permutation}, i.e., $\bp(i) = n - (i-1)$ for each $i \in [n]$. 
For $\sigma \in \{\fp,\bp\}$ and distinct $j, i \in [n]$, we denote $j <_{\sigma} i$, provided $\sigma(j) < \sigma(i)$.

An input to the \textit{knapsack forward-backward contention resolution scheme (FB-CRS) problem}
is specified by $(n, (F_i)_{i=1}^n)$, where each $F_i$ is distribution on $[0,1] \cup \{\infty\}$. (Here $\infty$ is a special symbol,
and we adopt the standard algebraic conventions involving it as an element of the extended reals.) We assume that
each $i \in [n]$ independently draws a random \textit{size} $S_i \sim F_i$.
If $S_i < \infty$, then we refer to $i$ as \textit{active}. Otherwise,
we refer to $i$ as \textit{inactive}.
Note that using $\infty$ to distinguish between active/inactive elements is non-standard in the contention resolution literature; however it is convenient for our purposes, due to our fairness applications. In particular, in the reductions we present in \Cref{sec:reduction}, we shall think of $S_i = \infty$ as indicating that an agent $i$ has demand too high to be worth servicing.
It is also important for our fairness reduction to allow the active elements to have random sizes.

Let us assume that $\rp$ is an independently drawn random permutation that is supported uniformly on $\{\fp, \bp\}$.
A \textit{knapsack forward-backward contention resolution scheme (FB-CRS)} is given $(n, (F_i)_{i=1}^n)$ as its input, and is also revealed the instantiation
of $\rp$. It is then sequentially revealed the random sizes of the elements in the \textit{increasing} order specified by  $\rp$.
That is, in \textit{time} step $t\in[n]$, if $\rp(i) = t$ for $i \in [n]$, then
it learns the instantiation of $S_{i}$. At this point, it makes an \textit{irrevocable decision} on whether to \textit{accept} $i$. Its output is
a subset of accepted elements $I \subseteq [n]$ for which $\sum_{i \in I} S_i \le 1$. Note that by definition, a knapsack FB-CRS can never accept an inactive element.

Given $\alpha \in [0,1]$, we say that a knapsack FB-CRS is $\alpha$-\textit{selectable} on $(n, (F_i)_{i=1}^n)$, or that $\alpha$ is the \textit{selection guarantee} of the FB-CRS on $(n, (F_i)_{i=1}^n)$, provided for all $i \in [n]$
and $s \in [0,1]$, 
\begin{equation} \label{eqn:selection_guarantee}
    \Pr[i \in I \mid S_i = s] \ge \alpha.
\end{equation}
Here \eqref{eqn:selection_guarantee} is taken over the randomness in $(S_i)_{i=1}^n$, $\Lambda$, as well as any randomized decisions made by the FB-CRS. 
For knapsack constraints, we are interested in inputs with $\sum_{i=1}^n \mb{E}[S_i \cdot \bI(S_i < \infty)] \le 1$. If
for a fixed $\alpha \in [0,1]$, a knapsack FB-CRS satisfies \eqref{eqn:selection_guarantee} for \textit{all} such inputs, then
we refer to it as $\alpha$-selectable.


An important special case is the forward-backward contention resolution problem for rank-1 matroids, which we hereafter refer to as \textit{single-unit} CRS.
An input for this problem is specified by $(n, \bm{x})$, where $\bm{x}=(x_i)_{i=1}^n$ is a collection of \textit{probabilities} (i.e., $0\le x_i \le 1$ for all $i \in [n]$). In this case, $S_i \in \{1, \infty\}$ and $\Pr[S_i =1] = x_i$ for each $i \in [n]$. Clearly at most one active element can be accepted by the FB-CRS, so we refer to it as a \textit{single-unit} FB-CRS. The same definition \eqref{eqn:selection_guarantee} applies to a single-unit FB-CRS; however we are also interested in deriving results for inputs with $\sum_{i=1}^n \mb{E}[S_i \cdot \bI(S_i < \infty)] =\sum_{i=1}^n x_i >1$. For a fixed $\sumx \ge 0$ and $\alpha \in [0,1]$, we refer to a single-unit FB-CRS as $(\alpha, \sumx)$-selectable, provided it satisfies \eqref{eqn:selection_guarantee} for \textit{all} inputs $(n,\bm{x})$ with $\sum_{i=1}^n x_i \le \sumx$.

\subsection{Results for Forward-backward Contention Resolution} \label{sec:results}

\begin{theorem} \label{thm:single_element_positive}
There exists a single-unit FB-CRS which is $\left(\frac{\exp(\sumx/2)}{1 + \exp(\sumx/2) \sumx}, \sumx \right)$-selectable for all $\sumx \ge 0$.
In particular, if $\sumx = 1$, then the selection guarantee is at least $1/(1+e^{-1/2}) > 0.622$.
\end{theorem}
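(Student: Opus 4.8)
The plan is to exhibit an explicit single‑unit FB‑CRS driven by one ``survival'' function $H$, engineered so that the probability of not having accepted anything yet, viewed as a function of the total mass of the elements already processed, equals $H$ of that mass. Set $\alpha:=1/(\sumx+e^{-\sumx/2})$ (equivalently $\alpha=\exp(\sumx/2)/(1+\sumx\exp(\sumx/2))$) and $\theta:=\sumx/2$; define $\beta\colon[0,\sumx]\to\R_{\ge0}$ by $\beta(r):=\alpha(2-e^{r-\theta})$ on $[0,\theta]$ and $\beta(r):=\alpha e^{\theta-r}$ on $[\theta,\sumx]$, and let $H(r):=1-\int_0^r\beta(t)\,dt$. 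The elementary facts I would verify first are: (a) $\beta\ge0$ is continuous and non‑increasing, hence $H$ is decreasing with $-H'=\beta$ and $H>0$ (indeed $H(\sumx)=1-\alpha\sumx=\alpha e^{-\sumx/2}$); (b) $H(r)\ge\beta(r)$ for all $r\in[0,\sumx]$, which after simplification reduces to $1-e^{-t}\le t$; and (c) $\beta(r)+\beta(\sumx-r)=2\alpha$ for all $r\in[0,\sumx]$, which, since $\beta$ is non‑increasing, upgrades to $\beta(r)+\beta(\sumx'-r)\ge2\alpha$ whenever $0\le\sumx'\le\sumx$ and $r\in[0,\sumx']$.

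The scheme: given $(n,\bm{x})$ with $\sumx':=\sum_ix_i\le\sumx$ and the instantiation of $\rp$, process the elements in increasing $\rp$‑order; reject any element that is inactive or that arrives after something has been accepted; otherwise, letting $m_i$ denote the total $x$‑mass of the elements processed strictly before $i$ (so $m_i=\sum_{j<i}x_j$ when $\rp=\fp$ and $m_i=\sum_{j>i}x_j$ when $\rp=\bp$), accept $i$ with probability $(H(m_i)-H(m_i+x_i))/(x_iH(m_i))$, interpreted as $0$ when $x_i=0$ (where the guarantee for $i$ is vacuous). This lies in $[0,1]$: it is nonnegative since $H$ is positive and decreasing, and at most $1$ because $H(m_i)-H(m_i+x_i)=\int_{m_i}^{m_i+x_i}\beta\le x_i\beta(m_i)\le x_iH(m_i)$ by monotonicity of $\beta$ and (b). Note $m_i+x_i\le\sumx'\le\sumx$, so all arguments of $H$ stay in its domain.

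For the analysis, condition on $\rp=\fp$ and write $p_i:=\sum_{j<i}x_j$. A one‑line induction over the processing order shows $\Pr[\text{nothing accepted yet when }i\text{ is reached}\mid\rp=\fp]=H(p_i)$: it holds at the first element ($H(0)=1$), and at each step, conditional on reaching an element with nothing accepted, the probability of still having accepted nothing is multiplied by $1-x_i\cdot(H(p_i)-H(p_i+x_i))/(x_iH(p_i))=H(p_i+x_i)/H(p_i)$, and $p_i+x_i$ is the mass processed before the next element. Since this event depends only on the sizes of earlier elements and on internal coins, it is independent of $S_i$; hence $\Pr[i\in I\mid S_i=1,\rp=\fp]=H(p_i)\cdot(H(p_i)-H(p_i+x_i))/(x_iH(p_i))=\frac1{x_i}\int_{p_i}^{p_i+x_i}\beta$. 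Symmetrically, with $s_i:=\sum_{j>i}x_j$, $\Pr[i\in I\mid S_i=1,\rp=\bp]=\frac1{x_i}\int_{s_i}^{s_i+x_i}\beta$. Because $p_i+x_i+s_i=\sumx'$, the substitution $t=\sumx'-r$ gives $\int_{s_i}^{s_i+x_i}\beta(t)\,dt=\int_{p_i}^{p_i+x_i}\beta(\sumx'-r)\,dr$, so
\[
\Pr[i\in I\mid S_i=1]=\frac1{2x_i}\int_{p_i}^{p_i+x_i}\bigl(\beta(r)+\beta(\sumx'-r)\bigr)\,dr\ge\frac1{2x_i}\int_{p_i}^{p_i+x_i}2\alpha\,dr=\alpha
\]
by (c). Since $S_i=1$ is the only value of positive probability for single‑unit inputs, this is exactly \eqref{eqn:selection_guarantee}, so the scheme is $(\alpha,\sumx)$‑selectable, and $\sumx=1$ gives $\alpha=1/(1+e^{-1/2})>0.622$.

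The only real obstacle is discovering $H$. One must first realize that forcing the ``nothing accepted yet'' probability to equal a prescribed $H(\text{mass processed})$ is the right template, and then choose $H$ so that $\beta=-H'$ is simultaneously (i) dominated by $H$, which is precisely what keeps acceptance probabilities $\le1$ and is the place where a single heavy element would otherwise break a naive ``accept with probability $\phi(\text{mass})$'' rule, and (ii) anti‑symmetric about $\alpha$ in the sense $\beta(r)+\beta(\sumx-r)\equiv2\alpha$, which makes each element's forward and backward acceptance probabilities (namely the averages of $\beta$ over the windows $[p_i,p_i+x_i]$ and $[s_i,s_i+x_i]$, which lie symmetrically about $\sumx'/2$) combine to exactly $2\alpha$. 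I would pin down $H$ by solving the fluid relaxation (all $x_i\to0$): maximize $\min_r\tfrac12(\beta(r)+\beta(\sumx-r))$ over densities $\beta\ge0$ with $\beta(r)\le1-\int_0^r\beta$, whose optimizer is precisely the two‑piece $\beta$ above with optimal value $\alpha=1/(\sumx+e^{-\sumx/2})$; the discrete argument then shows this fluid bound is attained on every input (with equality on every element when $\sumx'=\sumx$, consistent with the claimed tightness).
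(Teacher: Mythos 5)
Your proof is correct, and after unwinding notation it is essentially the paper's construction: your $\beta$ is identical to the paper's $\phi$ in \eqref{eqn:primal_function_definition}, your $\alpha$ is the paper's $\alpha_0$, and your acceptance probability $(H(m_i)-H(m_i+x_i))/(x_iH(m_i))$ equals the Bernoulli parameter $c_\sigma(i)/(1-\sum_{j<_\sigma i}x_jc_\sigma(j))$ used in Algorithm~\ref{alg:rank_1_CRS} once one substitutes $c_\sigma(i)=\frac{1}{x_i}\int\phi$ from \eqref{eqn:feasible_solution_definition} and $1-\sum_{j<_\sigma i}x_jc_\sigma(j)=H(m_i)$ via \eqref{eqn:upper_bound_sum_by_integral}. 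The only substantive presentational difference is that you prove selectability directly by tracking the survival probability $H$, while the paper first establishes the LP characterization \eqref{LP:instance_opt} (Lemma~\ref{lem:instance_optimal}) and then exhibits a feasible solution via $\phi$ (Lemma~\ref{lem:phi_feasible}); the paper's LP framing is not wasted, since weak duality on that same LP drives the matching hardness result of Theorem~\ref{thm:single_element_hardness}. A small plus of your write-up: you explicitly handle inputs with $\sumx'=\sum_i x_i<\sumx$ via the observation that $\beta$ non-increasing gives $\beta(r)+\beta(\sumx'-r)\ge 2\alpha$, whereas the paper asserts the restriction to $\sum_i x_i=\sumx$ is without loss but does not spell this out.
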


\begin{theorem} \label{thm:single_element_hardness}
Fix $\sumx \ge 0$. Then, no single-unit FB-CRS is more than $\left(\frac{\exp(\sumx/2)}{1 + \exp(\sumx/2) \sumx}, \sumx \right)$-selectable.
\end{theorem}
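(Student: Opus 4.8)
The plan is to construct, for any $\sumx \ge 0$, a family of hard instances on which no single-unit FB-CRS can beat the claimed bound, and then take a limit. The natural candidate is the "uniform infinitesimal" instance: fix $n$ large and set $x_i = \sumx/n$ for all $i \in [n]$, so that $\sum_i x_i = \sumx$ exactly. As $n \to \infty$ each element is active with vanishing probability, the number of active elements concentrates around a Poisson-like count with mean $\sumx$, and any FB-CRS must essentially commit to a policy that depends only on which ``side'' (forward or backward) it has seen so far and how many active elements have already appeared. I would analyze the selection probability of a ``typical'' element --- say one near position $\sumx/2$ in expectation of arrivals --- and show the adversary can make the worst-case element's guarantee approach $\frac{\exp(\sumx/2)}{1 + \exp(\sumx/2)\sumx}$ from above being impossible.

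The key steps, in order: (1) Reduce to the symmetric instance $x_i = \sumx/n$ and argue by symmetry that WLOG the FB-CRS is symmetric under swapping $\fp \leftrightarrow \bp$ together with reversing element labels, so it suffices to bound the selection probability of the ``middle'' element. (2) Pass to a continuous/Poisson limit: active elements arrive as a Poisson process of rate $\sumx$ on $[0,1]$ (in the $\rp$-order), and the FB-CRS, upon seeing the instantiation $\rp$ and an active element at ``time'' $\tau \in [0,1]$, may accept it; once it accepts, it is done. Since a middle element sees mean $\tau \sumx /1$... more precisely, condition on a tagged active element being present and located at a uniform position; the elements before it (in $\rp$-order) form a Poisson process, and by the forward-backward symmetry the tagged element is equally likely to be the ``front'' process seeing rate-$\sumx$ arrivals on a length-$U$ prefix or the ``back'' process, where $U$ is uniform. (3) Set up the optimization: the FB-CRS is characterized by an acceptance rule; the probability the tagged element is accepted, conditioned on its arrival position, is the probability the scheme has not yet exhausted its budget and chooses to take it. Write the total selection probability as an integral over arrival positions and optimize over acceptance policies, showing the optimum is exactly $\frac{\exp(\sumx/2)}{1+\exp(\sumx/2)\sumx}$ via a balancing/indifference argument --- the worst-case element should be served with equal probability regardless of position, which forces a specific ``acceptance threshold that grows with elapsed budget'' and yields the $\exp(\sumx/2)$ factor from integrating the survival probability of the budget.

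The main obstacle I expect is step (3): making the continuous relaxation rigorous and showing it is genuinely a \emph{lower} bound on what the adversary can force (i.e., that no clever FB-CRS in the finite model does better than the continuous optimum). One must be careful that (a) the FB-CRS can randomize and can use the full history of sizes, not just counts --- but with sizes in $\{1,\infty\}$ the only information is the active/inactive pattern, so history collapses to a count and an elapsed-time coordinate; (b) the budget constraint $\sum_{i \in I} S_i \le 1$ with unit sizes means at most one active element is accepted, so ``exhausting the budget'' just means ``already accepted someone,'' which simplifies things considerably; (c) the $\liminf$ as $n \to \infty$ of the finite-$n$ optimal guarantee must be shown to equal the continuous value, which requires a concentration argument (Poisson approximation, e.g.\ Le Cam's inequality) plus an interchange-of-limit justification. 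An alternative, possibly cleaner route that avoids the limiting argument: directly exhibit a finite instance and an adversarial argument using a potential/LP-duality certificate --- write the FB-CRS's selection guarantee as the value of a linear program over acceptance probabilities indexed by (side, count, position) states, and exhibit a dual solution of value $\frac{\exp(\sumx/2)}{1+\exp(\sumx/2)\sumx}$. I would try the LP-duality route first if the Poisson limit's interchange of limits proves technical, since matching the lower bound of Theorem~\ref{thm:single_element_positive} strongly suggests the extremal policy there is tight against a clean dual certificate.
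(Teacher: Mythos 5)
Your choice of worst-case instance (uniform $x_i = \sumx/n$) matches the paper exactly, and your fallback plan --- pinning down the FB-CRS's achievable guarantees as the value of a linear program and exhibiting a dual certificate --- is in fact the route the paper takes. The paper first proves (its Lemma 3.2) that the instance-optimal FB-CRS guarantee on any input $(n,\bm{x})$ equals $\LPOPT(n,\bm{x})$ for a small LP with variables $c_\sigma(i)=\Pr[A_i\mid S_i=1,\Lambda=\sigma]$ and constraints $c_\sigma(i)\le 1-\sum_{j<_\sigma i}x_j c_\sigma(j)$, then writes the dual and hands down an explicit feasible dual solution of value $\alpha_0+\tfrac{\sumx+2}{N}$, from which the theorem follows by letting $N\to\infty$. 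So your alternative route is the right one; the comparison question is whether the gaps in carrying it out are routine or substantive.

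They are substantive in three places. First, your proposed LP is over "(side, count, position) states," which is larger than needed; the reduction to the paper's LP over just (side, position) is where the real work is, and it requires an argument (an averaging over internal randomness and sample paths) showing that the acceptance probability conditioned on being active can only depend on $(\sigma,i)$ at the level of feasibility constraints --- you allude to "history collapses" but do not prove it. Second, you do not exhibit the dual certificate; the paper's dual must be constructed quite carefully, and in particular the natural continuous limit that your indifference heuristic would produce is \emph{not} a valid finite-$n$ dual solution: the paper's certificate has an $\Theta(N)$ "spike" at the middle element $n+1$, a discontinuity that the continuous intuition misses entirely and that is essential for feasibility. Your symmetrization idea in step (1) does not supply this. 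Third, on your primary (Poisson-limit) route: even granting the reduction to the LP, you would be showing that the LP value converges to $\alpha_0$ along the family, but to upper-bound $\LPOPT$ for finite $N$ you still need the monotonicity/splitting argument (cf.\ the paper's Appendix B.5) or an explicit finite-$N$ dual, and the interchange of limits you flag is indeed where this route stalls. Your instincts are sound --- you found the right instance, correctly observed that unit sizes make the constraint "at most one accept," and correctly guessed that LP duality against the tight primal would close the argument --- but the proposal as written leaves the certificate and the reduction-to-LP lemma unproven, and both require ideas not present in the sketch.
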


For single-unit adversarial and random-order CRS's, the tight selection guarantees are respectively 
$1/2$, due to \citet{alaei2014bayesian}, and $1-1/e$, due to \citet{lee2018optimal}.
Our tight guarantee of $1/(1+e^{-1/2})$ for FB-CRS is sandwiched strictly in-between these values.

CRS guarantees apply directly to the prophet inequality setting, in which agents have valuations drawn from known independent distributions, and the objective is to maximize the expected valuation of the accepted agent.  By a standard reduction \citep{feldman2021online}, our FB-CRS implies an online algorithm that accepts an agent with valuation at least $1/(1+e^{-1/2})>0.622$ times the maximum valuation in expectation, improving upon the two-order prophet inequality guarantee of $(\sqrt{5}-1)/2\approx 0.618$ from \citet{arsenis2021constrained}.  Interestingly, their guarantee is achieved using a static threshold, and tight\footnote{The hardness result of \citet{arsenis2021constrained} in fact is stronger, and applies to static threshold free-order prophet inequalities whose random processing order has support size at most $O(\log n)$, where $n$ is the number of elements.} for this class of policies.
This shows that in this two-order setting, the tight prophet inequality for static thresholds ($(\sqrt{5}-1)/2$) differs from the tight CRS guarantee ($1/(1+e^{-1/2})$), whereas in the fixed-order and random-order settings, the two tight guarantees coincide at 1/2 and $1-1/e$ respectively \citep{samuel1984comparison,alaei2014bayesian,Ehsani2017}.

\begin{theorem} \label{thm:knapsack_postive}
There exists a knapsack FB-CRS which is $1/3$-selectable.

\end{theorem}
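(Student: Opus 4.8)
The plan is to classify elements by their realized size, handle each size class with a mechanism suited to its ``rank'' (how many elements of that class can simultaneously fit in the knapsack), and then combine these mechanisms so that the selection guarantee $1/3$ holds uniformly in the conditioning size $s\in[0,1]$.

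The two facts that do the heavy lifting are the following. (i) At most one element of size in $(1/2,1]$ can ever be accepted, and if $p_i:=\Pr[1/2<S_i\le 1]$ then $\tfrac12\sum_i p_i\le\sum_i\bE[S_i\bI(S_i<\infty)]\le 1$, so $\sum_i p_i\le 2$. Treating ``$i$ has size in $(1/2,1]$'' as the event that $i$ is present, \Cref{thm:single_element_positive} with $\sumx=2$ yields a single-unit FB-CRS that, conditioned on any fixed realization $S_i=s\in(1/2,1]$, accepts $i$ with probability at least $\frac{e}{1+2e}>0.42$, \emph{provided the rest of the knapsack budget is free for it}. (ii) For any fixed $i$, letting $A_i^\fp$ and $A_i^\bp$ be the expected total size of the elements arriving before $i$ under $\fp$ and under $\bp$, we have $A_i^\fp+A_i^\bp\le 1$, so the expected total size arriving before $i$, averaged over $\rp$, is at most $1/2$. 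Consequently a greedy forward-backward packing---accept an active element iff it fits in the remaining budget---accepts a size-$s$ element with probability at least $1-\frac{1/2}{1-s}$ by Markov's inequality, and this is $\ge 1/3$ as soon as $s\le 1/4$.

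So elements of size $\le 1/4$ are handled by the greedy packing and elements of size in $(1/2,1]$ by the single-unit FB-CRS of \Cref{thm:single_element_positive}. The delicate regime is the middle band, size in $(1/4,1/2]$: here the greedy packing provably fails (a couple of such elements committed ahead of $i$ can exhaust its budget, as one checks by a direct adversarial example), yet $i$ is not low-rank enough to hand to a single instance of \Cref{thm:single_element_positive}. I would handle this band by using that at most three of its elements can simultaneously fit, and running a small constant number of single-unit FB-CRS instances---either in parallel on disjoint reserved sub-budgets or iteratively on nested residuals---so that an element of the band is selected by at least one instance with total probability exceeding $1/3$.

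The real work is the combination. A blunt randomization among the mechanisms loses a factor roughly equal to the number of branches and cannot reach $1/3$; instead one must partition (or probabilistically reserve) the single budget so that the mechanisms coexist in one pass, and then verify that the budget eaten by the small/medium packing never drops the large-element mechanism below $1/3$, and vice versa. This is exactly where the forward-backward structure is quantitatively essential: the tight single-order guarantee of $1/(3+e^{-2})$ from \citet{jiang2022tight} loses an $e^{-2}$ tail term that tracks how much the smaller elements crowd out a large element, and the inequality $A_i^\fp+A_i^\bp\le 1$ halves the relevant load in the forward-backward average, which is precisely what is needed to absorb the $e^{-2}$ and land on $1/3$. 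I expect the main obstacle to be this shared-budget bookkeeping across the size bands, together with the analysis of the middle band; the reductions to \Cref{thm:single_element_positive} and the Markov estimates themselves are routine.
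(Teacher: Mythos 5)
Your proposal has a genuine gap: the combination across size bands is left undone, and you yourself flag it as ``the real work.'' That missing step is not bookkeeping---it is the entire difficulty. Your small-element bound ($1-\tfrac{1/2}{1-s}\ge 1/3$ when $s\le 1/4$, via Markov on the averaged prefix load) assumes the greedy runs alone on the whole budget; your large-element bound (applying \Cref{thm:single_element_positive} with $\sumx=2$ to get $\tfrac{e}{1+2e}>0.42$) explicitly assumes ``the rest of the knapsack budget is free for it''; and your middle band is a sketch with no numbers. Once the three mechanisms must coexist on one knapsack those in-isolation bounds no longer apply, and you supply no mechanism that recovers them. The paper addresses exactly this in \Cref{sec:technical_overview}: even a two-class low/high split, done optimally with the tight single-unit bound of \Cref{thm:single_element_positive}, tops out at $1/(2(1+e^{-1/2}))\approx 0.311$---\emph{below} the single-order benchmark $1/(3+e^{-2})\approx 0.319$, let alone $1/3$. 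So the evidence points against any size-partitioned approach reaching $1/3$.

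The paper's proof does not split by size at all. It defines \emph{feasible probabilities} $(c_{\fp}(i), c_{\bp}(i))_{i=1}^n$ (\Cref{def:knapsack_feasible}) via constraints that bound cumulative crowding for each single order $\sigma$; gives a unified FB-CRS (\Cref{alg:knapsack_CRS}) which at each arrival prioritizes acceptance on sample paths where positive mass was already accepted, falling back to zero-mass sample paths only as needed to hit the target $c_\sigma(i)$; and proves by an inductive anti-concentration argument (\eqref{eqn:inductive_invariant}--\eqref{eqn:inductive_rate}, in the spirit of \citet{jiang2022tight}) that any feasible $(c_{\fp}(i), c_{\bp}(i))$ can be attained exactly. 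The forward-backward averaging is then exploited purely in choosing the feasible solution: the linear function $\phi(z)=4/9-2z/9$ gives feasible probabilities whose forward-backward average is identically $1/3$. Your instinct that the two-order symmetry (your $A_i^{\fp}+A_i^{\bp}\le 1$) is what makes $1/3$ reachable is correct, but the paper cashes it in by choosing a single decreasing $\phi$ with $(\phi(z)+\phi(1-z))/2=1/3$, not by Markov on an averaged load.
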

As previously mentioned, $1/3$ improves on the $1/(3 + e^{-2})$ selection guarantee which was proven to be tight for adversarial arrivals by \citet{jiang2022tight}. While the selection guarantee of \Cref{thm:knapsack_postive} is derived in the forward-backward arrival model, it is actually the best known bound even for offline\footnote{Offline CRS's learn the instantiations of $(S_i)_{i=1}^n$ all at once, and were the original model of contention resolution introduced by \citet{chekuri2014submodular}.} CRS. It appears difficult to leverage offline selection or random-order arrivals under knapsack constraints, but we manage to leverage the forward-backward random order to improve upon the result of \citet{jiang2022tight} (see \Cref{sec:technical_overview} for more details), which is why our result appears to be state-of-the-art even for these "easier" arrival models.


\begin{theorem} \label{thm:knapsack_hardness}
No knapsack FB-CRS is more than $\frac{1}{2 + e^{-1}}$-selectable.

\end{theorem}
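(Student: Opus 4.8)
The plan is to deduce \Cref{thm:knapsack_hardness} as an essentially immediate corollary of the single-unit hardness result \Cref{thm:single_element_hardness}, by embedding single-unit instances as knapsack instances whose knapsack constraint degenerates to a rank-$1$ constraint. The value $1/(2+e^{-1})$ is exactly the single-unit selection guarantee $g(\rho):=\exp(\rho/2)/(1+\rho\exp(\rho/2))=1/(\rho+e^{-\rho/2})$ evaluated at $\rho=2$, and $\rho=2$ is precisely the largest total probability $\sum_i x_i$ that is consistent with a valid knapsack input when every active element has size just above $\tfrac12$.

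First I would fix a small $\delta>0$, set $\rho_\delta:=2/(1+2\delta)<2$, and given an arbitrary knapsack FB-CRS $\mathcal A$, restrict attention to knapsack inputs in which every active element $i$ has the deterministic size $\tfrac12+\delta$ (and is $\infty$ otherwise); such an input is built from a vector $\bm x=(x_i)$ with $\Pr[S_i=\tfrac12+\delta]=x_i$, and is valid exactly when $\sum_i x_i(\tfrac12+\delta)\le1$, i.e.\ when $\sum_i x_i\le\rho_\delta$. Since $2(\tfrac12+\delta)=1+2\delta>1$, no two active elements fit together, so on these inputs $\mathcal A$ is forced to output $I$ with $|I|\le1$ and therefore acts exactly as a single-unit FB-CRS on $(n,\bm x)$, with $\Pr[i\in I\mid S_i=\tfrac12+\delta]=\Pr[i\in I\mid i\text{ active}]$. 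Applying \Cref{thm:single_element_hardness} with $\rho=\rho_\delta$ to this induced single-unit FB-CRS: for every $\alpha>g(\rho_\delta)$ there is a vector $\bm x$ with $\sum_i x_i\le\rho_\delta$ on which some element is selected with conditional probability below $\alpha$; translating that $\bm x$ back into the (valid) knapsack input with active sizes $\tfrac12+\delta$ witnesses that $\mathcal A$ is not $\alpha$-selectable.

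To finish, let $\delta\downarrow0$: since $h(\rho):=\rho+e^{-\rho/2}$ has $h'(\rho)=1-\tfrac12 e^{-\rho/2}\ge\tfrac12>0$, the function $g=1/h$ is strictly decreasing and continuous, so $g(\rho_\delta)\downarrow g(2)=1/(2+e^{-1})$. Hence for any $\alpha>1/(2+e^{-1})$ one may choose $\delta>0$ with $g(\rho_\delta)<\alpha$, and the previous step produces a valid knapsack input certifying that $\mathcal A$ is not $\alpha$-selectable; as $\mathcal A$ was arbitrary, no knapsack FB-CRS is more than $1/(2+e^{-1})$-selectable. I do not anticipate a real obstacle: all the difficulty is already contained in \Cref{thm:single_element_hardness}, and what remains is bookkeeping, namely checking that fixing active sizes to $\tfrac12+\delta$ collapses the knapsack constraint to ``accept at most one'' while keeping inputs valid exactly up to $\sum_i x_i\le\rho_\delta$, verifying the limit $g(\rho_\delta)\to1/(2+e^{-1})$, and phrasing the conclusion as ``for every FB-CRS and every $\alpha>1/(2+e^{-1})$ there is a bad input'' rather than exhibiting one universal hard instance.
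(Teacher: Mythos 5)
Your proof is correct and takes essentially the same route as the paper's: use knapsack inputs whose active sizes are deterministically just above $\tfrac12$, so that the knapsack constraint degenerates to ``accept at most one,'' then invoke the single-unit hardness for $\rho\to2$. The only cosmetic difference is that the paper picks sizes $\tfrac12+\tfrac1n$ with $\rho_n=2n/(n+2)$ and invokes the explicit finite-$N$ bound from \Cref{thm:single_element_hardness_explicit} via \Cref{lem:instance_optimal}, whereas you parametrize by $\delta\downarrow 0$ with $\rho_\delta=2/(1+2\delta)$ and invoke \Cref{thm:single_element_hardness} as a black box together with continuity and monotonicity of $g(\rho)=1/(\rho+e^{-\rho/2})$; both finish the same way.
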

We prove \Cref{thm:knapsack_hardness} on the same knapsack input considered by \citet{jiang2022tight} to derive a $(1- e^{-2})/2$ hardness result
for offline CRS's. While we use the same input, we reduce the analysis to 
a related single-unit input. Our bound then follows by taking $\sumx =2$ in \Cref{thm:single_element_hardness}.


\subsection{Technical Overview} \label{sec:technical_overview}

\paragraph{Reduction.}
We discuss allocating to Type-III service functions using a single-unit CRS, which is the most challenging case in proving \Cref{thm:reduction} and is also the service function used in \citet{lien2014sequential,manshadi2021fair}.
Given a service target $\beta_i$, our convex region resembles an "ex-ante" relaxation that computes, separately for agent $i$, the most efficient way to achieve target $\beta_i$.  This entails serving agent $i$ whenever their demand $D_i$ does not exceed some threshold $d_i$ (possibly with randomized tiebreaking), and if so, serving them as much as possible, i.e.\ serving them $\min\{D_i,1\}$.
From this one can compute an ideal, minimal amount $x_i$ to allocate to $i$ in expectation, which we input to the single-unit CRS as their active probability.
More precisely, $d_i$ and $x_i$ satisfy
\begin{align} \label{eqn:implicitRels}
\bE\left[\frac{\min\{D_i,1\}}{D_i}\bI(D_i\le d_i)\right]=\beta_i; \qquad \bE[\min\{D_i,1\}\bI(D_i\le d_i)]=x_i.
\end{align}

Our rationing algorithm then allocates to every agent $i$ exactly $c_\sigma(i) x_i$ supply in expectation, where $c_\sigma(i)$ is the probability they are selected by the CRS conditional on arrival order $\sigma$.
It uses a simple threshold rule: agent $i$ is allocated $\min\{D_i,R_i,\tau_i\}$ whenever $D_i\le d_i$, where $R_i\in[0,1]$ is the remaining supply, and $\tau_i\in[0,1]$ is tuned so that
\begin{align} \label{eqn:reductionPreserveIntro}
\bE[\min\{D_i,R_i,\tau_i\}\bI(D_i\le d_i)]=c_\sigma(i)x_i.
\end{align}
We show that this can be inductively maintained as $R_i$ dwindles, by using concavity to argue that the worst-case distribution of $R_i$ is bimodal, supported on \{0,1\}.  In this case, feasibility of selection probabilities $(c_\sigma(i))_{i\in[n]}$ in the CRS (under order $\sigma$) coincides exactly with there existing $(\tau_i)_{i\in[n]}$ that preserve the expected allocations in~\eqref{eqn:reductionPreserveIntro}.

Finally, we need to show that~\eqref{eqn:reductionPreserveIntro} implies
\begin{align} \label{eqn:reductionGoalIntro}
\bE\left[\frac{\min\{D_i,R_i,\tau_i\}}{D_i}\bI(D_i\le d_i)\right]\ge c_\sigma(i)\beta_i,
\end{align}
where the relationship between $\beta_i$, $d_i$, and $x_i$ is implicit, as defined in~\eqref{eqn:implicitRels}.
If we assume that $D_i\le 1$, then a single application of the FKG inequality establishes~\eqref{eqn:reductionGoalIntro}.
The general case requires a non-trivial in-between step showing that truncating demands to 1 can only correlate terms in our favor, essentially still following the FKG inequality.


\paragraph{Single-unit FB-CRS}
For the purposes of this technical overview, let us assume that $(n, \bm{x})$ is a single-unit FB-CRS input
with $\sum_{i=1}^n x_i =1$ (i.e., we explain the proof with $\sumx=1$). In order to prove Theorems \ref{thm:single_element_positive} and \ref{thm:single_element_hardness},
our approach is to first characterize the instance-optimal selection guarantee an FB-CRS can attain on $(n, \bm{x})$.
This can be done through the following linear program (LP):
\begin{align*}
	&\text{maximize} &  \min_{1 \le i \le n} (c_{\fp}(i) + c_{\bp}(i))/2 \\
	&\text{subject to} & c_{\sigma}(i) \le 1 - \sum_{j <_{\sigma} i } x_j \cdot c_{\sigma}(j) && \forall i \in [n], \sigma \in \{\fp, \bp\}  \\
	&&c_{\sigma}(i) \ge 0  && \forall i \in [n], \sigma \in \{\fp,\bp\}
\end{align*}
in which variable $c_\sigma(i)$ represents the probability of accepting $i$ conditional on it being active, under permutation $\sigma$ (see \Cref{sec:positive_result_single_item} for further details).

However, the challenge is to identify the minimum of $\LPOPT(n,\bm{x})$ over all inputs $(n, \bm{x})$ with $\sum_{i=1}^n x_i =1$.
Following the precedent in previous papers such as \citep{alaei2014bayesian, lee2018optimal,jiang2022tight}, it is reasonable to "guess" that the worst-case instance occurs when $x_i = 1/n$ for each $i \in [n]$ and $n\to\infty$.
(While we show this implicitly in \Cref{sec:single_unit}, in \Cref{sec:splitting_argument} we provide a direct proof of this.)
Our proof strategy is to look at the structure of the optimal LP solution on this "worst-case" instance, to \textit{inform a general method of constructing a feasible solution} (that is not necessarily optimal) on any instance, whose objective can be analytically lower-bounded by $1/(1+e^{-1/2})$.

When $x_i = 1/n$ for each $i \in [n]$, there are two key simplifying assumptions that we can make involving an optimal LP solution. The first uses the uniformity of the $x_i$ values, and the second allows us to remove the minimum in the objective:
\begin{enumerate}
        \item $c_{\fp}(i) = c_{\bp}(n-(i -1))$ for each $i \in [n]$;
    \item $c_{\fp}(i) + c_{\bp}(i) = c_{\fp}(1) + c_{\bp}(1)$ for each $i \in [n]$.
\end{enumerate}
By restricting to such solutions, and taking $n \rightarrow \infty$, we can reformulate our LP in terms of the following optimization problem involving a continuous function $\phi: [0,1] \rightarrow [0,1]$:
\begin{align}\label{cOPT:instance_opt}
        \tag{cont-OPT}
	&\text{maximize} &  (\phi(0) + \phi(1))/2 \\
        &\text{subject to} &  \phi(z) + \phi(1-z) = \phi(0) + \phi(1) && \forall z \in [0,1] \label{eqn:cOPT_equality} \\
	&&  \phi(z) \le 1 - \int_{0}^{z} \phi(\tau) d\tau && \forall z \in [0,1] \label{eqn:cOPT_feasible_forward} \\
	&&\phi(z) \ge 0  && \forall z \in [0,1], \sigma \in \{\fp,\bp\} \nonumber
\end{align}
(Here we have applied $c_{\fp}(i) = \phi(i/n)$ and $c_{\bp}(i) = \phi(1 -(i-1)/n)$.)

Our goal is to solve~\ref{cOPT:instance_opt}.
Constraint \eqref{eqn:cOPT_feasible_forward} implies that any solution $\phi$ should be non-increasing on $[0,1]$.
To that end, one natural class of functions to try to optimize over is linear functions.
Doing so yields $\phi(z) =  2/\sqrt{5} - (1 - 1/\sqrt{5}) z$, in which case $\phi$
is a feasible solution to \eqref{cOPT:instance_opt} that satisfies $(\phi(0) + \phi(1))/2 = \frac{1}{2} (\sqrt{5} - 1) \approx 0.618$.
Coincidentally, this matches the golden ratio bound for two-order prophet inequalities from \citet{arsenis2021constrained}, that was obtained through a threshold analysis unrelated to CRS.

To go beyond their golden ratio bound, we guessed that a better solution to \ref{cOPT:instance_opt} should make \eqref{eqn:cOPT_feasible_forward} hold as equality for all $z \in [1/2,1]$. Combined with \eqref{eqn:cOPT_equality}, this allowed us to identify a collection of piece-wise defined functions which are feasible. By optimizing over all such functions, this led to following solution, which satisfies 
$(\phi(0) + \phi(1))/2 = 1/(1+e^{-1/2})\approx 0.622$:
\begin{equation} \label{eqn:primal_function_definition_overview}
\phi(z) := \begin{cases}
\frac{2 e^{1/2} - e^z}{1 + e^{1/2}}  & \text{if } z \le 1/2,\\
\frac{e^{1 - z}}{1 + e^{1/2}}  & \text{if } 1/2 < z \le 1.
\end{cases}
\end{equation}
In \Cref{sec:positive_result_single_item}, we formally show how to use \eqref{eqn:primal_function_definition_overview} to complete the proof of \Cref{thm:single_element_positive}.

To prove the negative result \Cref{thm:single_element_hardness}, for the family of inputs described by $x_i=1/n$ for all $i$, we show that $\LPOPT(n,\bm{x})$ is upper-bounded by $(1 +o(1)) /(1+e^{-1/2})$, 
where $o(1)$ tends to $0$ as $n \rightarrow \infty$.
We upper-bound $\LPOPT(n,\bm{x})$ using weak duality and the challenge lies in identifying a dual feasible solution for this family of inputs whose objective value can be analyzed to be $(1 + o(1)) /(1+e^{-1/2})$ as $n\to\infty$.
Our strategy is similar to the description above of how we lower-bounded $\LPOPT$, which is to study the continuous analogue as $n\to\infty$.
Interestingly, for the dual we must modify its solution, due to a discrepancy in the optimal solution for the continuous problem vs.\ any finite $n$.
In particular, for finite $n$, the solution has a "discontinuity" where it must take an enormous value at $n/2$ to ensure feasibility.
The details can be found in \Cref{sec:negative_result_single_item}.

\paragraph{Knapsack FB-CRS}

For the purpose of this overview, let us assume that our knapsack input $(n, (F_i))_{i=1}^n$ has \textit{deterministic sizes}.
That is, there exists probabilities $(x_i)_{i=1}^n$ and (non-negative) sizes $(s_i)_{i=1}^n$, such that $S_i \in \{s_i, \infty\}$,
and $\Pr[S_i = s_i] = x_i$ for each $i \in [n]$. More, assume that $\sum_{i=1}^n s_i x_i =1$.

When designing a knapsack FB-CRS, one natural approach is to first split the elements into \textit{low}
and \textit{high} elements, based on their associated size $s_i$. Concretely, let $L = \{i \in [n]: s_i \le 1/2\}$
and $H = \{i \in [n]: s_i > 1/2\}$. Moreover, assume that $\sum_{i \in L} s_i x_i = \sum_{i \in H} s_i x_i = 1/2$. Clearly an FB-CRS can select at most one element from $H$, and so since $\frac{\sum_{i \in H} x_i}{2} \le \sum_{i \in H} s_i x_i = 1/2$, we can immediately ensure a selection guarantee of $1/(1+e^{-1/2})\approx0.622$ on $H$ by applying \Cref{thm:single_element_positive} with $\sumx =1$. On the other hand, since the elements of $L$ have deterministic sizes at most $1/2$, it is possible to extend our approach to proving \Cref{thm:single_element_positive} to get a selection guarantee of $1/(1+e^{-1/2})$ on $L$. Unfortunately, we have to balance prioritizing $L$ or $H$, and so the best selection guarantee attainable
in this way is $1/(2(1+e^{-1/2})) \approx 0.311$. This is strictly worse than $1/(3 + e^{-2}) \approx 0.319$, the selection guarantee attained by \citet{jiang2022tight} for a single arrival order.

To beat $1/(3 + e^{-2})$, we instead use the invariant-based argument in \citet{jiang2022tight} to characterize
feasible probabilities $(c_{\fp}(i),c_{\bp}(i))_{i=1}^n$ (see \Cref{def:knapsack_feasible}), for which it is possible for an online algorithm to accept $i$ with probability $c_\sigma(i)$ conditional on $i$ being active and $\Lambda=\sigma$, for all $i\in[n]$ and $\sigma\in\{\fp,\bp\}$.
This would lead to a selection guarantee
of $\min_{i \in [n]} (c_{\fp}(i) + c_{\bp}(i))/2$, after averaging over $\rp$.
We repeat the proof strategy throughout the paper of using a continuous function $\phi$ to define these feasible selection probabilities.
Fortunately, in this case optimizing over linear functions $\phi$ suffices to get a clean bound of 1/3, that beats the single-order guarantee of $1/(3+e^{-2})$ from \citet{jiang2022tight}.

Interestingly,
our actual algorithm follows an arguably simpler heuristic than \citet{jiang2022tight}, which is also sufficient to recover their $1/(3 + e^{-2})$ result. After conditioning on $\rp =\sigma \in \{\fp,\bp\}$, the main observation is that when deciding whether to accept $i$, we should prioritize the acceptance of $i$ on feasible sample paths where at least one other (non-zero) element was previously accepted. In other words, we avoid accepting $i$ on sample paths where nothing else has been previously accepted, only doing so if necessary. Consequently, our algorithm ends up being stated differently than the algorithm from \citet{jiang2022tight} that sets a threshold on size, leading to an arguably simpler proof, and allowing us to establish the guarantee of 1/3 in our forward-backward setting.

See \Cref{sec:knapsackPositive} for details of this knapsack FB-CRS.  Our negative result for knapsack FB-CRS, in \Cref{sec:knapsackNegative}, recycles the result of \Cref{thm:single_element_hardness} for a general $\sumx$.

\subsection{Further Related Work} \label{sec:furtherRelated}

\paragraph{Online rationing.}
We refer to \citet{sinclair2023sequential} for a recent work describing the mobile food pantry application, which also discusses competing fairness objectives and how they can be simultaneously captured by Nash social welfare.
In food rationing applications, it is also natural if the initial supply can perish over time, as studied in \citet{banerjee2023online}.
These papers contain more extensive references to the online fair resource allocation literature.

In general, we have already mentioned the technical results most related to ours in \Cref{sec:rationingPrelim}, but should further mention the concurrent work \citet{sankararamanpromoting}.  They derive a competitive ratio of 1/2 for online rationing under ex-ante Type-III service, given a fixed arrival order.  Our work focuses on forward-backward arrival order and shows how to beat 1/2 under this assumption, while also applying to other types of service.  By contrast, their work also applies to ex-post fairness objectives, which we cannot handle.

\paragraph{Contention resolution.} Beginning with the works of \citet{chekuri2014submodular, GuptaN13, feldman2021online}, CRS's have found broad applications as a general purpose tool in online and stochastic optimization.
We refer to \citet{dughmi2019outer, Dughmi22} for a connection to the matroid secretary problem,
\citet{PatelW24,NaorSW25} for applications to stationary prophet inequalities
and prophet approximation, and \citet{fu2024} for an application to matroid prophet inequalities with limited sample access.

CRS's have been studied for a wide range of constraint systems.
In this paper we focus on single-unit and knapsack, but there is also a lot of work for $k$-unit selection \citep{alaei2014bayesian,jiang2022tight,dinev2024simple}, matroids \citep{feldman2021online,lee2018optimal,dughmi2019outer,fu2024}, and matchings \citep{ezra2022prophet,macrury2024random,pollner2022improved,fu2021random}.  How forward-backward CRS fits relative to adversarial-order and random-order CRS could be investigated for all of these constraint systems.

CRS's have been recently adapted to handle correlation in the elements' activeness \cite{qiu2022, gupta2024pairwise, dughmi2024limitations, ma_network_2024, bhawalkar2024mechanism}, expanding their applicability as a general purpose tool.

\paragraph{Prophet inequality variants.} There is a vast literature on prophet inequalities beyond adversarial arrivals.
Classical variants include random-order \citep{esfandiari-2017,azar2018,correa2019} in which the state-of-the-art 
is $0.688$ due to \citet{chen2025}.  An important special case is when valuations are drawn from identical distributions, in which case the tight guarantee is $\approx 0.745$ \citep{Correa2021PostedPM,Kertz1986StopRA}.
In the variant where the arrival order is chosen by the algorithm, there has been substantial recent progress 
(see \citet{Peng2022OrderSP, bubna2023, giambartolomei2024}).

We study a random arrival order beyond these basic models, inspired by \citet{arsenis2021constrained}.
We should note that there is also a line of work \citep{kesselheim2015secretary,taghi2022-arxiv} studying random arrival orders beyond the basic ones in settings with unknown distributions.

Fairness in optimal stopping in the prophet setting has also been recently considered in \citet{arsenis2022individual}, \citet{correa2021fairness}.  These papers respectively study individual and group-level fairness constraints that are unrelated to our fairness notions based on CRS.

\paragraph{Incentive-compatible rationing and supply chain management.}
Classical economics literature has considered the offline rationing of a single infinitely-divisible resource when agents have unknown single-peaked preferences.  A celebrated prior-free incentive-compatible mechanism was developed in \citet{sprumont1991division}, which has been generalized in \citet{barbera1997strategy}.
Meanwhile, it is well-known in classical supply chain literature that retailers may exaggerate demands under typical proportional allocation rules used by distributors in times of shortage \citep{lee1997bullwhip,cachon1999capacity}.
Without disincentive for overallocation, this is guaranteed to lead to cheap talk that can manifest in practice \citep{bray2019ration}, although there are solutions under repeated games \citep{balseiro2019multiagent}.

Our notions of Type-I, II, and III service are well-established in supply chain literature (see \citet{jiang2023achieving} and the references therein).  We should note that the original motivation in \citet{jiang2023achieving} and several related papers \citep{zhong2018resource,lyu2019capacity} is inventory pooling, i.e.\ the increased feasibility of service targets if supply is divided after seeing demand realizations instead of having dedicated stockpiles beforehand.

\section{Reducing from Rationing to CRS} \label{sec:reduction}

Our goal in this \namecref{sec:reduction} is to prove \Cref{thm:reduction}.
First we define the notion of quantiles.

\begin{definition} \label{def:quantile}
For a distribution $F$ over non-negative reals, define its inverse CDF over $q\in[0,1]$ to be $F^{-1}(q):=\inf\{d:q\le F(d)\}$.
Define each agent $i\in[n]$ to draw an independent quantile $Q_i$ uniformly from [0,1], and then have demand $D_i=F^{-1}_i(Q_i)$.
\end{definition}

\Cref{def:quantile} provides an equivalent method of generating demands.
If demands cannot be generated like this and one only observes $D_i$ (drawn from a known $F_i$) instead, then quantile $Q_i$ can be assigned as follows: if $F_i$ has no discrete mass on the realized value of $D_i$, then $Q_i:=F_i(D_i)$; otherwise, if $F_i$ has mass $\delta$ on $D_i$, then assign $Q_i$ uniformly at random from $(F_i(D_i)-\delta,F_i(D_i)]$.
The resulting distribution of quantiles is uniform modulo a set of measure 0.

Using the inverse CDF, we now define the convex region referenced in \Cref{thm:reduction}.

\begin{definition} \label{def:exanteFeas}
Given service types from \Cref{def:setup}, define the \textit{ex-ante feasible region} to be the collection of service targets $(\beta_i)_{i\in[n]}\in[0,1]^n$ for which there exist $(q_i)_{i\in[n]}\in[0,1]^n$ satisfying
\begin{align}
\sum_{i=1}^n \int_0^{q_i} \min\{F^{-1}_i(q),1\} dq &\le 1 \label{eqn:exanteFeasTotal}
\\ \int_0^{q_i} s_i\left(\min\{F^{-1}_i(q),1\},F^{-1}_i(q)\right) dq &=\beta_i &\forall i\in[n] \label{eqn:exanteFeasService}.
\end{align}
\end{definition}

\begin{remark}
Constraints~\eqref{eqn:exanteFeasTotal}--\eqref{eqn:exanteFeasService} can be simplified under the specific service types from \Cref{def:setup}, which also allows us to see that the ex-ante feasible region is \textit{convex}.
We first note that the left-hand side (LHS) of constraints~\eqref{eqn:exanteFeasTotal} are convex in $q_i$, because the derivative is $\min\{F^{-1}_i(q_i),1\}$ which is non-decreasing in $q_i$.
Therefore,~\eqref{eqn:exanteFeasTotal} induces a convex feasible region for $(q_i)_{i\in[n]}$.
We now show that~\eqref{eqn:exanteFeasService} also induces a convex feasible region for each $q_i$, which when taken in intersection with the convex region from~\eqref{eqn:exanteFeasTotal} would establish that the ex-ante feasible region is convex.
We separately consider each service type:
\begin{enumerate}
\item If $s_i$ is a Type-I service function, i.e.~$s_i(y,d)=\bI(y\ge d)$, then~\eqref{eqn:exanteFeasService} becomes
\begin{align} \label{eqn:typeI}
\beta_i=\int_0^{q_i}\bI(F^{-1}_i(q) \le 1)dq =\min\{q_i,F_i(1)\}
\end{align}
which is equivalent to the convex constraints $\beta_i\le q_i$ and $\beta_i \le F_i(1)$;
\item If $s_i$ is a Type-II service function, i.e.~$s_i(y,d)=\min\{y,d\}/\mu_i$, then~\eqref{eqn:exanteFeasService} becomes
\begin{align} \label{eqn:typeII}
\beta_i=\frac{1}{\mu_i}\int_0^{q_i} \min\{F^{-1}_i(q),1\} dq
\end{align}
which induces a convex region for $q_i$ because it consists of a single point (note that this also leads to the term $\int_0^{q_i} \min\{F^{-1}_i(q),1\} dq$ in~\eqref{eqn:exanteFeasTotal} simplifying to $ \beta_i \mu_i$);
\item If $s_i$ is a Type-III service function, i.e.~$s_i(y,d)=\min\{y,d\}/d$, then the LHS of~\eqref{eqn:exanteFeasService} becomes
\begin{align} \label{eqn:typeIII}
\int_0^{q_i} \min\{1,\frac{1}{F^{-1}_i(q)}\} dq
\end{align}
which is concave in $q_i$, inducing a convex region.
\end{enumerate}
We also note that for discrete distributions,~\eqref{eqn:exanteFeasTotal}--\eqref{eqn:exanteFeasService} simplify to linear constraints and the feasible region is a polytope.
\end{remark}

Intuitively, \eqref{eqn:exanteFeasService} is saying that if agent $i$ is granted the entire supply of 1 whenever their quantile lies below $q_i$, i.e.\ allocated $\min\{D_i,1\}$ when $Q_i\le q_i$, then the expected service provided would be
\begin{align} \label{eqn:exanteFeasService2}
\bE[s_i(\min\{D_i,1\},D_i)\bI(Q_i\le q_i)]=\int_0^{q_i} s_i\left(\min\{F^{-1}_i(q),1\},F^{-1}_i(q)\right)dq=\beta_i.
\end{align}
Meanwhile,~\eqref{eqn:exanteFeasTotal} is saying that the total expected supply allocated this way cannot exceed 1.

Given these interpretations, the following lemma is straight-forward to prove.
We do need the property of all service functions from \Cref{def:setup} that
the service provided relative to supply allocated is weakly decreasing in the demand,
i.e.\ it is most "supply-efficient" to allocate to low demands,
or equivalently low quantiles.
We note this is opposite in mechanism design and prophet inequalities, where one would like to allocate to high valuations (see \citet[Ch.~3]{hartline2013mechanism}).

\begin{lemma}[proof in \Cref{pf:ub}] \label{lem:ub}
Under the service types from \Cref{def:setup}, for any (online or offline) algorithm, the vector $(\bE[s_i(Y_i,D_i)])_{i\in[n]}$ lies in the ex-ante feasible region.
\end{lemma}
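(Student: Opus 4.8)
The plan is to show that for any allocation algorithm, the induced expected-service vector $(\bE[s_i(Y_i,D_i)])_{i\in[n]}$ arises as a feasible point of \Cref{def:exanteFeas}, by exhibiting a concrete choice of $(q_i)_{i\in[n]}$. Fix an algorithm and couple its randomness with the quantile representation of \Cref{def:quantile}, so $D_i=F_i^{-1}(Q_i)$ with $Q_i$ uniform on $[0,1]$. First I would observe that, since we always assume $Y_i\le D_i$ and there is a global cap of $1$ on total allocation, we may also assume $Y_i\le\min\{D_i,1\}$ without loss of generality, so $0\le Y_i\le\min\{D_i,1\}$ pointwise. The key structural fact to exploit is that for each of the three service types, the ratio $s_i(y,d)/y$ (service delivered per unit of supply spent) is nonincreasing in $d$ when $0\le y\le\min\{d,1\}$ — equivalently, it is weakly more supply-efficient to serve an agent at a low quantile than at a high quantile — which is exactly the monotonicity flagged in the paragraph before the lemma.

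The main step is a ``rearrangement toward low quantiles'' argument applied one agent at a time. For a fixed agent $i$, let $a_i:=\bE[Y_i]$ be the expected supply the algorithm spends on $i$. I want to choose $q_i$ so that serving $i$ the full amount $\min\{D_i,1\}$ exactly on the event $\{Q_i\le q_i\}$ spends the same expected supply $a_i$; that is, define $q_i$ implicitly by $\int_0^{q_i}\min\{F_i^{-1}(q),1\}\,dq=a_i$. Such a $q_i$ exists because the left-hand side is continuous and nondecreasing in $q_i$, ranging from $0$ up to $\bE[\min\{D_i,1\}]\ge\bE[Y_i]=a_i$ (this inequality again uses $Y_i\le\min\{D_i,1\}$); if $\min\{D_i,1\}$ has an atom one picks $q_i$ at the appropriate point of the (possibly flat) piece, or handles it via the atom-splitting convention from the text after \Cref{def:quantile}. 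With this $q_i$, constraint~\eqref{eqn:exanteFeasService} holds with its right-hand side equal to $\bE[s_i(\min\{D_i,1\},D_i)\bI(Q_i\le q_i)]$ by~\eqref{eqn:exanteFeasService2}, and constraint~\eqref{eqn:exanteFeasTotal} holds because $\sum_i a_i=\sum_i\bE[Y_i]=\bE[\sum_i Y_i]\le1$. So it remains to verify that this ``threshold'' allocation delivers at least as much expected service to $i$ as the algorithm does, i.e. $\bE[s_i(\min\{D_i,1\},D_i)\bI(Q_i\le q_i)]\ge\bE[s_i(Y_i,D_i)]$, which together with~\eqref{eqn:exanteFeasService2} gives $\beta_i:=\bE[s_i(Y_i,D_i)]$ satisfying all the constraints, as required.

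That inequality is where the supply-efficiency monotonicity does the work, and it is the step I expect to be the main (though not deep) obstacle. The idea: both sides spend the same total expected supply $a_i$ on agent $i$, but the threshold policy concentrates that supply on the lowest quantiles $\{Q_i\le q_i\}$ and serves the maximum feasible amount $\min\{D_i,1\}$ there, while the algorithm may spread supply onto higher quantiles and/or serve less than the max where it does spend. Since $s_i(\cdot,d)$ is concave and nondecreasing on $[0,\min\{d,1\}]$ with $s_i(0,d)=0$, for any fixed demand $d$ we have $s_i(y,d)\ge \frac{y}{\min\{d,1\}}\,s_i(\min\{d,1\},d)$ is the \emph{wrong} direction; instead one uses that $s_i(y,d)\le \frac{y}{\min\{d,1\}} s_i(\min\{d,1\},d)$ only bounds per-unit efficiency, so the precise argument is: writing $g_i(q):=s_i(\min\{F_i^{-1}(q),1\},F_i^{-1}(q))/\min\{F_i^{-1}(q),1\}$ for the per-unit-supply service at quantile $q$ (defined as a limit/by convention where $\min\{F_i^{-1}(q),1\}=0$), $g_i$ is nonincreasing in $q$ by the monotonicity fact; the algorithm's expected service is $\bE\big[g_i(Q_i)\cdot Y_i'\big]$ where $Y_i'\le Y_i$ accounts for concavity losses when $Y_i<\min\{D_i,1\}$, hence at most $\bE[g_i(Q_i)Y_i]$, and this is maximized over all nonnegative $Y_i$ with $\bE[Y_i]=a_i$ and $Y_i\le\min\{D_i,1\}$ by putting all the mass on the smallest quantiles, i.e. by the threshold policy — a one-dimensional ``bathtub'' rearrangement inequality. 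I would prove the concavity reduction $\bE[s_i(Y_i,D_i)]\le\bE[g_i(Q_i)Y_i]$ by Jensen/concavity pointwise in $(D_i$-conditional$)$ and then the rearrangement bound by a standard exchange argument (move supply from a higher quantile with smaller $g_i$ to a lower quantile with larger $g_i$, which is feasible as long as the lower quantile is not yet saturated at $\min\{D_i,1\}$). Handling atoms of $F_i$ and the Type-III convention $0/0=1$ are the only fiddly points, and they are dispatched exactly as in the remark following \Cref{def:quantile}.
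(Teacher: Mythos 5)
Your proof is essentially the same as the paper's: you exploit the same supply-efficiency monotonicity (the quantity $g_i(q)=s_i(\min\{F_i^{-1}(q),1\},F_i^{-1}(q))/\min\{F_i^{-1}(q),1\}$ is nonincreasing in $q$) and the same monotone-coupling/rearrangement argument, just with the two constraints of \Cref{def:exanteFeas} treated in the opposite order — you pin down $q_i$ via the supply integral ($\int_0^{q_i}\min\{F_i^{-1}(q),1\}\,dq=\bE[Y_i]$) and then verify service, whereas the paper pins down $q_i$ via the service equality and then verifies supply. Your $q_i$ is in fact the paper's auxiliary $q_i'$.

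Two small points to tidy up. First, constraint~\eqref{eqn:exanteFeasService} is an \emph{equality}, but with your choice of $q_i$ you only prove $\int_0^{q_i}s_i(\cdot)\,dq\ge\beta_i$; you need the extra (easy) step of shrinking $q_i$ to the smallest value achieving equality, observing that this only decreases the supply integral so~\eqref{eqn:exanteFeasTotal} is preserved. Second, the appeal to concavity of $s_i(\cdot,d)$ does not quite justify the pointwise bound $s_i(Y_i,D_i)\le g_i(Q_i)\,Y_i$: concavity with $s_i(0,d)=0$ gives the reverse inequality, and Type-I service $s_i(y,d)=\bI(y\ge d)$ is not concave in $y$ at all. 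The bound you need is simply checked directly per type — it holds with equality for Types II and III (both are linear in $y$ on $[0,d]$), and for Type I it follows because $s_i(y,d)=0$ whenever $y<d$. Once that pointwise inequality is in hand, your bathtub rearrangement finishes the job exactly as you describe.
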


Given $(\beta_i)_{i\in[n]}$ in the ex-ante feasible region, we can use a CRS to define an online rationing algorithm.
The reduction is quite obvious for the statement about knapsack CRS in \Cref{thm:reduction}.
Indeed, define element $i$ to be active with size $S_i=\min\{D_i,1\}$ if $Q_i\le q_i$, and inactive otherwise.
Individual sizes are at most 1 by definition, and the expected total size of active elements is at most 1 by~\eqref{eqn:exanteFeasTotal}.
Therefore, we can query a knapsack CRS and set $Y_i=S_i$ (using up capacity $S_i)$ whenever the CRS says to accept element $i$, and set $Y_i=0$ otherwise.
By the CRS guarantee, we have $Y_i=S_i$ with probability at least $\alpha$ conditional on $Q_i=q$, for any $q\le q_i$.
Therefore, $\bE[s_i(Y_i,D_i)]\ge\alpha\cdot\bE[s_i(S_i,D_i)\bI(Q_i\le q_i)]=\alpha\beta_i$ by~\eqref{eqn:exanteFeasService2}, establishing~\eqref{eqn:reductionGuarantee} in \Cref{thm:reduction}.

The reduction to single-unit CRS, however, is non-trivial and requires defining fractional allocations based on a CRS that accepts or rejects.
(Although the knapsack reduction works for all service types, selection guarantees are much better for single-unit CRS, and hence we should use the latter if all service functions are of Type-II or III.)
To use the single-unit CRS, given $(\beta_i)_{i\in[n]}$ in the ex-ante feasible region, we take corresponding values of $(q_i)_{i\in[n]}$, and define
\begin{align} \label{eqn:defineActiveness}
x_i &:= \int_0^{q_i} \min\{F^{-1}_i(q),1\} dq &\forall i\in[n]
\end{align}
as the activeness probabilities in the CRS, which satisfy $\sum_{i=1}^n x_i\le1$ by~\eqref{eqn:exanteFeasTotal}.
We obtain conditional acceptance probabilities $c_\sigma(i)$ for each element $i$ under each permutation $\sigma$ (for FB-CRS, this would be returned by the~\ref{LP:instance_opt} described in \Cref{sec:single_unit}).
The way in which these conditional probabilities are used to derive fractional allocations back in the rationing problem is described in \Cref{alg:reduction}.

\begin{algorithm}
\caption{Using a Single-Unit CRS to define a Rationing Algorithm}
\label{alg:reduction}
\begin{algorithmic}[1]
\Require $(q_i)_{i\in[n]}$ satisfying~\eqref{eqn:exanteFeasTotal}--\eqref{eqn:exanteFeasService}, and $(c_\sigma(i))_{i\in[n]}\in[0,1]^n$ returned by the CRS for every arrival order $\sigma$ in the support of $\Lambda$ (the input to the CRS is determined via~\eqref{eqn:defineActiveness})
\Ensure (random) online allocations $Y_1,\ldots,Y_n$ satisfying $\bE[s_i(Y_i,D_i)]\ge\bE[c_\Lambda(i)]\beta_i$ for all $i$
\State Initialize $\Rem=1$ \Comment{Remaining supply}
\State Observe realized permutation $\rp$ and call it $\sigma$
\For{$i$ arriving in increasing order of $\sigma$}
\State Observe quantile $Q_i$ and demand $D_i=F^{-1}_i(Q_i)$ \Comment{see \Cref{def:quantile}}
\If{$Q_i\le q_i$} \Comment{Only allocate to $i$ if $Q_i\le q_i$}
\State Set $Y_i=\min\{D_i,\Rem,\tau_i\}$, where $\tau_i$ is calibrated so that \Comment{We will prove $\tau_i$ exists}
\begin{align} \label{eqn:calibration}
\bE_{\Rem}\Big[\int_0^{q_i} \min\{F^{-1}_i(q),\Rem,\tau_i\} dq \Big| \Lambda=\sigma\Big]= c_\sigma (i) x_i
\end{align}
\State Update $\Rem=\Rem-Y_i$
\EndIf
\EndFor
\end{algorithmic}
\end{algorithm}

Intuitively, \Cref{alg:reduction} only considers allocating to an agent $i$ if $Q_i\le q_i$, because as mentioned earlier, it is most supply-efficient to allocate to low demands.
The amount allocated is limited by both $D_i$ and $\Rem$, but we would like to further limit it to a threshold $\tau_i$, to preserve the expected allocation of the CRS which is $c_\sigma(i) x_i$ (conditional on $\rp=\sigma$).
We emphasize that the expectation in~\eqref{eqn:calibration} is taken over the randomness in $\Rem$; i.e., the algorithm ignores the present value of $\Rem$ and considers the distribution of remaining supply over all sample paths to do the calibration.

\begin{lemma} \label{lem:singleUnitReduction}
Suppose every service function $s_i$ is of Type-II or Type-III.
Then given $(\beta_i)_{i\in[n]}$ in the convex ex-ante feasible region and an $\alpha$-selectable single-unit CRS, \Cref{alg:reduction} provides expected service at least $\alpha\beta_i$ to every agent $i$.
\end{lemma}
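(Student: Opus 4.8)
The goal is to show that Algorithm \ref{alg:reduction} is well-defined (i.e. the threshold $\tau_i$ in \eqref{eqn:calibration} always exists) and that the service it delivers to each agent is at least $\alpha\beta_i$. I would organize the argument in three stages: (i) existence of the calibrating thresholds, maintained as an invariant; (ii) a coupling/monotonicity argument reducing the worst-case distribution of $\Rem$ to one supported on $\{0,1\}$; and (iii) an FKG-type inequality converting the preserved \emph{allocation} \eqref{eqn:calibration} into the desired \emph{service} bound.

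\textbf{Stage (i): existence of $\tau_i$ via an invariant.} Process agents in the order $\sigma$. Inductively assume that when agent $i$ is reached, the (random) remaining supply $\Rem$ satisfies $\bE[\Rem \mid \Lambda=\sigma] = 1 - \sum_{j<_\sigma i} c_\sigma(j)\, x_j$; this holds at the start since $\Rem=1$ and no allocations have been made, and it is propagated because agent $i$'s expected allocation is exactly $c_\sigma(i)x_i$ by the defining equation \eqref{eqn:calibration} (this is also where the update $\Rem \leftarrow \Rem - Y_i$ enters). For the existence of $\tau_i$: the map $\tau \mapsto \bE_{\Rem}\big[\int_0^{q_i}\min\{F^{-1}_i(q),\Rem,\tau\}\,dq \,\big|\, \Lambda=\sigma\big]$ is continuous and nondecreasing in $\tau$, equals $0$ at $\tau=0$, and at $\tau=1$ (recall $F^{-1}_i\wedge 1$ and $\Rem$ are both $\le 1$) equals $\bE_{\Rem}[\int_0^{q_i}\min\{F^{-1}_i(q),\Rem\}\,dq \mid \Lambda=\sigma]$. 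So a valid $\tau_i$ exists provided the target $c_\sigma(i)x_i$ does not exceed this last quantity. Here is where I would invoke feasibility of the $c_\sigma(i)$'s in the single-unit CRS together with the worst-case reduction of Stage (ii): the CRS feasibility constraint is exactly $c_\sigma(i) \le 1 - \sum_{j<_\sigma i} x_j c_\sigma(j)$, i.e. $c_\sigma(i) \le \bE[\Rem\mid\Lambda=\sigma]$, and multiplying by $x_i = \int_0^{q_i}\min\{F^{-1}_i(q),1\}\,dq$ and using concavity/Jensen (Stage (ii)) gives $c_\sigma(i) x_i \le \bE_\Rem[\int_0^{q_i}\min\{F^{-1}_i(q),\Rem\}\,dq\mid\Lambda=\sigma]$, so $\tau_i$ exists.

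\textbf{Stage (ii): worst-case $\Rem\in\{0,1\}$.} The function $r \mapsto \int_0^{q_i}\min\{F^{-1}_i(q),r\}\,dq$ is concave and nondecreasing in $r\in[0,1]$, pinned at $0$ when $r=0$ and at $x_i$ when $r=1$. Hence for any distribution of $\Rem$ on $[0,1]$ with a given mean, the expected value $\bE_\Rem[\int_0^{q_i}\min\{F^{-1}_i(q),\Rem\}\,dq]$ is minimized (for the existence direction) by the extreme distribution on $\{0,1\}$ with the same mean, for which the value is exactly $\bE[\Rem]\cdot x_i$. This both yields the bound needed in Stage (i) and identifies the two-point distribution as the tight case in which CRS feasibility coincides precisely with solvability of \eqref{eqn:calibration}, as asserted in the technical overview around \eqref{eqn:reductionPreserveIntro}.

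\textbf{Stage (iii): from allocation to service, via FKG.} Conditioning on $\Lambda=\sigma$ and writing $R$ for the (independent-of-$Q_i$) remaining supply seen by $i$, \eqref{eqn:calibration} says $\bE_R[\int_0^{q_i}\min\{F^{-1}_i(q),R,\tau_i\}\,dq] = c_\sigma(i)x_i = c_\sigma(i)\int_0^{q_i}\min\{F^{-1}_i(q),1\}\,dq$, while the service delivered is $\bE_R[\int_0^{q_i} s_i(\min\{F^{-1}_i(q),R,\tau_i\},\,F^{-1}_i(q))\,dq]$. Since $s_i$ is Type-II or Type-III and $Y_i\le D_i$, we have $s_i(y,d) = y\cdot g_i(d)$ where $g_i(d) := 1/\mu_i$ (Type-II) or $g_i(d) := 1/d$ (Type-III) is \emph{nonincreasing} in $d$, equivalently nonincreasing in the quantile $q$. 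For each fixed $q$, the allocation $\min\{F^{-1}_i(q),R,\tau_i\}$ is likewise a nonincreasing function of $q$ (the $F^{-1}_i(q)$ term makes it so, the other two are constant in $q$), so $g_i$ and the allocation are comonotone in $q$; comparing against the "flat in $q$" allocation $\min\{F^{-1}_i(q),1\}$ that has the same $R$-averaged integral, the FKG/Chebyshev sum inequality (one application, after first reducing to $D_i\le 1$ exactly as flagged in the technical overview — truncating demands to $1$ only correlates the relevant terms favorably) gives
\begin{align}
\bE_R\!\left[\int_0^{q_i}\! s_i(\min\{F^{-1}_i(q),R,\tau_i\},F^{-1}_i(q))\,dq\right] \ge c_\sigma(i)\int_0^{q_i}\! s_i(\min\{F^{-1}_i(q),1\},F^{-1}_i(q))\,dq = c_\sigma(i)\beta_i,
\end{align}
using \eqref{eqn:exanteFeasService2}. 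Averaging over $\sigma\sim\Lambda$ and using $\alpha$-selectability, $\bE[s_i(Y_i,D_i)] \ge \bE[c_\Lambda(i)]\beta_i \ge \alpha\beta_i$.

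\textbf{Main obstacle.} The delicate point is Stage (iii) in the general (untruncated) case: the naive FKG application compares $\min\{F^{-1}_i(q),R,\tau_i\}$ against a baseline that is not literally a monotone rearrangement when $F^{-1}_i(q)>1$ is possible, so one needs the in-between step showing that replacing $D_i$ by $\min\{D_i,1\}$ throughout both sides preserves the equality \eqref{eqn:calibration} (it does, since the algorithm already caps allocations at $\Rem\le 1$ and uses $\min\{D_i,1\}$ in the definitions) while only helping the inequality — because truncation makes the ratio $g_i(\min\{D_i,1\})$ pointwise no larger on exactly the quantiles where the allocation is unaffected. Making this comonotonicity-preserving-under-truncation argument fully rigorous, rather than the single clean FKG line that suffices when $D_i\le 1$, is the crux of the proof.
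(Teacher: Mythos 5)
Your Stages (i) and (ii) match the paper's argument closely: the inductive maintenance of $\bE[R_i\mid\Lambda=\sigma]=1-\sum_{j<_\sigma i}c_\sigma(j)x_j$, the use of concavity to show the worst-case $\Rem$ is Bernoulli on $\{0,1\}$, and the invocation of the CRS feasibility constraint $c_\sigma(i)\le 1-\sum_{j<_\sigma i}x_j c_\sigma(j)$ together with the mean value theorem are all exactly the paper's route.

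Stage (iii) identifies the right tool (FKG/Chebyshev) and correctly flags that the case $q_i>F_i(1)$ is the delicate part, but the specific monotonicity claims are wrong in a way that breaks the argument as written. You assert that ``the allocation $\min\{F^{-1}_i(q),R,\tau_i\}$ is a nonincreasing function of $q$'' and hence comonotone with $g_i$. This is backwards: $F^{-1}_i$ is \emph{nondecreasing}, so the allocation is nondecreasing in $q$, and $g_i(F^{-1}_i(q))$ is nonincreasing — the two are anti-comonotone, and applying FKG in the direction you state would give the wrong sign. The quantity that is actually nonincreasing, and on which the argument must run, is the \emph{ratio} of the allocation to the baseline, $\bE_R[\min\{F^{-1}_i(q),R,\tau_i\}]\,/\,\min\{F^{-1}_i(q),1\}$ (equal to $\bE_R[\min\{1,\ R/F^{-1}_i(q),\ \tau_i/F^{-1}_i(q)\}]$ where $F^{-1}_i\le1$, and constant $\bE_R[\min\{R,\tau_i\}]$ where $F^{-1}_i>1$). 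Pairing this nonincreasing ratio with the nonincreasing $g_i$ under the probability measure proportional to $\min\{F^{-1}_i(q),1\}\,dq$ gives the Chebyshev-sum (equivalently single-crossing) inequality and closes the gap in one stroke, with no separate truncation step. The paper instead applies FKG directly to $F^{-1}_i(q)$ versus $\bE_R[\min\{F^{-1}_i,R,\tau_i\}]/F^{-1}_i(q)$, which works cleanly only when $q_i\le F_i(1)$, and then handles $q_i>F_i(1)$ with a two-part decomposition plus a rearrangement-inequality lemma showing $\tfrac{a}{b}x+\tfrac{c}{d}y\ge\tfrac{a+c}{b+d}(x+y)$; your (corrected) single-crossing route is arguably slicker. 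Finally, your truncation remark is also sign-reversed: $g_i(\min\{D_i,1\})=1/\min\{D_i,1\}\ge 1/D_i=g_i(D_i)$ pointwise, so truncation makes the ratio \emph{larger}, not ``no larger,'' and the claimed direction of the favorable correlation does not hold as stated. In short, the proposal is structurally sound in parts (i)--(ii), but part (iii) needs the monotonicity claim to be re-aimed at the ratio, not the allocation itself.
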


\begin{proof}[Proof of \Cref{lem:singleUnitReduction}]
We fix $\sigma$ throughout the proof and all statements are made conditioning on $\Lambda=\sigma$.
Let $R_i$ denote the remaining supply $\Rem$ when agent $i$ arrives (under this $\sigma$).
For $i$ in increasing order of $\sigma$, we inductively establish that $\tau_i$ exists.  We then show that the resulting allocation satisfies $\bE[s_i(Y_i,D_i)\mid \Lambda=\sigma]\ge c_\sigma(i)\beta_i$.
Since an $\alpha$-selectable FB-CRS implies that $\bE[c_\Lambda(i)]\ge\alpha$ by definition, this would provide expected service $\bE[s_i(Y_i,D_i)]\ge\alpha\beta_i$ to every agent $i$, as desired.

In the base case, where $i$ is first to arrive and hence $\Rem=1$, the LHS of~\eqref{eqn:calibration} continuously decreases from $x_i$ to 0 as $\tau_i$ decreases from 1 to 0.  Because $c_\sigma(i)\in[0,1]$, the mean value theorem ensures the existence of a value $\tau_i\in[0,1]$ at which equality in~\eqref{eqn:calibration} is achieved.  Equality in~\eqref{eqn:calibration} can be equivalently written as
\begin{align} \label{eqn:preserveAllocation}
\bE[Y_i \mid \Lambda=\sigma]=\bE[\min\{D_i,R_i,\tau_i\}\bI(Q_i\le q_i)\mid\Lambda=\sigma] = c_\sigma (i) x_i.
\end{align}

We now show that~\eqref{eqn:preserveAllocation} can be maintained.
By induction, $\bE[Y_j\mid\Lambda=\sigma]=c_\sigma(j)x_j$ for all $j$ who came before $i$ under permutation $\sigma$, and hence $\bE[R_i\mid\Lambda=\sigma]=1-\sum_{j<_\sigma i}c_\sigma(j)x_j$.  Now, note that $\int_0^{q_i}\min\{F^{-1}_i(q),r\} dq$ is a concave function of $r$.  Therefore, to minimize $\bE[\int_0^{q_i}\min\{F^{-1}_i(q),R_i\} dq\mid\Lambda=\sigma]$ over all distributions of $R_i\in[0,1]$ with a fixed mean, $R_i$ should be bimodally distributed, i.e.~$R_i=1$ with probability $1-\sum_{j<_\sigma i}c_\sigma(j)x_j$ and $R_i=0$ otherwise.
This implies
\begin{align*}
\bE\Big[\int_0^{q_i}\min\{F^{-1}_i(q),R_i\} dq\Big|\Lambda=\sigma\Big]
&\ge\Big(1-\sum_{j<_\sigma i}c_\sigma(j)x_j\Big)\int_0^{q_i}\min\{F^{-1}_i(q),1\} dq
\\ &\ge c_\sigma(i)\int_0^{q_i}\min\{F^{-1}_i(q),1\} dq \ , 
\end{align*}
where the final inequality holds because $1-\sum_{j<_\sigma i}c_\sigma(j)x_j\le c_\sigma(i)$ is a necessary constraint for the conditional acceptance probabilities of a single-unit CRS under any arrival order $\sigma$ (see constraint~\eqref{eqn:LP_feasible}).
Because $\int_0^{q_i}\min\{F^{-1}_i(q),1\} dq=x_i$ by definition~\eqref{eqn:defineActiveness},
this proves that the LHS of~\eqref{eqn:calibration} is at least $c_\sigma(i) x_i$ when $\tau_i=1$, so we can again apply the mean value theorem to justify the existence of a value $\tau_i\in[0,1]$ at which equality in~\eqref{eqn:calibration} is achieved.  This completes the induction and establishes~\eqref{eqn:preserveAllocation} for every agent $i$.

Finally, we show that~\eqref{eqn:preserveAllocation} implies $\bE[s_i(Y_i,D_i)\mid\Lambda=\sigma]\ge c_\sigma(i)\beta_i$ as long as $s_i$ is the Type-II or Type-III service function, which would complete the proof of \Cref{lem:singleUnitReduction}.  Because $s_i(Y_i,D_i)=s_i(\min\{D_i,R_i,\tau_i\},D_i)\bI(Q_i\le q_i)$,
if $s_i$ is the Type-II service function, then
\begin{align*}
\bE[s_i(Y_i,D_i)\mid\Lambda=\sigma]
=\frac1{\mu_i}\bE[\min\{D_i,R_i,\tau_i\}\bI(Q_i\le q_i)\mid\Lambda=\sigma] =  \frac{c_\sigma (i) x_i}{\mu_i}
=c_\sigma(i)\beta_i
\end{align*}
where $\beta_i=x_i/\mu_i$ for Type-II service by the definition of $x_i$ and the derivation in~\eqref{eqn:typeII}.
On the other hand, if $s_i$ is the Type-III service function, then we express~\eqref{eqn:preserveAllocation} as
\begin{align}
c_\sigma (i) x_i
&=\int_0^{q_i} F^{-1}_i(q) \frac{\bE[\min\{F^{-1}_i(q),R_i,\tau_i\}\mid\Lambda=\sigma]}{F^{-1}_i(q)}dq \nonumber
\\ &\le\frac{1}{q_i} \int_0^{q_i} F^{-1}_i(q) dq \int_0^{q_i} \frac{\bE[\min\{F^{-1}_i(q),R_i,\tau_i\}\mid\Lambda=\sigma]}{F^{-1}_i(q)} dq \label{eqn:FKG}
\\ &=\left(\frac{1}{q_i} \int_0^{q_i} F^{-1}_i(q) dq\right) \bE[s_i(\min\{D_i,R_i,\tau_i\},D_i)\bI(Q_i\le q_i)\mid\Lambda=\sigma] \nonumber
\end{align}
where we have applied the FKG inequality, noting that $F^{-1}_i(q)$ is non-decreasing in $q$ while $\frac{\bE[\min\{F^{-1}_i(q),R_i,\tau_i\}\mid\Lambda=\sigma]}{F^{-1}_i(q)}$ is non-increasing in $q$.
If $q_i\le F_i(1)$, then $x_i=\int_0^{q_i} F^{-1}_i(q) dq$ by definition and $q_i=\beta_i$ by the derivation in~\eqref{eqn:typeIII} for Type-III service.
This would imply $c_\sigma(i)\beta_i\le\bE[s_i(Y_i,D_i)\mid\Lambda=\sigma]$, completing the proof.
On the other hand, if $q_i>F_i(1)$, then the proof requires more complicated derivations of a similar nature.
We show that~\eqref{eqn:preserveAllocation} implies $\bE[s_i(Y_i,D_i)\mid\Lambda=\sigma]\ge c_\sigma(i)\beta_i$ in the final case of \Cref{lem:singleUnitReduction}, where $s_i$ is the Type-III service function and $q_i>F_i(1)$.
For brevity, we omit index $i$ and the conditioning on $\Lambda=\sigma$.
We derive
\begin{align}
\bE[s(Y,D)]
&=\int_0^{F(1)}\frac{\bE[\min\{F^{-1}(q),R,\tau\}]}{F^{-1}(q)}dq + \int_{F(1)}^{q}\frac{\bE[\min\{R,\tau\}]}{F^{-1}(q)} dq \nonumber
\\ &\ge\frac{\int_0^{F(1)}\bE[\min\{F^{-1}(q),R,\tau\}]dq}{\int_0^{F(1)}F^{-1}(q)dq} F(1)+ \frac{(q-F(1))\bE[\min\{R,\tau\}]}{q-F(1)}\int_{F(1)}^{q}\frac{1}{F^{-1}(q)} dq \label{eqn:grindStart}
\end{align}
by applying the FKG inequality on the first term, in the same way as in~\eqref{eqn:FKG}.

Our goal is to show that this is at least
\begin{align}
&\frac{\int_0^{F(1)}\bE[\min\{F^{-1}(q),R,\tau\}]dq + (q-F(1))\bE[\min\{R,\tau\}]}{\int_0^{F(1)}F^{-1}(q)dq + q-F(1)}\left(F(1) + \int_{F(1)}^{q}\frac{1}{F^{-1}(q)} dq \right) \label{eqn:grindEnd}
\\ &=\frac{\int_0^{q}\bE[\min\{F^{-1}(q),R,\tau\}]dq}{\int_0^{q}\min\{F^{-1}(q),1\}dq } \int_0^q\min\{1,\frac{1}{F^{-1}(q)}\}dq \nonumber
\\ &=\frac{c_\sigma(i) x_i}{x_i}\beta_i \nonumber
\end{align}
which would complete the proof (the final equality applies~\eqref{eqn:preserveAllocation}, \eqref{eqn:defineActiveness}, and \eqref{eqn:typeIII}).

To show that~\eqref{eqn:grindStart} is at least~\eqref{eqn:grindEnd}, we define the shorthand
$a=\int_0^{F(1)}\bE[\min\{F^{-1}(q),R,\tau\}]dq$, $b=\int_0^{F(1)}F^{-1}(q)dq$, $c=(q-F(1))\bE[\min\{R,\tau\}]$, $d=q-F(1)$, $x=F(1)$, and $y=\int_{F(1)}^{q}\frac{1}{F^{-1}(q)} dq$.
Our goal is to prove that
\begin{align} \label{eqn:grindGoal}
\frac{a}{b}x + \frac{c}{d}y &\ge\frac{a+c}{b+d}(x+y)
\end{align}
where $a,b,x\ge0$ and $c,d,y>0$ (because $q>F(1)$).
We first handle the degenerate case $b=0$, under which $a=0$ and $a/b=1$, which means~\eqref{eqn:grindGoal} reduces to $x+\frac cd y\ge \frac cd(x+y)$ which is true because $c/d \le 1$.
Now assuming $b>0$, we see that~\eqref{eqn:grindGoal} is equivalent to the following:
\begin{align}
\frac{adx+bcy}{bd} &\ge\frac{a+c}{b+d}(x+y) \nonumber
\\ (adx+bcy)(b+d) &\ge (a+c)(x+y)bd \nonumber
\\ (da)(dx)+(bc)(by) &\ge (da)(by)+(bc)(dx) \label{eqn:rearrange}
\end{align}
We prove the final inequality~\eqref{eqn:rearrange}.  Note that
\begin{align*}
d\cdot a =(q-F(1))\int_0^{F(1)}\bE[\min\{F^{-1}(q),R,\tau\}]dq &=(q-F(1))\int_0^{F(1)}\bE[\min\left\{F^{-1}(q),\min\{R,\tau\}\right\}]dq
\\ bc =\int_0^{F(1)}F^{-1}(q)dq (q-F(1))\bE[\min\{R,\tau\}] &=(q-F(1))\int_0^{F(1)} \bE[F^{-1}(q)\cdot \min\{R,\tau\}]dq
\end{align*}
so we have $da\ge bc\ge0$ because the minimum of two numbers in [0,1] is greater than their product ($F^{-1}(q)\le 1$ for $q\le F(1)$, and also $\min\{R,\tau\}\le 1$).
Meanwhile, note that
\begin{align*}
d\cdot x &=(q-F(1))F(1)
\\ by &=\int_0^{F(1)}F^{-1}(q)dq \int_{F(1)}^{q}\frac{1}{F^{-1}(q)} dq
\end{align*}
so we have $dx\ge by\ge0$ because $F^{-1}(q)\le 1$ for $q\le F(1)$ and $F^{-1}(q)\ge 1$ for $q\ge F(1)$.
By the rearrangement inequality,~\eqref{eqn:rearrange} holds indeed.
This completes the proof.
\end{proof}

\Cref{lem:ub,lem:singleUnitReduction}, in conjunction with the observations about convexity and the simple reduction for knapsack CRS, complete the proof of \Cref{thm:reduction}.  We end with some remarks.

\begin{remark}
The reductions presented in this \namecref{sec:reduction} hold under any method of generating the arrival order $\Lambda$ (adversarial, uniformly random, etc.), as long as the method is the same in the online rationing problem and CRS.  That being said, in this paper we will only apply the reductions for FB-CRS.
\end{remark}

\begin{remark} \label{rem:historySampling}
Implementing \Cref{alg:reduction} requires tracking the distribution of $\Rem$, which can be approximated by sampling the history.
We omit the details, as this assumption has been used in previous online CRS works (e.g., \citep{ezra2022prophet,macrury2024random}).  
We refer to \citet{ma2018improvements,macrury_random_2024} for papers where this argument is formally spelled out.
\end{remark}

\section{Details of Single-unit FB-CRS Results} \label{sec:single_unit}

In the single-unit FB-CRS problem, an input is specified by $(n, \bm{x})$,
where $n \in \mb{N}$, and $\bm{x} = (x_i)_{i=1}^n$ is a collection of
probabilities. Recall that $\Pr[S_i =1] = x_i$ for each $i \in [n]$.

In order to prove Theorems \ref{thm:single_element_positive} and \ref{thm:single_element_hardness},
we first write a mathematical program to characterize the optimal selection guarantee an FB-CRS can attain on $(n, \bm{x})$.
For each $i \in [n]$ and $\sigma \in \{\fp,\bp\}$, we introduce a variable $c_{\sigma}(i)$ corresponding to the
probability that the FB-CRS accepts $i \in [n]$, given $S_i=1$ and $\rp = \sigma$.
\begin{align}\label{LP:instance_opt}
    \tag{LP-SI}
	&\text{maximize} &  \min_{1 \le i \le n} (c_{\fp}(i) + c_{\bp}(i))/2 \\
	&\text{subject to} & c_{\sigma}(i) \le 1 - \sum_{j <_{\sigma} i } x_j \cdot c_{\sigma}(j) && \forall i \in [n],  \sigma \in \{\fp,\bp\} \label{eqn:LP_feasible} \\
	&&c_{\sigma}(i) \ge 0  && \forall i \in [n], \sigma \in \{\fp,\bp\} \ .
\end{align}
Here $\min_{1 \le i \le n} (c_{\fp}(i) + c_{\bp}(i))/2$ corresponds to the selection guarantee achieved by the FB-CRS,
and \eqref{eqn:LP_feasible} says that if $i$ is accepted assuming $\rp = \sigma$, then no previous element $j <_{\sigma} i$ could have been.
Note that by introducing an additional variable $\beta \ge 0$,
and the constraints $(c_{\fp}(i) + c_{\bp}(i))/2 \ge \beta$ for each $i \in [n]$,
we can maximize for $\beta$, and reformulate \ref{LP:instance_opt} as a linear program (LP). Thus, 
\ref{LP:instance_opt} can be solved efficiently,
and we denote the value of an optimal solution to \ref{LP:instance_opt} by $\LPOPT(n,\bm{x})$.

We now design an FB-CRS which we prove is $\LPOPT(n,\bm{x})$-selectable on $(n, \bm{x})$. Afterwards,
we show that this is best possible. That is, \textit{no} FB-CRS is greater than $\LPOPT(n,\bm{x})$-selectable on $(n, \bm{x})$.



\begin{algorithm}
\caption{Single-unit FB-CRS} 
\label{alg:rank_1_CRS}
\begin{algorithmic}[1]
\Require elements $[n]$ and $\bm{x} = (x_i)_{i=1}^n$ which satisfies $\sum_{i=1}^n x_i = \sumx$.
\Ensure at most element $i \in [n]$ with $S_i =1$.
\State Compute an optimal solution of \ref{LP:instance_opt} to obtain $(c_{\fp}(i), c_{\bp}(i))_{i=1}^n$.
\State Observe realized permutation $\rp$ and call it $\sigma$
\For{$i \in [n]$ arriving in increasing order of $\sigma$}


\State Draw $B_{\sigma}(i) \sim \Ber\left(\cfrac{c_{\sigma}(i)}{1 - \sum_{j <_{\sigma} i} x_j \cdot c_{\sigma}(j)}\right)$ independently. 

\If{$B_{\sigma}(i) \cdot S_{i} =1$ and no element was previously accepted}
\State \Return $i$.

\EndIf
\EndFor

\end{algorithmic}
\end{algorithm}
\begin{remark}
    \Cref{alg:rank_1_CRS} is well-defined, as
$\frac{c_{\sigma}(i)}{1 - \sum_{j <_{\sigma} i} x_j \cdot c_{\sigma}(j)} \le 1$ for each $\sigma \in \{\fp, \bp\}$ and $i \in [n]$ by \eqref{eqn:LP_feasible} of \ref{LP:instance_opt}.
\end{remark}
\begin{lemma}[proof in \Cref{pf:lem:instance_optimal}] \label{lem:instance_optimal}
\Cref{alg:rank_1_CRS} is $\LPOPT(n,\bm{x})$-selectable on $(n, \bm{x})$. Moreover,
no FB-CRS is more than $\LPOPT(n,\bm{x})$-selectable on $(n, \bm{x})$. 
\end{lemma}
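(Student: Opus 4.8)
The plan is to split the lemma into its two halves: (i) \Cref{alg:rank_1_CRS} achieves selection guarantee $\LPOPT(n,\bm x)$ on $(n,\bm x)$, and (ii) no FB-CRS can do better on the same instance. For (i), I would fix a permutation $\sigma\in\{\fp,\bp\}$ and prove by induction on the arrival position under $\sigma$ that, conditioned on $\rp=\sigma$, the probability that \emph{no} element among those arriving in the first $t$ steps has been accepted equals exactly $1-\sum_{j:\,\sigma(j)\le t} x_j c_\sigma(j)$. The inductive step is immediate from the definition of the Bernoulli parameter in the algorithm: at step $t$ with $\sigma(i)=t$, the element $i$ is active with probability $x_i$ independently, and conditioned on ``nothing accepted so far'' (which has probability $1-\sum_{j<_\sigma i}x_j c_\sigma(j)$ by induction) the algorithm accepts $i$ with probability $\tfrac{c_\sigma(i)}{1-\sum_{j<_\sigma i}x_j c_\sigma(j)}\cdot x_i$; multiplying through and subtracting gives the claim. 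A consequence is that $\Pr[i\in I\mid S_i=1,\ \rp=\sigma]=c_\sigma(i)$ exactly, since the event that $i$ is accepted (given it is active) is precisely the event that nothing was accepted before $i$ \emph{and} $B_\sigma(i)=1$, and these are independent. Averaging over the uniform $\rp$, $\Pr[i\in I\mid S_i=1]=(c_\fp(i)+c_\bp(i))/2\ge \min_j (c_\fp(j)+c_\bp(j))/2=\LPOPT(n,\bm x)$, and because $S_i=1$ is the only conditioning event that matters for single-unit inputs (inactive elements are never accepted, and $s=1$ is the only finite size), this gives $\alpha=\LPOPT(n,\bm x)$-selectability.

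For the converse (ii), I would argue that the acceptance probabilities of \emph{any} FB-CRS on $(n,\bm x)$ form a feasible solution to \ref{LP:instance_opt}. Given an arbitrary FB-CRS, define $c_\sigma(i):=\Pr[i\in I\mid S_i=1,\ \rp=\sigma]$. Nonnegativity is trivial. For the main constraint \eqref{eqn:LP_feasible}, fix $\sigma$ and $i$; conditioned on $\rp=\sigma$, the event $\{i\in I\}$ requires that no $j<_\sigma i$ was accepted, so $\Pr[i\in I\mid S_i=1,\rp=\sigma]\le \Pr[\text{no }j<_\sigma i\text{ accepted}\mid \rp=\sigma]$. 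Since $i$'s activeness is independent of the sizes of $j<_\sigma i$ and of the algorithm's decisions before step $\sigma(i)$, the conditioning on $S_i=1$ does not change the latter probability. By a union bound / inclusion argument, $\Pr[\text{some }j<_\sigma i\text{ accepted}\mid\rp=\sigma]=\sum_{j<_\sigma i}\Pr[j\in I\mid\rp=\sigma]=\sum_{j<_\sigma i}x_j c_\sigma(j)$ (the events ``$j$ accepted'' are disjoint across $j$ because at most one element is ever accepted, and $\Pr[j\in I\mid\rp=\sigma]=x_j\Pr[j\in I\mid S_j=1,\rp=\sigma]=x_j c_\sigma(j)$ using the same independence of $S_j$ from the order). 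Hence $c_\sigma(i)\le 1-\sum_{j<_\sigma i}x_j c_\sigma(j)$, i.e.\ \eqref{eqn:LP_feasible} holds. The FB-CRS's selection guarantee is $\min_i \Pr[i\in I\mid S_i=1]=\min_i (c_\fp(i)+c_\bp(i))/2$, which is the \ref{LP:instance_opt} objective of a feasible point and therefore at most $\LPOPT(n,\bm x)$.

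The only genuinely delicate point — and the step I would be most careful about — is the independence argument used twice above: that conditioning on $S_i=1$ leaves unchanged the joint distribution of ($S_j$ for $j\ne i$, the realized order $\rp$, and all algorithmic randomness up through the step at which $i$ arrives). This holds because the $S_i$ are drawn independently, $\rp$ is independent of the sizes, and any FB-CRS only inspects $S_i$ at step $\sigma(i)$, so its decisions strictly before that step are measurable with respect to data independent of $S_i$; I would state this as a short ``the algorithm's history before $i$ arrives is independent of $S_i$'' observation and invoke it cleanly in both halves. Everything else is bookkeeping. I would then note that $\LPOPT(n,\bm x)$-selectability together with the matching upper bound is exactly the statement of the lemma, and defer to the later sections the separate task of minimizing $\LPOPT(n,\bm x)$ over all valid inputs (which is what actually yields the numerical constant $1/(1+e^{-1/2})$ in \Cref{thm:single_element_positive,thm:single_element_hardness}).
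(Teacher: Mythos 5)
Your proposal is correct and mirrors the paper's own argument closely: an induction showing $\Pr[A_i=1 \mid S_i=1, \rp=\sigma]=c_\sigma(i)$ for \Cref{alg:rank_1_CRS} (you track ``nothing accepted yet,'' the paper tracks its complement $\Pr[\cup_{j<_\sigma i}\{A_j=1\}\mid \rp=\sigma]$ — the same quantity), followed by the converse that an arbitrary FB-CRS's conditional acceptance probabilities are feasible for \ref{LP:instance_opt}. The independence observation you flag as the ``delicate point'' is indeed the same one the paper invokes when it drops the conditioning on $S_i=1$ in $\Pr[A_j=1\mid \rp=\sigma, S_i=1]$, and your explicit statement of it is if anything a small improvement in care.
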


\subsection{FB-CRS Positive Result: Proving \Cref{thm:single_element_positive}} \label{sec:positive_result_single_item}
Fix $\sumx \ge 0$.  Observe that to prove \Cref{thm:single_element_positive}, without loss we can restrict to inputs $(n,\bm{x})$ with $\sum_{i=1}^n x_i = \sumx$, and for which $\min_{i \in [n]} x_i > 0$. More, by \Cref{lem:instance_optimal}, it suffices show that
\begin{equation} \label{eqn:LP_explicit_lower_bound}
\LPOPT(n,\bm{x}) \ge \frac{\exp(\sumx/2)}{1 + \exp(\sumx/2) \sumx},
\end{equation}
as then \Cref{alg:rank_1_CRS} attains the selection guarantee claimed in \Cref{thm:single_element_positive}.
We prove \eqref{eqn:LP_explicit_lower_bound} by constructing a feasible solution to \ref{LP:instance_opt}
with value at least $\frac{\exp(\sumx/2)}{1 + \exp(\sumx/2) \sumx}$. To help us describe the solution, we first
define a continuous function $\phi: [0,\sumx] \rightarrow [0,1]$. Specifically,  for each $z \in [0, \sumx]$,
    \begin{equation} \label{eqn:primal_function_definition}
        \phi(z) := \begin{cases}
\frac{2 e^{\sumx/2} - e^z}{1 + e^{\sumx/2} \sumx}  & \text{if } z \le \sumx/2,\\
\frac{e^{\sumx - z}}{1 + e^{\sumx/2} \sumx}  & \text{if } \sumx/2 < z \le \sumx.
\end{cases}
         \end{equation}
Note that \eqref{eqn:primal_function_definition} generalizes \eqref{eqn:primal_function_definition_overview} from \Cref{sec:technical_overview} to an arbitrary
value of $\sumx \ge 0$.
\begin{proposition}[Proof in \Cref{pf:obs:phi_properties}] \label{obs:phi_properties}
Function $\phi: [0,\sumx] \rightarrow [0,1]$ defined in \eqref{eqn:primal_function_definition} satisfies the following:

\begin{enumerate}
            \item \label{eqn:limit_decreasing_function} $\phi$ is continuous and decreasing on $[0,\sumx]$.

            \item  For each $z \in [0,\sumx]$, 
            \begin{align}
            &\frac{\phi(z) + \phi(\sumx-z)}{2} = \frac{\exp(\sumx/2)}{1 + \exp(\sumx/2) \sumx}; \label{eqn:limit_equal_value}\\
                &\phi(z) \le 1 - \int_{0}^{z} \phi(\tau) d\tau.        \label{eqn:limit_integral_value}
            \end{align}
\end{enumerate}
\end{proposition}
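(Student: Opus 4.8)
The plan is to verify each item by direct computation, organizing everything around the breakpoint $z=\sumx/2$ and invoking only elementary monotonicity. For item~\ref{eqn:limit_decreasing_function}, I would first check that the two pieces agree at $\sumx/2$: the upper branch gives $(2e^{\sumx/2}-e^{\sumx/2})/(1+e^{\sumx/2}\sumx)=e^{\sumx/2}/(1+e^{\sumx/2}\sumx)$ and the lower branch gives $e^{\sumx-\sumx/2}/(1+e^{\sumx/2}\sumx)=e^{\sumx/2}/(1+e^{\sumx/2}\sumx)$, so $\phi$ is continuous. On $(0,\sumx/2)$ the derivative is $-e^{z}/(1+e^{\sumx/2}\sumx)<0$ and on $(\sumx/2,\sumx)$ it is $-e^{\sumx-z}/(1+e^{\sumx/2}\sumx)<0$, so $\phi$ is strictly decreasing on each piece and hence, by continuity, on all of $[0,\sumx]$. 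To confirm that $\phi$ actually takes values in $[0,1]$ it then suffices to check the endpoints: $\phi(\sumx)=1/(1+e^{\sumx/2}\sumx)\in(0,1]$, while $\phi(0)=(2e^{\sumx/2}-1)/(1+e^{\sumx/2}\sumx)\le 1$ is equivalent to $e^{\sumx/2}(2-\sumx)\le 2$, which holds because $g(\sumx):=e^{\sumx/2}(2-\sumx)$ satisfies $g(0)=2$ and $g'(\sumx)=-\tfrac{\sumx}{2}e^{\sumx/2}\le 0$.

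For the symmetry identity~\eqref{eqn:limit_equal_value}, I would split on whether $z\le\sumx/2$. If $z\le\sumx/2$ then $\sumx-z\ge\sumx/2$, so $\phi(z)=(2e^{\sumx/2}-e^{z})/(1+e^{\sumx/2}\sumx)$ and $\phi(\sumx-z)=e^{\sumx-(\sumx-z)}/(1+e^{\sumx/2}\sumx)=e^{z}/(1+e^{\sumx/2}\sumx)$; the two $e^{z}$ terms cancel, leaving $\phi(z)+\phi(\sumx-z)=2e^{\sumx/2}/(1+e^{\sumx/2}\sumx)$, and dividing by $2$ gives the stated value. The case $z>\sumx/2$ is the same computation after swapping $z$ with $\sumx-z$, and $z=\sumx/2$ is covered by either branch.

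For~\eqref{eqn:limit_integral_value}, I would set $\Phi(z):=\phi(z)+\int_0^z\phi(\tau)\,d\tau$ and show $\Phi(z)\le 1$ on $[0,\sumx]$, computing $\Phi$ explicitly in the two regimes. For $z\le\sumx/2$, integrating the upper branch gives $\int_0^z\phi=(2e^{\sumx/2}z-e^{z}+1)/(1+e^{\sumx/2}\sumx)$, so $(1+e^{\sumx/2}\sumx)\Phi(z)=2e^{\sumx/2}(1+z)-2e^{z}+1$, and $\Phi(z)\le 1$ rearranges to $h(z):=e^{\sumx/2}(\sumx-2-2z)+2e^{z}\ge 0$; here $h(\sumx/2)=0$ and $h'(z)=2(e^{z}-e^{\sumx/2})\le 0$ for $z\le\sumx/2$, so $h$ is non-increasing and $h(z)\ge h(\sumx/2)=0$. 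For $z>\sumx/2$, I would add $\int_0^{\sumx/2}\phi=(e^{\sumx/2}\sumx-e^{\sumx/2}+1)/(1+e^{\sumx/2}\sumx)$, $\int_{\sumx/2}^z\phi=(e^{\sumx/2}-e^{\sumx-z})/(1+e^{\sumx/2}\sumx)$, and $\phi(z)=e^{\sumx-z}/(1+e^{\sumx/2}\sumx)$; the $e^{\sumx-z}$ and $e^{\sumx/2}$ terms cancel, leaving $\Phi(z)=(e^{\sumx/2}\sumx+1)/(1+e^{\sumx/2}\sumx)=1$ exactly.

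None of the steps is conceptually difficult; the main obstacle is purely the exponential bookkeeping, in particular getting the constants of integration right so that the cancellations at $z=\sumx/2$ line up and $\Phi$ is continuous there. A useful sanity check is that $\phi$ was reverse-engineered so that constraint~\eqref{eqn:cOPT_feasible_forward} is tight on $[\sumx/2,\sumx]$ and the reflection constraint~\eqref{eqn:cOPT_equality} holds, which is exactly why the $z>\sumx/2$ computation of $\Phi$ and the identity~\eqref{eqn:limit_equal_value} come out as clean equalities rather than strict inequalities.
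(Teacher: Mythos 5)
Your proof is correct and follows essentially the same route as the paper's: check continuity and sign of the derivative at and around $z=\sumx/2$, verify the symmetry identity by direct substitution, and verify the integral constraint by explicit antidifferentiation in the two regimes (strict inequality for $z<\sumx/2$, equality for $z>\sumx/2$). Your write-up is a bit more careful in two spots the paper glosses over—you explicitly confirm that $\phi$ maps into $[0,1]$ by checking $\phi(0)$ and $\phi(\sumx)$, and you justify the inequality on $[0,\sumx/2]$ via the auxiliary function $h$ and its derivative rather than merely asserting where the maximum occurs—but these are the same ideas, just spelled out.
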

\begin{remark}
Properties \eqref{eqn:limit_equal_value} and \eqref{eqn:limit_integral_value} correspond to the objective and constraints of \ref{LP:instance_opt} for an input
with $\max_{1 \le i \le n} x_i \rightarrow 0$. Thus, we can interpret $(\phi(z), \phi(\sumx -z))_{0 \le z \le \sumx}$ as a limiting solution of \ref{LP:instance_opt} as $\max_{1 \le i \le n} x_i \rightarrow 0$. The first property \eqref{eqn:limit_decreasing_function} is a technical assumption to help us verify \eqref{eqn:LP_explicit_lower_bound}.
\end{remark}

For each $i \in [n]$ and $\sigma \in \{\fp,\bp\}$, let us now define $x_{\sigma}(i) := \sum_{j  <_{\sigma} i } x_j$ where $x_{\fp}(1) = x_{\bp}(n) := 0$ for convenience. 
Using $\phi$, and recalling that $\sum_{i=1}^n x_i = \sumx$, we define $(c_{\fp}(i), c_{\bp}(i))_{i=1}^n$ as follows:
\begin{align} \label{eqn:feasible_solution_definition}
    \text{$c_{\fp}(i) := \int_{x_{\fp}(i)}^{x_{\fp}(i) + x_i} \frac{\phi\left(\tau \right)}{x_i} d\tau$,  and $c_{\bp}(i) := \int_{x_{\bp}(i)}^{x_{\bp}(i) + x_i} \frac{\phi\left(\tau \right)}{x_i} d\tau$} \ . 
\end{align} 
Here = $c_{\sigma}(i)$ is the average value of the function $\phi$ on the interval $[x_{\sigma}(i), x_{\sigma}(i) + x_i]$. As such, $c_{\sigma}(i)$ agrees exactly with $\phi$ for inputs with $\max_{1 \le i \le n} x_i \rightarrow 0$, and is 
a decreasing function on $[0, \sumx]$.
Thus, the \textit{further} an element $i$ is in the order specified by $\sigma \in \{\fp,\bp\}$, the \textit{smaller} the value of $c_{\sigma}(i)$.


\begin{lemma} \label{lem:phi_feasible}
Fix an input $(n,\bm{x})$ with $\sum_{i=1}^n x_i = \sumx$. Then, $(c_{\fp}(i), c_{\bp}(i))_{i=1}^n$ defined
in \eqref{eqn:feasible_solution_definition}
is  a feasible solution to \ref{LP:instance_opt} with
\begin{equation} \label{eqn:key_comparison}
    \LPOPT(n,\bm{x}) \ge \min_{1 \le i \le n}\frac{c_{\fp}(i) + c_{\bp}(i)}{2} = \frac{\exp(\sumx/2)}{1 + \exp(\sumx/2) \sumx}.
\end{equation}
\end{lemma}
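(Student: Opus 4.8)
The plan is to verify feasibility of $(c_{\fp}(i),c_{\bp}(i))_{i=1}^n$ directly in \ref{LP:instance_opt}, and then read off its objective value using the reflection symmetry of $\phi$ recorded in \Cref{obs:phi_properties}. Nonnegativity of each $c_\sigma(i)$ is immediate since $\phi\ge 0$, so the only real work in feasibility is constraint \eqref{eqn:LP_feasible}. First I would establish a tiling identity: for a fixed $\sigma\in\{\fp,\bp\}$ and $i\in[n]$, the intervals $[x_\sigma(j),x_\sigma(j)+x_j]$ over $j<_\sigma i$ are disjoint and their union is exactly $[0,x_\sigma(i)]$, because the elements preceding $i$ under $\sigma$ contribute consecutive blocks of total length $x_\sigma(i)=\sum_{j<_\sigma i}x_j$ (this is clear for $\fp$, where $j<_\fp i\iff j<i$, and for $\bp$, where $j<_\bp i\iff j>i$). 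Since $c_\sigma(j)=\tfrac{1}{x_j}\int_{x_\sigma(j)}^{x_\sigma(j)+x_j}\phi(\tau)\,d\tau$ by \eqref{eqn:feasible_solution_definition}, this yields $\sum_{j<_\sigma i}x_j\,c_\sigma(j)=\int_0^{x_\sigma(i)}\phi(\tau)\,d\tau$.

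Because $\phi$ is decreasing on $[0,\sumx]$ by \eqref{eqn:limit_decreasing_function}, the quantity $c_\sigma(i)$, being the average of $\phi$ over $[x_\sigma(i),x_\sigma(i)+x_i]$, satisfies $c_\sigma(i)\le\phi(x_\sigma(i))$. Invoking \eqref{eqn:limit_integral_value} at $z=x_\sigma(i)$ then gives $c_\sigma(i)\le\phi(x_\sigma(i))\le 1-\int_0^{x_\sigma(i)}\phi(\tau)\,d\tau=1-\sum_{j<_\sigma i}x_j\,c_\sigma(j)$, which is precisely \eqref{eqn:LP_feasible}. Hence the proposed solution is feasible for \ref{LP:instance_opt}, and therefore $\LPOPT(n,\bm{x})\ge\min_{i}(c_{\fp}(i)+c_{\bp}(i))/2$, since the objective value at any feasible point lower-bounds the optimum.

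It then remains to compute this objective value, exploiting the reflection symmetry. Since $j<_\bp i\iff j>i$, we have $x_\bp(i)=\sum_{j>i}x_j=\sumx-x_i-x_\fp(i)$, so the interval $[x_\bp(i),x_\bp(i)+x_i]=[\sumx-x_i-x_\fp(i),\,\sumx-x_\fp(i)]$ defining $c_\bp(i)$ is the reflection through $\sumx/2$ of the interval $[x_\fp(i),x_\fp(i)+x_i]$ defining $c_\fp(i)$. Substituting $\tau\mapsto\sumx-\tau$ gives $c_\bp(i)=\tfrac{1}{x_i}\int_{x_\fp(i)}^{x_\fp(i)+x_i}\phi(\sumx-\tau)\,d\tau$, whence
\[
\frac{c_\fp(i)+c_\bp(i)}{2}=\frac{1}{2x_i}\int_{x_\fp(i)}^{x_\fp(i)+x_i}\big(\phi(\tau)+\phi(\sumx-\tau)\big)\,d\tau=\frac{\exp(\sumx/2)}{1+\exp(\sumx/2)\sumx},
\]
by the pointwise identity \eqref{eqn:limit_equal_value}. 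As this holds for every $i\in[n]$, the minimum over $i$ equals the same constant, which establishes \eqref{eqn:key_comparison}.

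I do not anticipate a genuine obstacle: once \Cref{obs:phi_properties} is in hand, the argument is essentially bookkeeping. The only place requiring care is the identification of $x_\bp(i)=\sumx-x_i-x_\fp(i)$ and the resulting reflection structure, since it is exactly this symmetry that makes the per-element objective $(c_\fp(i)+c_\bp(i))/2$ a constant independent of $i$, allowing the minimum in the objective to be dropped with no further work.
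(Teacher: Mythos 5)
Your proposal is correct and follows essentially the same route as the paper's proof: establish the tiling identity $\sum_{j<_\sigma i}x_j\,c_\sigma(j)=\int_0^{x_\sigma(i)}\phi(\tau)\,d\tau$, bound the average $c_\sigma(i)$ by $\phi(x_\sigma(i))$ via monotonicity, invoke the integral inequality \eqref{eqn:limit_integral_value} for feasibility, then compute the objective by the reflection $x_\bp(i)+x_i=\sumx-x_\fp(i)$ and the pointwise identity \eqref{eqn:limit_equal_value}. The only cosmetic difference is that you handle $\fp$ and $\bp$ uniformly whereas the paper spells out $\fp$ and says ``similarly'' for $\bp$.
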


\begin{proof}[Proof of \Cref{lem:phi_feasible}]
Recall that $\sum_{i=1}^n x_i = \sumx$, and $\min_{i \in [n]} x_i > 0$.
To verify the left-most inequality of \eqref{eqn:key_comparison}, we first argue
that $(c_{\fp}(i), c_{\bp}(i))_{i=1}^n$ is a feasible solution to \ref{LP:instance_opt}. In order to see
this, observe that
\begin{equation} \label{eqn:upper_bound_sum_by_integral}
    \sum_{j < i} c_{\fp}(j) x_j = \sum_{j < i } \int_{x_{\fp}(j)} ^{x_{\fp}(j) + x_j} \phi(\tau) d \tau = \int_{0}^{x_{\fp}(i)} \phi(\tau) d\tau.
\end{equation}
On the other hand, since $\phi$ is a decreasing function (by \Cref{obs:phi_properties}),
\begin{equation} \label{eqn:upper_bound_average}
    c_{\fp}(i) = \int_{x_{\fp}(i)} ^{x_{\fp}(i) + x_i} \frac{\phi(\tau) d \tau}{x_i} \le \phi(x_{\fp}(i)).
\end{equation}
By applying \eqref{eqn:upper_bound_sum_by_integral} and \eqref{eqn:upper_bound_average}, we get that 
$
    c_{\fp}(i)  + \sum_{j < i} c_{\fp}(j) x_j 
      \le
    \phi(x_{\fp}(i)) + \int_{0}^{x_{\fp}(i)} \phi(\tau) d\tau
    \le 1,
$
where the final inequality uses \eqref{eqn:limit_integral_value} of \Cref{obs:phi_properties}.
Similarly, by the definition of
$(c_{\bp}(i))_{i=1}^n$,
$$
    c_{\bp}(i) + \sum_{j > i} c_{\bp}(j) x_j \le  \phi(x_{\bp}(i)) + \int_{0}^{x_{\bp}(i)} \phi(\tau) d\tau \le 1.
$$
Thus, $(c_{\fp}(i),c_{\bp}(i))_{i=1}^n$ is feasible, and so the leftmost inequality
of \eqref{eqn:key_comparison} is established. It remains to verify the rightmost inequality of \eqref{eqn:key_comparison}. Observe that for any $i \in [n]$, we have that $x_{\bp}(i) + x_i = \sumx - x_{\fp}(i)$. Thus,
\begin{align*}
    \frac{c_{\fp}(i) + c_{\bp}(i)}{2} &= \int_{x_{\fp}(i)}^{x_{\fp}(i) + x_i} \frac{\phi(\tau)}{2 x_i} d\tau + \int_{x_{\bp}(i)}^{x_{\bp}(i) + x_i} \frac{\phi(\tau)}{2 x_i} d\tau \\
            &=  \int_{x_{\fp}(i)}^{x_{\fp}(i) + x_i} \frac{\phi(\tau)}{2 x_i} d\tau + \int_{\sumx - x_{\fp}(i) - x_i}^{\sumx - x_{\fp}(i)} \frac{\phi(\tau)}{2 x_i} d \tau\\
            &= \frac{1}{x_i} \left(\int_{x_{\fp}(i)}^{x_{\fp}(i) + x_i} \frac{\phi(\tau) + \phi(\sumx - \tau)}{2} d\tau \right) = \frac{\exp(\sumx/2)}{1 + \exp(\sumx/2) \sumx},
\end{align*}
where the third equality applies a change of variables, and the last equality applies \eqref{eqn:limit_equal_value} of  \Cref{obs:phi_properties}. Thus, 
$
    \min_{i \in [n]} \frac{c_{\fp}(i) + c_{\bp}(i)}{2} = \frac{\exp(\sumx/2)}{1 + \exp(\sumx/2) \sumx},
$
and so the proof is complete.
\end{proof}




\Cref{lem:phi_feasible} implies \eqref{eqn:LP_explicit_lower_bound}, and so due to the discussion at the beginning of the section, this completes the proof of \Cref{thm:single_element_positive}.


\subsection{Single-unit FB-CRS Hardness Result: Proving \Cref{thm:single_element_hardness}} \label{sec:negative_result_single_item}

To prove \Cref{thm:single_element_hardness},  we fix an arbitrary $\sumx \ge 0$, and consider the input $(N, \bm{x})$ with $\sum_{i=1}^N x_i =\sumx$,
and $x_i:=\sumx/N$ for all $i \in [N]$. For convenience, we assume that $N$ is odd; that is $N = 2n + 1$ for some $n \ge 0$.
Our goal is to prove the following.
\begin{theorem} \label{thm:single_element_hardness_explicit}
Fix $\sumx \ge 0$, and $N = 2n +1$ for $n \ge 0$. If $x_i =\sumx/N$ for all $i \in [N]$, then
\begin{equation*} \label{eqn:LP_upper_bound_desired}
\LPOPT(N,\bm{x}) \le \frac{\exp(\sumx/2)}{1 + \exp(\sumx/2) \sumx} + \frac{\sumx + 2}{N}\ . 
\end{equation*}
\end{theorem}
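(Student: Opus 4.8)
The target is an upper bound on $\LPOPT(N,\bm x)$ for the uniform input $x_i = \sumx/N$, $N = 2n+1$. Since \ref{LP:instance_opt} is a maximization LP, the natural route is \emph{weak LP duality}: I would write down the dual, exhibit an explicit dual-feasible solution, and bound its objective value from above by $\frac{\exp(\sumx/2)}{1+\exp(\sumx/2)\sumx} + \frac{\sumx+2}{N}$. Concretely, the primal has variables $c_\sigma(i)$ and a scalar $\beta$ (after the reformulation mentioned before \Cref{lem:instance_optimal}); the dual will have a variable for each constraint \eqref{eqn:LP_feasible}, say $y_\sigma(i)\ge0$, and the normalization $\sum_{i,\sigma} (\text{something}) = 1$ coming from the $\beta$-constraints $(c_\fp(i)+c_\bp(i))/2 \ge \beta$. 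The dual objective will be (up to constants) $\sum_{i,\sigma} y_\sigma(i)$, to be minimized, with stationarity conditions linking $y_\sigma(i)$ to the "tail sums" $x_j$ that appear in \eqref{eqn:LP_feasible}.

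**Constructing the dual solution.** Following the technical overview, I would take the continuous limit as $N\to\infty$ as a guide: the optimal primal $\phi$ from \eqref{eqn:primal_function_definition} makes \eqref{eqn:limit_integral_value} tight on $[\sumx/2,\sumx]$, so by complementary slackness the dual weight should be supported on elements in the "back half" of each order, i.e.\ on $i$ with $x_\sigma(i) \ge \sumx/2$. I would define $y_\sigma(i)$ by discretizing the continuous dual density derived from the first-order optimality conditions of \ref{cOPT:instance_opt} — essentially $y_\sigma(i) \propto x_i \cdot \psi(x_\sigma(i))$ for an explicit weight function $\psi$ obtained by solving the dual ODE. The subtlety flagged in the overview is that the continuous dual is \emph{not} the exact limit of the finite-$N$ dual: at the midpoint index $i = n+1$ (where $x_\sigma(i) = \sumx/2$) the finite LP forces the dual to place an anomalously large mass — heuristically a Dirac-like spike — to certify feasibility near the kink of $\phi$. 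So I would define $y_\sigma$ piecewise: the smooth discretized density on the back half, plus one extra $O(1/N)$-sized correction term concentrated at (or just past) the middle index. Checking dual feasibility reduces to verifying, for each $i$ and $\sigma$, an inequality of the form $\sum_{j :\, i <_\sigma j} y_\sigma(j)\, x_i \ge (\text{coefficient of } c_\sigma(i) \text{ in the objective})$ — a discrete version of an integral inequality that $\psi$ satisfies by construction.

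**Bounding the objective.** Once feasibility holds, the dual objective is $\sum_{i,\sigma} y_\sigma(i)$, which I would estimate as a Riemann sum: the smooth part converges to $\int \psi$, which by the choice of $\psi$ equals exactly $\frac{\exp(\sumx/2)}{1+\exp(\sumx/2)\sumx}$ (matching the primal value from \Cref{lem:phi_feasible}, as it must at optimality in the limit), and the Riemann-sum discretization error plus the midpoint correction term together contribute at most $\frac{\sumx+2}{N}$. I would keep the error analysis crude — monotonicity of $\psi$ gives a telescoping bound $\big|\sum_i x_i\psi(x_\sigma(i)) - \int_0^\sumx \psi\big| \le x_{\max}\cdot(\psi(0)-\psi(\sumx))\cdot$const $= O(\sumx/N)$ — so that the constants fall out as $\sumx + 2$; the "$2$" is presumably the clean bound on the midpoint-spike contribution and the "$+x_\sigma$-endpoint" terms.

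**Main obstacle.** The hard part is \emph{guessing and then verifying the exact finite-$N$ dual}, in particular pinning down the midpoint correction so that (a) the dual is genuinely feasible for every odd $N$ (not just asymptotically), and (b) the resulting slack is clean enough to package as $\frac{\sumx+2}{N}$ rather than a messier $O(1/N)$ expression. The continuous dual is only a heuristic; because $\phi$ has a corner at $\sumx/2$ and the finite LP has no such corner available, the naive discretization is \emph{infeasible}, and the whole argument hinges on engineering the $O(1/N)$ repair term correctly. Everything else — writing the dual, the Riemann-sum estimates, weak duality — is routine bookkeeping by comparison.
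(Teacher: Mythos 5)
Your high-level plan --- weak LP duality, discretize the continuous dual density, add a correction at the middle index $i=n+1$, and control the slack by Riemann-sum estimates --- is indeed the route the paper takes. However, there is a substantive misconception in your quantitative picture that would cause the argument to fail if carried out literally. You describe the middle-index correction as "one extra $O(1/N)$-sized correction term" and assert that "the Riemann-sum discretization error plus the midpoint correction term together contribute at most $\frac{\sumx+2}{N}$." Both claims are wrong. In the paper's dual solution (see \eqref{eqn:xi_definition}--\eqref{eqn:y_definition}), the dual variable at the midpoint is $y_\fp(n+1) = \frac{\xi(n+1)}{2} + \frac12$ where $\xi(n+1) = \bigl(1 - \sumx\alpha_0\frac{N-1}{N}\bigr)N = \Theta(N)$, so $y_\fp(n+1) = \Theta(N)$, and its contribution to the objective $\sum_i \frac{y_\fp(i)+y_\bp(i)}{N}$ is $\Theta(1)$, namely $\approx 1 - \sumx\alpha_0$. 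This spike is not an error term at all; it is an essential piece of the main term.

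To see why a small correction cannot work: for $i < n+1$ the dual constraint \eqref{eqn:y_inequality} with $y_\fp(i)=0$ requires $\sum_{j>i}\frac{\sumx y_\fp(j)}{N} \ge \frac{\xi(i)}{2} = \frac{\sumx\alpha_0}{2}$. But the smooth part alone gives, as $N\to\infty$, $\sum_{j>n+1}\frac{\sumx y_\fp(j)}{N}\to\int_{\sumx/2}^{\sumx}\gamma = \frac{\sumx\alpha_0}{2} - \gamma(\sumx/2) = \frac{\sumx(\alpha_0 - \alpha_0 e^{-\sumx/2})}{2}$, which falls short by the constant $\frac{\sumx\alpha_0 e^{-\sumx/2}}{2}$. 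To close a constant deficit you need $\frac{\sumx y_\fp(n+1)}{N} = \Theta(1)$, i.e.\ $y_\fp(n+1)=\Theta(N)$. Consequently your claim that the smooth part "converges to $\int\psi$, which by the choice of $\psi$ equals exactly $\alpha_0$" is also incorrect: the smooth part contributes only $\alpha_0 - \alpha_0 e^{-\sumx/2}$, and the missing $\alpha_0 e^{-\sumx/2}$ comes from the spike. The reconciliation is algebraic, via the defining equation of $\alpha_0$,
\begin{equation*}
1 - \alpha_0\sumx = \alpha_0 e^{-\sumx/2},
\end{equation*}
which is \eqref{eqn:key_alpha_equation} in the paper; it is precisely this identity that lets the spike's $\Theta(1)$ contribution and the smooth part's $\Theta(1)$ contribution sum to exactly $\alpha_0$, with only an $O((\sumx+2)/N)$ remainder. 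Without recognizing this cancellation (and the fact that the spike is $\Theta(N)$ in the dual variable), the proof as you have sketched it would not close.
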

Due to \Cref{lem:instance_optimal}, we can then take $N \rightarrow \infty$ to establish \Cref{thm:single_element_hardness}. 
Thus, the remainder of the section is focused on proving \Cref{thm:single_element_hardness_explicit}. In order to prove it, we first take the dual of \ref{LP:instance_opt} on the 
input $(N, \bm{x})$. We reformulate it in an equivalent way that is more convenient for our purposes (see \Cref{sec:dualityDetails} for details):
\begin{align}\label{LP:dual_instance_opt_simp}
    \tag{dual-LP-SI}
	&\text{minimize} &  \sum_{i=1}^N \frac{(y_{\fp}(i) + y_{\bp}(i))}{N} \\
	&\text{subject to} & y_{\sigma}(i) + \sum_{j >_{\sigma} i} \frac{\sumx \cdot y_{\sigma}(j)}{N} - \frac{\xi(i)}{2} \ge 0 \label{eqn:y_inequality} && \forall i \in [N], \sigma \in \{\fp,\bp\}\\
	&& \sum_{i=1}^n \frac{\xi(i)}{N} \ge 1 \label{eqn:xi_inequality} &&  \\
	&&\xi(i), y_{\fp}(i), y_{\bp}(i) \ge 0  && \forall i \in [N].
\end{align}
Our goal is to construct a feasible solution $(\xi(i), y_{\fp}(i), y_{\bp}(i))_{i=1}^N$ to \ref{LP:dual_instance_opt_simp}
of value  at most $\alpha_0 + \frac{\sumx + 2}{N}$, where 
$
\alpha_0 := \frac{\exp(\sumx/2)}{1 + \exp(\sumx/2) \sumx}. 
$
By weak duality, this will imply \Cref{thm:single_element_hardness_explicit}.

We begin by defining $(\xi(i))_{i=1}^N$, which is constant
except at element $n+1$. Specifically,
\begin{equation} \label{eqn:xi_definition}
        \xi(i) = \begin{cases}
 \sumx \cdot \alpha_0 & \text{if } i \neq n +1,\\
 \left(1 - \sumx \cdot \alpha_0 \frac{N -1}{N}\right) \cdot N & \text{if } i = n +1.
\end{cases}
\end{equation}
In order to state $(y_{\fp}(i), y_{\bp}(i))_{i=1}^N$, we first define a function $\gamma: [\sumx/2,\sumx] \rightarrow [0, 1]$, where for each $z \in [\sumx/2, \sumx]$,
\begin{equation} \label{eqn:gamma_dual}
    \gamma(z):= \frac{\sumx \exp(z-\sumx/2)}{2(1 + \exp(\sumx/2) \sumx)}.
\end{equation}
The solution $(y_{\fp}(i))_{i=1}^N$ is then identically $0$ for $i < n +1$, takes on value $(1 + \xi(n+1))/2$ at $i = n +1$, and is otherwise $\gamma( \sumx \cdot i/N)$. Finally, $y_{\bp}(i) := y_{\fp}(N - (i-1))$ for each $i \in [N]$. To summarize, 
\begin{align} \label{eqn:y_definition}
           y_{\fp}(i) := \begin{cases}
0 & \text{if } i < n+1,\\
\frac{\xi(n+1)}{2} + \frac{1}{2}  & \text{if } i = n+1,\\
\gamma( \sumx i/N) & \text{if } n+1 < i \le N.
\end{cases}
&
y_{\bp}(i) :=
\begin{cases}
\gamma( \sumx - \sumx(i-1)/N) & \text{if } i < n+1,\\
\frac{\xi(n+1)}{2} + \frac{1}{2}  & \text{if } i = n+1,\\
0 & \text{if } n+1 < i \le N.
\end{cases}
\end{align}
\begin{lemma}\label{lem:dual_feasibility}
Fix $\sumx \ge 0$, $N = 2n +1$ for $n \ge 0$,
and set $x_i =\sumx/N$ for all $i \in [N]$. Then, $(\xi(i), y_{\fp}(i), y_{\bp}(i))_{i=1}^N$ as defined in \eqref{eqn:xi_definition} and \eqref{eqn:y_definition} is a feasible solution to \ref{LP:dual_instance_opt_simp} for which
$$
    \sum_{i=1}^N \frac{y_{\fp}(i) + y_{\bp}(i)}{N} \le \alpha_0 +  \frac{\sumx + 2}{N}.
$$
\end{lemma}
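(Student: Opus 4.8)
The plan is to verify each constraint of \ref{LP:dual_instance_opt_simp} directly and then bound the objective, leaning on the symmetry hard‑wired into \eqref{eqn:xi_definition}--\eqref{eqn:y_definition} together with one closed‑form identity for the function $\gamma$ of \eqref{eqn:gamma_dual}. First I would dispatch the trivial parts. Nonnegativity of $y_{\fp},y_{\bp}$ is immediate since $\gamma\ge 0$ and $\xi(n+1)\ge 0$; nonnegativity of $\xi$ reduces to $\xi(n+1)\ge 0$, i.e.\ to $\sumx\alpha_0\,\tfrac{N-1}{N}\le 1$, which holds because $\sumx\alpha_0=\tfrac{\sumx e^{\sumx/2}}{1+\sumx e^{\sumx/2}}<1$. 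For the normalization constraint \eqref{eqn:xi_inequality}, a one‑line sum shows the $N-1$ off‑center coordinates contribute $(N-1)\sumx\alpha_0$ and that $\xi(n+1)$ was defined precisely so that $\sum_{i=1}^N\xi(i)=N$, so the constraint holds with equality. Finally, since $\xi$ is symmetric about index $n+1$ and $y_{\bp}(i)=y_{\fp}(N-(i-1))$, reversing the roles $\fp\leftrightarrow\bp$ carries the $\bp$‑copies of \eqref{eqn:y_inequality} bijectively onto the $\fp$‑copies, so it suffices to verify \eqref{eqn:y_inequality} for $\sigma=\fp$.

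The heart of the argument is the identity $\gamma$ is engineered to satisfy: $\gamma(z)+\int_z^{\sumx}\gamma(\tau)\,d\tau=\tfrac{\sumx\alpha_0}{2}$ for all $z\in[\sumx/2,\sumx]$. (This is the continuous analogue of \eqref{eqn:y_inequality}; it is the unique solution of $\gamma'=\gamma$ with $\gamma(\sumx)=\tfrac{\sumx\alpha_0}{2}$, and at $z=\sumx/2$ it reads $\tfrac{\sumx}{2}(1-\sumx\alpha_0)+\int_{\sumx/2}^{\sumx}\gamma=\tfrac{\sumx\alpha_0}{2}$.) Because $\gamma$ is increasing on $[\sumx/2,\sumx]$, each right‑endpoint term obeys $\tfrac{\sumx}{N}\gamma(\sumx j/N)\ge\int_{\sumx(j-1)/N}^{\sumx j/N}\gamma$, so tail sums $\tfrac{\sumx}{N}\sum_{j=k}^{N}\gamma(\sumx j/N)$ dominate $\int_{\sumx(k-1)/N}^{\sumx}\gamma$. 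Given these, I would check \eqref{eqn:y_inequality} for $\sigma=\fp$ in the three ranges of $i$. For $i>n+1$ the left side is $\gamma(\sumx i/N)+\tfrac{\sumx}{N}\sum_{j=i+1}^{N}\gamma(\sumx j/N)-\tfrac{\sumx\alpha_0}{2}\ \ge\ \gamma(z)+\int_z^{\sumx}\gamma-\tfrac{\sumx\alpha_0}{2}=0$ with $z=\sumx i/N$, using the identity and the Riemann bound. For $i=n+1$ the $\tfrac{\xi(n+1)}{2}$ terms cancel and the left side equals $\tfrac12+\tfrac{\sumx}{N}\sum_{j=n+2}^{N}\gamma(\sumx j/N)\ge\tfrac12>0$, which is exactly the reason for the stray $+\tfrac12$ inside $y_{\fp}(n+1)$. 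For $i<n+1$ one has $y_{\fp}(i)=0$, $\xi(i)=\sumx\alpha_0$, and the relevant sum $\tfrac{\sumx}{N}\bigl(y_{\fp}(n+1)+\sum_{j=n+2}^{N}\gamma(\sumx j/N)\bigr)$ is the same for all such $i$; substituting $\tfrac{\sumx}{N}y_{\fp}(n+1)=\tfrac{\sumx}{2}\bigl(1-\sumx\alpha_0\tfrac{N-1}{N}\bigr)+\tfrac{\sumx}{2N}$ and applying the $z=\sumx/2$ identity together with the Riemann bound (the deficit $\int_{\sumx/2}^{\sumx(n+1)/N}\gamma$ is at most $\tfrac{\sumx}{2N}\gamma(\sumx)=\tfrac{\sumx^2\alpha_0}{4N}$), the leftover collects into the manifestly nonnegative quantity $\tfrac{\sumx\alpha_0}{4}+\tfrac12$: the inflation factor $\tfrac{N-1}{N}$ in $\xi(n+1)$ and the extra $+1$ inside $y_{\fp}(n+1)$ were tuned precisely to outweigh this Riemann deficit.

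For the objective, symmetry again gives $\sum_{i=1}^N\tfrac{y_{\fp}(i)+y_{\bp}(i)}{N}=\tfrac2N\sum_{i=1}^N y_{\fp}(i)=\tfrac{\xi(n+1)}{N}+\tfrac1N+\tfrac2N\sum_{j=n+2}^{N}\gamma(\sumx j/N)$. Bounding the right‑endpoint sum from above by a left‑endpoint comparison plus the boundary term, $\tfrac{\sumx}{N}\sum_{j=n+2}^{N}\gamma(\sumx j/N)\le\tfrac{\sumx}{N}\gamma(\sumx)+\int_{\sumx/2}^{\sumx}\gamma$, and substituting $\tfrac{\xi(n+1)}{N}=1-\sumx\alpha_0\tfrac{N-1}{N}$, the closed forms collapse everything to $\alpha_0+\tfrac{2\sumx\alpha_0+1}{N}$, and one checks $2\sumx\alpha_0+1\le\sumx+2$ (equivalently $\sumx(2\alpha_0-1)\le1$, i.e.\ $\sumx(1-\sumx)e^{\sumx/2}\le1+\sumx$, which is clear for $\sumx\ge1$ and follows from $\sumx(1-\sumx)\le\tfrac14$ on $[0,1]$). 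I expect the $i<n+1$ feasibility check to be the main obstacle: the purely continuous solution $\gamma$ is in fact \emph{infeasible} at finite $N$, since the right‑Riemann sum of $\gamma$ over the first half undershoots $\int_{\sumx/2}^{\sumx}\gamma$, and the delicate point is to confirm that funneling this entire deficit into the single blown‑up coordinate $\xi(n+1)$ (and the matching $y_{\fp}(n+1),y_{\bp}(n+1)$) exactly repairs feasibility while costing only $O(1/N)$ in the objective — so the real work is bookkeeping the sign of every $1/N$ error term, whereas all remaining cases reduce to the one clean ODE identity.
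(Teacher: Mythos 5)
Your proof is correct and follows essentially the same architecture as the paper's: the same nonnegativity and normalization checks, the same $\fp\leftrightarrow\bp$ symmetry reduction, the same ODE identity $\gamma(z)+\int_z^{\sumx}\gamma = \tfrac{\sumx\alpha_0}{2}$, the same right‑endpoint Riemann comparisons, and the same case split around the central index $n+1$. Where you diverge is in the error bookkeeping: you bound the telescoping overshoot by $\tfrac{\sumx}{N}\gamma(\sumx)=\tfrac{\sumx^2\alpha_0}{2N}$ rather than the paper's cruder $\tfrac{\sumx}{N}$, which yields the sharper intermediate bound $\alpha_0+\tfrac{2\sumx\alpha_0+1}{N}$ and then requires the extra elementary check $\sumx(2\alpha_0-1)\le 1$, which you verify correctly. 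This refinement is in fact useful: when combining \eqref{eqn:sum_dual_value} with \eqref{eqn:integral_dual_value}, the paper silently drops a $\tfrac1N$ term (the sum $\tfrac1N+\tfrac2N$ becomes $\tfrac2N$), so the paper's own derivation, carried through faithfully, only gives $\alpha_0+\tfrac{\sumx\alpha_0+3}{N}\le\alpha_0+\tfrac{\sumx+3}{N}$, slightly weaker than the lemma's statement (though still sufficient for \Cref{thm:single_element_hardness} after taking $N\to\infty$). Your tighter Riemann estimate closes the lemma's claimed bound exactly.
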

In order to prove \Cref{lem:dual_feasibility},
we need the following properties of $\gamma$:
\begin{proposition}[proof in \Cref{pf:obs:gamma_properties}] \label{obs:gamma_properties}
Function $\gamma: [\sumx/2,\sumx] \rightarrow [0, 1]$ defined in \eqref{eqn:gamma_dual} satisfies the following:
\begin{enumerate}
       \item $\gamma$ is $1$-Lipschitz and increasing. \label{item:gamma_increasing}
    \item For each $z \in [\sumx/2, \sumx]$, 
    \begin{equation} \label{item:gamma_equality}
        \gamma(z) + \int_{z}^{\sumx} \gamma(\tau) d\tau = \frac{\sumx \cdot \alpha_0}{2}.
    \end{equation}
\end{enumerate}
\end{proposition}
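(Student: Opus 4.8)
The plan is to verify both parts by elementary calculus, exploiting the one feature that makes $\gamma$ convenient: up to a positive multiplicative constant it is the exponential function $\tau\mapsto e^{\tau}$, hence it equals its own derivative, $\gamma'(z)=\gamma(z)$, which one reads off directly from \eqref{eqn:gamma_dual} since $\frac{d}{dz}e^{z-\sumx/2}=e^{z-\sumx/2}$.

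For Property~\ref{item:gamma_increasing}: from $\gamma'(z)=\gamma(z)>0$ on $[\sumx/2,\sumx]$ we immediately get that $\gamma$ is increasing. For the $1$-Lipschitz claim it then suffices to bound $\gamma'(z)=\gamma(z)$ by $1$; since $\gamma$ is increasing, its maximum on the interval is $\gamma(\sumx)=\frac{\sumx e^{\sumx/2}}{2(1+e^{\sumx/2}\sumx)}$, and dropping the $+1$ in the denominator gives $\gamma(\sumx)\le \frac{\sumx e^{\sumx/2}}{2e^{\sumx/2}\sumx}=\tfrac12\le 1$. This simultaneously confirms $\gamma$ maps into $[0,1]$ (it is positive and bounded above by $1/2$), so the stated codomain is legitimate.

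For Property~\eqref{item:gamma_equality}: I would set $G(z):=\gamma(z)+\int_z^{\sumx}\gamma(\tau)\,d\tau$, and use the fundamental theorem of calculus together with $\gamma'=\gamma$ to compute $G'(z)=\gamma'(z)-\gamma(z)=0$, so $G$ is constant on $[\sumx/2,\sumx]$. Evaluating at $z=\sumx$ gives $G(\sumx)=\gamma(\sumx)=\frac{\sumx e^{\sumx/2}}{2(1+e^{\sumx/2}\sumx)}=\frac{\sumx}{2}\alpha_0$ by the definition of $\alpha_0$, so $G(z)\equiv \frac{\sumx\alpha_0}{2}$, which is exactly \eqref{item:gamma_equality}. (Equivalently, one can just integrate $\gamma$ explicitly, $\int_z^{\sumx}\gamma(\tau)\,d\tau=\frac{\sumx}{2(1+e^{\sumx/2}\sumx)}\big(e^{\sumx/2}-e^{z-\sumx/2}\big)$, and add $\gamma(z)$ to see the $e^{z-\sumx/2}$ terms cancel.)

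There is no substantive obstacle here: the statement is a routine verification, and the only minor points to mention are the degenerate case $\sumx=0$, where $[\sumx/2,\sumx]$ collapses to a point and everything is vacuous, and the trivial sign inequality $-\sumx e^{\sumx/2}\le 2$ underpinning the Lipschitz bound.
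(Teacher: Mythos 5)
Your proof is correct and follows essentially the same route as the paper's: both exploit $\gamma'(z)=\gamma(z)$ to deduce monotonicity and the $1$-Lipschitz bound via $\max\gamma'=\gamma(\sumx)=\sumx\alpha_0/2\le 1$, and both verify \eqref{item:gamma_equality} by noting that the left-hand side has zero derivative and matches $\gamma(\sumx)=\sumx\alpha_0/2$ at the right endpoint. Your explicit bound $\gamma(\sumx)\le 1/2$ (by dropping the $1$ in the denominator) is a slightly more careful rendering of the same estimate.
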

\begin{remark}
Since $\xi(i) =  \frac{\sumx \cdot \alpha_0}{2}$, except for at $i = n +1$, property \eqref{item:gamma_equality} corresponds to constraint \eqref{eqn:y_inequality} as $N \rightarrow \infty$. Since $(\gamma(z))_{\sumx/2 < z \le \sumx}$ (respectively, $(\gamma(\sumx-z))_{0 \le z < \sumx/2}$) is the limit of $(y_{\fp}(i))_{i > n +1}$ (respectively, $(y_{\bp}(i))_{i < n+1}$),
this suggests \eqref{eqn:y_inequality} holds for $i > n +1$ (respectively, $i < n+ 1$).
That being said, at $i = n+1$, $y_{\fp}(n+1) = y_{\bp}(n+1) \rightarrow (1- \alpha_0) N/2$ as $N \rightarrow \infty$, yet $\gamma(\sumx/2)$ is a constant.
\end{remark}

\begin{proof}[Proof of \Cref{lem:dual_feasibility}] 
We begin by verifying the feasibility of the solution to \ref{LP:dual_instance_opt_simp}. Clearly,
$\sum_{i=1}^{N} \frac{\xi(i)}{N} = \frac{(N-1) \sumx \alpha_0}{N} + (1 - \sumx \alpha_0 \frac{N-1}{N}) =1$,
so \eqref{eqn:xi_inequality} is satisfied (at equality).
We next verify \eqref{eqn:y_inequality} for $\sigma = \fp$.
If $i = n +1$, then this is immediate, since $y_{\fp}(n+1)\ge \xi(n+1)/2$.
For $i \ge n+ 2$, recall that $y_{\fp}(i) := \gamma(\sumx i/N)$.
Now, since $\gamma$ is an increasing
function, 
\begin{align} \label{eqn:gamma_lower_integral}
   \sum_{j = i}^N \frac{y_{\fp}(j)}{N} = \sum_{j = i}^N \frac{\sumx \cdot \gamma(\sumx j/N)}{N} \ge \int_{\sumx (i-1)/N}^{\sumx} \gamma(\tau) d\tau  ,
\end{align}
where we've interpreted $\sum_{j = i}^N \frac{\sumx \cdot \gamma(\sumx j/N)}{N}$ as a right endpoint Riemann sum.
By applying \eqref{eqn:gamma_lower_integral},
\begin{equation*} 
    y_{\fp}(i) + \sum_{j =i +1}^N \frac{\sumx \cdot y_{\fp}(j)}{N} \ge \gamma(\sumx i/N) + \int_{\sumx i/N}^{\sumx} \gamma(\tau) d\tau = \frac{\sumx \alpha_0}{2} = \frac{\xi(i)}{2},
\end{equation*}
where the final equality applies \eqref{item:gamma_equality} of \Cref{obs:gamma_properties}, and the definition of $\xi(i)$ for $i > n+1$. Thus, \eqref{eqn:y_inequality} is satisfied for $i > n +1$. Finally, for $i < n$, we know that $y_{\fp}(i) = 0$, and $\xi(i) = \sumx \alpha_0$. On the other hand, observe that $\alpha_0$ satisfies
\begin{equation} \label{eqn:key_alpha_equation}
     1 - \alpha_0 \sumx = \alpha_0 e^{-\sumx/2},
\end{equation}
and so  $\frac{\sumx(1 - \alpha_0 \sumx)}{2} = \frac{\alpha_0 \sumx}{2} e^{-\sumx/2} = \gamma(\sumx/2)$. 
More, since $\frac{y_{\fp}(n+1)}{N} \ge \frac{1 - \alpha_0 \sumx}{2} + \frac{1}{2N}$,
we get that
\begin{equation} \label{eqn:correction_term_y}
    \frac{\sumx \cdot y_{\fp}(n+1)}{N} \ge  \gamma(\sumx/2) + \frac{\sumx}{2N} \ge \gamma(\sumx/2) + \int_{\sumx/2}^{\sumx(n+1)/N} \gamma(\tau) d\tau,
\end{equation}
where the final inequality uses that $\gamma(\tau) \le 1$ on $[\sumx/2, \sumx (n+1)/N]$.
By applying \eqref{eqn:correction_term_y} followed by \eqref{eqn:gamma_lower_integral},
\begin{align*}
\frac{\sumx \cdot y_{\fp}(n+1)}{N} + \sum_{j = n+ 2}^N \frac{\sumx \cdot y_{\fp}(j)}{N} &\ge \gamma(\sumx/2) + \int_{\sumx/2}^{\sumx(n+1)/N} \gamma(\tau) d\tau + \sum_{j=n+2}^{N} \frac{\sumx y_{\fp}(j)}{N} \\
&\ge \gamma(\sumx/2) + \int_{\sumx/2}^{\sumx(n+1)/N} \gamma(\tau) d\tau+ \int_{\frac{\sumx(n+1)}{N}}^{\sumx} \gamma(\tau) d\tau \\
&= \gamma(\sumx/2) + \int_{\frac{\sumx}{2}}^{\sumx} \gamma(\tau) d\tau = \frac{\sumx \alpha_0}{2} = \frac{\xi(i)}{2}, 
\end{align*}
where the last two equalities use \Cref{obs:gamma_properties} and the definition of $\xi(i)$ for $i < n$.
Thus, \eqref{eqn:y_inequality} holds for $i < n$. Due to the symmetry
of $(y_{\bp}(i))_{i=1}^N$, we know that \eqref{eqn:y_inequality} also holds for $\sigma = \bp$.

It remains to bound the value of the feasible solution. Now, using that $y_{\fp}(i) = y_{\bp}(N - (i-1))$ for $i \in [N]$,
and $y_{\fp}(i) = 0$ for $i < n$,
\begin{align}
\sum_{i=1}^N \frac{(y_{\fp}(i) + y_{\bp}(i))}{N} &= \frac{2 y_{\fp}(n+1)}{N} + 2 \sum_{i= n+2}^N \frac{y_{\fp}(i)}{N} \notag \\
&= \left(1 - \sumx \cdot \alpha_0 \frac{N -1}{N}\right) + \frac{1}{N} +  \frac{2}{\sumx} \sum_{i= n+2}^N \frac{ \sumx \cdot y_{\fp}(i)}{N} \label{eqn:sum_dual_value},
\end{align}
where the second equality uses $y_{\fp}(n+1) = y_{\bp}(n+1) = \left(1 - \sumx \cdot \alpha_0 \frac{(N -1)}{N}\right) \frac{N}{2} + \frac{1}{2}$.
Since $\gamma$ is $1$-Lipschitz by \Cref{obs:gamma_properties},
\begin{align}
\sum_{i= n+2}^N \frac{\sumx \cdot y_{\fp}(i)}{N} &= \sum_{i = n+2}^N \frac{\sumx \cdot \gamma(\sumx (i-1)/N)}{N} + \sum_{i = n+2}^N \left( \frac{\sumx \cdot \gamma(\sumx i/N)}{N}- \frac{\sumx \cdot \gamma(\sumx (i-1)/N)}{N} \right) \notag \\ 
        &\le \sum_{i = n+2}^N \frac{\sumx \cdot \gamma(\sumx (i-1)/N)}{N} + \frac{\sumx}{N} \notag \\ 
        &\le \int_{\sumx/2}^{s}  \gamma(\tau) d\tau + \frac{\sumx}{N} \label{eqn:integral_dual_value},
\end{align}
where the final inequality uses that $\gamma$ is increasing,
and interprets $\sum_{i = n+2}^N \frac{\sumx \cdot \gamma(\sumx (i-1)/N)}{N}$
as a left endpoint Riemann sum. Combining \eqref{eqn:sum_dual_value} with \eqref{eqn:integral_dual_value},
and using $\frac{2}{\sumx}\int_{\sumx/2}^{\sumx} \gamma(\tau) d\tau = \alpha_0 - \alpha_0 e^{-\sumx/2}$ due to \Cref{obs:gamma_properties},
\begin{align*}
    \sum_{i=1}^N \frac{(y_{\fp}(i) + y_{\bp}(i))}{N} &\le 1 - \sumx \cdot \alpha_0 \frac{N -1}{N} + \frac{2}{\sumx}\int_{\sumx/2}^{\sumx}  \gamma(\tau) d\tau + \frac{2}{N} \\
&= 1 - \sumx \cdot \alpha_0 \frac{N -1}{N} + \alpha_0 - \alpha_0 e^{-\sumx/2} + \frac{2}{N}\\
&= \alpha_0 + (1 - \sumx \cdot \alpha_0 - \alpha_0 e^{-\sumx/2}) + \frac{\sumx \cdot \alpha_0 + 2}{N} \le \alpha_0 + \frac{\sumx + 2}{N}, 
\end{align*}
where the final inequality uses \eqref{eqn:key_alpha_equation} and that $\alpha_0 \le 1$. The proof is thus complete.
\end{proof}

\section{Details of Knapsack FB-CRS Results} \label{sec:knapsack}

\subsection{Knapsack FB-CRS Positive Result: Proving \Cref{thm:knapsack_postive}} \label{sec:knapsackPositive}

Given a knapsack input $(n, (F_i)_{i=1}^n)$, recall that $F_i$ is a distribution on $[0,1] \cup \{\infty\}$, and $S_i \sim F_i$ is the random size of element $i$. If $\mu_{i} := \mb{E}[S_i \cdot \bI(S_i < \infty)]$ for each $i \in [n]$, then we may assume
that $\sum_{i=1}^n \mu_i = 1$ and $\mu_i > 0$ for each $i \in [n]$. 

Our high level approach to proving \Cref{thm:knapsack_postive} is closely related to how we designed our single-unit FB-CRS from \Cref{sec:single_unit}. Specifically, let $(c_{\fp}(i), c_{\bp}(i))_{i=1}^n$ be a collection of probabilities. Our goal
is to define an FB-CRS with respect to $(c_{\fp}(i), c_{\bp}(i))_{i=1}^n$, 
such that if $A_i$ denotes the indicator random variable for the event that $i$ is accepted, then 
$
\Pr[A_i \mid S_i = s, \rp = \sigma] = c_{\sigma}(i), 
$
for each $\sigma \in \{\fp, \bp\}$, $i \in [n]$, and $s \in [0,1]$.
By averaging over
$\rp$, this will imply a selection guarantee of $\min_{1 \le i \le n} (c_{\fp}(i) + c_{\bp}(i))/2$ is attainable. Unlike the single-unit setting, our probabilities will \textit{not} come from an LP. However, we still must ensure that $(c_{\fp}(i), c_{\bp}(i))_{i=1}^n$ satisfy certain inequalities. For convenience, we denote $i_1 :=\sigma^{-1}(1) \in \{1,n\}$
to be the first element with respect to $\sigma$:
\begin{definition} \label{def:knapsack_feasible}
We refer to probabilities $(c_{\fp}(i), c_{\bp}(i))_{i=1}^n$ as \textit{feasible} for $(n, (F_i)_{i=1}^n)$ provided:
\begin{enumerate}
            \item For each $1 \le i \le n-1$, $c_{\fp}(i+1) \le c_{\fp}(i)$,
    and $c_{\bp}(i) \le c_{\bp}(i+1)$.

    \item For each $i \in [n]$, and $\sigma\in\{\fp,\bp\}$:
    \begin{align}
       c_{\sigma}(i) &\le 1 - c_{\sigma}(i_1) - \sum_{j <_{\sigma} i} c_{\sigma}(j) \cdot \mu_{j}; \label{eqn:knapsack_constraint_easy}\\
        c_{\sigma}(i) &\le 1 - 2 \sum_{j <_{\sigma} i} c_{\sigma}(j) \cdot \mu_{j} - c_{\sigma}(i_1) \cdot \exp \left( \frac{-2}{c_{\sigma}(i_1)} \sum_{j <_{\sigma} i} c_{\sigma}(j) \cdot \mu_j \right) \label{eqn:knapsack_constraint_hard}.
    \end{align}
\end{enumerate}
    
\end{definition}

We now define our FB-CRS (\Cref{alg:knapsack_CRS}, formally written on the next page)  with respect to an arbitrary choice of feasible probabilities. Afterwards, we shall construct a choice of feasible probabilities that implies the claimed selection guarantee of \Cref{thm:knapsack_postive}. This second step is analogous to the approach taken in \Cref{sec:positive_result_single_item}.



Let us condition on $\rp = \sigma$ for $\sigma \in \{\fp, \bp\}$ and $S_i = s_i \in [0,1]$ for $i \in [n]$. Our FB-CRS is defined recursively with respect to the permutation $\sigma$. That is, assuming we've defined the algorithm for all elements $j <_{\sigma} i$, we extend its definition to the current arrival $i$. In order to do so, denote $\con_{\sigma}(i) := \sum_{j <_{\sigma} i} S_j \cdot A_j$, where $\con_{\fp}(1) :=0$ and $\con_{\bp}(n):=0$ (recall that
$A_j$ is an indicator random variable for the event that $j$ is accepted). Observe that since $\con_{\sigma}(i)$ depends on the elements $j <_{\sigma} i$,
the probabilities $\Pr[\con_{\sigma}(i) =0 \mid \rp = \sigma]$ and $\Pr[0 < \con_{\sigma}(i) \le 1 - s_i \mid \rp = \sigma]$ are well-defined.  Further note that $\Pr[\con_{\sigma}(i_1) =0 \mid \rp = \sigma]=1$ and $\Pr[0 < \con_{\sigma}(i_1) \le 1 - s_{i_1} \mid \rp = \sigma]=0$.

Our FB-CRS will again use a random bit $B_{\sigma}(i)$ to decide whether to accept $i$; however this bit will now depend on the value of $\con_{\sigma}(i)$ in the current execution of the FB-CRS. Specifically, if $0 < \con_{\sigma}(i) \le 1 - s_i$,
then 
\begin{equation} \label{eqn:knapsack_bit_optimal_step}
B_{\sigma}(i) \sim \Ber\left(\min\left(1, \cfrac{c_{\sigma}(i)}{\Pr[0 < \con_{\sigma}(i) \le 1 - s_i \mid \rp = \sigma]} \right) \right).
\end{equation}
Else if $\con_{\sigma}(i) =0$ and $c_{\sigma}(i) > \Pr[0 < \con_{\sigma}(i) \le 1 - s_i \mid \rp = \sigma]$, then
\begin{equation} \label{eqn:knapsack_bit_suboptimal_step}
B_{\sigma}(i) \sim \Ber\left( \min\left(1, \cfrac{c_{\sigma}(i) - \Pr[0 < \con_{\sigma}(i) \le 1 - s_i \mid \rp = \sigma]}{\Pr[\con_{\sigma}(i) =0 \mid \rp = \sigma]}\right) \right).
\end{equation}
Otherwise, we pass on $i$ (i.e., $B_{\sigma}(i) =0$).

As we shall argue in \Cref{lem:implied_by_hypothesis}, if a certain induction hypothesis holds,
then the minimum in \eqref{eqn:knapsack_bit_suboptimal_step} is unnecessary, and so
\eqref{eqn:knapsack_bit_optimal_step} and \eqref{eqn:knapsack_bit_suboptimal_step} are calibrated to ensure that 
\begin{equation} \label{eqn:knapsack_selection_rate}
\Pr[A_i \mid S_i = s_i, \rp = \sigma] = c_{\sigma}(i).
\end{equation}
Roughly speaking, we maintain this induction hypothesis for the next arriving element by prioritizing the acceptance of $i$ on executions when $\con_{\sigma}(i) > 0$, and $\con_{\sigma}(i) + s_i \le 1$.  More precisely, if $c_{\sigma}(i) \le \Pr[0 < \con_{\sigma}(i) \le 1 - s_i \mid \rp = \sigma]$, then we accept $i$ only if $0 < \con_{\sigma}(i) \le 1- s_i$. On the other hand, if $c_{\sigma}(i) > \Pr[0 < \con_{\sigma}(i) \le 1 - s_i \mid \rp = \sigma]$ then we greedily accept $i$ when $0 < \con_{\sigma}(i) \le 1 - s_i$, and otherwise accept it when $\con_{\sigma}(i) =0$ only as much as needed to ensure that \eqref{eqn:knapsack_selection_rate} holds.

\begin{algorithm}
\caption{Knapsack FB-CRS} 
\label{alg:knapsack_CRS}
\begin{algorithmic}[1]
\Require knapsack input $(n, (F_i)_{i=1}^n)$ which satisfies $\sum_{i=1}^n \mu_i \le 1$.
\Ensure a subset of elements $\scr{I} \subseteq [n]$ with $\sum_{i \in \scr{I}} S_i \le 1$.
\State $\scr{I} \leftarrow \emptyset$.
\State Observe realized permutation $\rp$ and call it $\sigma$
\For{$i \in [n]$ arriving in increasing order of $\sigma$ with $S_i = s_i \in [0,1]$}
\State Set $\con_{\sigma}(i) := \sum_{j <_{\sigma} i} S_j \cdot A_j$.
\State Based on the algorithm's execution on the elements $j <_{\sigma} i$, compute $\Pr[\con_{\sigma}(i) =0 \mid \rp = \sigma]$ and $\Pr[0 < \con_{\sigma}(i) \le 1 - s_i \mid \rp = \sigma]$.

\If{$0 < \con_{\sigma}(i) \le 1 - s_i$}
\State Draw $B_{\sigma}(i) \sim \Ber\left(\min\left(1, \cfrac{c_{\sigma}(i)}{\Pr[0 < \con_{\sigma}(i) \le 1 - s_i \mid \rp = \sigma]} \right) \right)$.

\ElsIf{$\con_{\sigma}(i) = 0$ and $c_{\sigma}(i) > \Pr[0 < \con_{\sigma}(i) \le 1 - s_i \mid \rp = \sigma]$}
\State Draw $B_{\sigma}(i) \sim \Ber\left( \min\left(1, \cfrac{c_{\sigma}(i) - \Pr[0 < \con_{\sigma}(i) \le 1 - s_i \mid \rp = \sigma]}{\Pr[\con_{\sigma}(i) =0 \mid \rp = \sigma]}\right) \right)$. \label{line:second_bernoulli}

\Else
\State $B_{\sigma}(i) =0$.

\EndIf

\If{$B_{\sigma}(i) =1$} $I \leftarrow I \cup \{i\}$. \Comment{If $B_{\sigma}(i)=1$, then $s_i + \con_{\sigma}(i) \le 1$.}

\EndIf

\EndFor
\State \Return $\scr{I}$.
\end{algorithmic}
\end{algorithm}
\begin{remark}
In \Cref{alg:knapsack_CRS}, computing $\Pr[\con_{\sigma}(i) =0 \mid \rp = \sigma]$ and $\Pr[0 < \con_{\sigma}(i) \le 1 - s_i \mid \rp = \sigma]$ exactly requires tracking exponentially many scenarios. However, this can be avoided by using random sampling in a similar way as discussed in \Cref{rem:historySampling}.  One could also use a discretization argument for knapsack \citep{jiang2022tight}.




\end{remark}

\begin{theorem}\label{thm:feasible_to_selection}
Suppose \Cref{alg:knapsack_CRS} is defined using feasible selection values $(c_{\fp}(i),c_{\bp}(i))_{i=1}^n$ for $(n,(F_i)_{i=1}^n$. Then, for each $\sigma \in \{\fp,\bp\}$, $i \in [n]$ 
and $s_i \in [0,1]$,
$
\Pr[A_i \mid S_i = s_i, \rp = \sigma] = c_{\sigma}(i).
$
\end{theorem}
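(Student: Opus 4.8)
The statement is exactly the claim that Algorithm~\ref{alg:knapsack_CRS} realizes each element's prescribed conditional acceptance probability, under the assumption that $(c_{\fp}(i),c_{\bp}(i))_{i=1}^n$ is feasible in the sense of Definition~\ref{def:knapsack_feasible}. The natural route is induction on the arrival position with respect to the fixed order $\sigma$; since $\sigma$ is fixed throughout, I would condition on $\Lambda=\sigma$ from the start and suppress it from the notation. The induction hypothesis should be strong enough to control the \emph{distribution} of the load $\con_{\sigma}(i)$ seen by $i$, not just the marginal acceptance rates $\Pr[A_j]=c_\sigma(j)$ for $j<_\sigma i$. Concretely, I would carry the hypothesis that for the element $i$ about to arrive,
\[
\Pr[\con_{\sigma}(i) = 0] \;=\; 1 - \sum_{j <_\sigma i} c_\sigma(j)\mu_j - \Big(\text{something}\Big), \qquad \Pr[0 < \con_{\sigma}(i) \le 1 - s_i] \;\ge\; c_\sigma(i),
\]
i.e.\ that enough probability mass sits in the "good" region where $i$ can be accepted without violating the knapsack. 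The exact form of the auxiliary hypothesis is what the two feasibility inequalities \eqref{eqn:knapsack_constraint_easy}--\eqref{eqn:knapsack_constraint_hard} are engineered to preserve: \eqref{eqn:knapsack_constraint_easy} should be the crude bound guaranteeing $\Pr[\con_\sigma(i)\le 1-s_i]$ stays large, while \eqref{eqn:knapsack_constraint_hard}, with its $\exp(-2\cdot)$ term, should be the sharp bound that lets us lower bound $\Pr[0<\con_\sigma(i)\le 1-s_i]$ by $c_\sigma(i)$ using a second-moment / union-type argument against the event that the load has grown too large. This is presumably the content of the referenced ``\Cref{lem:implied_by_hypothesis}''.

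**Key steps, in order.** (1) \emph{Base case.} For $i = i_1 = \sigma^{-1}(1)$ we have $\con_\sigma(i_1)=0$ deterministically, so $\Pr[0<\con_\sigma(i_1)\le 1-s_{i_1}]=0$ and the algorithm falls into branch~\eqref{eqn:knapsack_bit_suboptimal_step}; since $\Pr[\con_\sigma(i_1)=0]=1$ and $c_\sigma(i_1)\le 1$, the Bernoulli parameter is exactly $c_\sigma(i_1)$ (the $\min$ with $1$ is inactive because $c_\sigma(i_1)\le 1$), giving $\Pr[A_{i_1}\mid S_{i_1}=s_{i_1}]=c_\sigma(i_1)$, and the auxiliary invariant holds trivially. (2) \emph{Inductive step --- the $\min$'s are inactive.} Assuming the invariant for $i$, show that in branch~\eqref{eqn:knapsack_bit_optimal_step} we have $c_\sigma(i)\le \Pr[0<\con_\sigma(i)\le 1-s_i]$ (so the $\min$ is redundant), and in branch~\eqref{eqn:knapsack_bit_suboptimal_step} we have $c_\sigma(i)-\Pr[0<\con_\sigma(i)\le 1-s_i]\le \Pr[\con_\sigma(i)=0]$ (again redundant). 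These two facts are precisely where \eqref{eqn:knapsack_constraint_easy} and \eqref{eqn:knapsack_constraint_hard}, together with the monotonicity in part~1 of Definition~\ref{def:knapsack_feasible}, get used; I would extract this as a separate lemma ("\Cref{lem:implied_by_hypothesis}"). (3) \emph{Acceptance rate.} With the $\min$'s inactive, sum the acceptance probability over the two relevant events: $i$ is accepted with probability $\Pr[0<\con_\sigma(i)\le 1-s_i]\cdot \tfrac{c_\sigma(i)}{\Pr[0<\con_\sigma(i)\le 1-s_i]}$ from the first branch plus $\Pr[\con_\sigma(i)=0]\cdot\tfrac{c_\sigma(i)-\Pr[0<\con_\sigma(i)\le 1-s_i]}{\Pr[\con_\sigma(i)=0]}$ from the second, which telescopes to $c_\sigma(i)$; note this does not depend on $s_i$, as required. (One must handle the degenerate case $c_\sigma(i)\le \Pr[0<\con_\sigma(i)\le 1-s_i]$, where only the first branch fires --- but then the algorithm never uses branch~\eqref{eqn:knapsack_bit_suboptimal_step} and accepts with probability exactly $c_\sigma(i)$ directly.) (4) \emph{Re-establish the invariant for the next element.} Using $\con_\sigma(i+1) = \con_\sigma(i) + S_i A_i$ and the law of total probability, compute $\Pr[\con_\sigma(i+1)=0]$ and lower-bound $\Pr[0<\con_\sigma(i+1)\le 1-s_{i+1}]$; the first is an equality update, and the second requires showing the newly-accepted mass $S_i A_i$ lands in the good region often enough --- this is where the $\exp(-2\cdot)$ term in \eqref{eqn:knapsack_constraint_hard} is consumed, via a bound on $\Pr[\con_\sigma(i) > 1 - s_{i+1}]$ using that the accepted loads are conditionally light (each $S_j\le 1$ and their conditional expectations are $c_\sigma(j)\mu_j$), plus the prioritization rule that pushes acceptances onto sample paths where $\con_\sigma>0$.

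**Main obstacle.** The delicate part is step~(4), specifically pinning down the right auxiliary invariant and verifying it is preserved with exactly the slack that \eqref{eqn:knapsack_constraint_hard} provides. The prioritization heuristic --- "accept $i$ preferentially on paths where something nonzero was already accepted" --- is what makes the load concentrate and keeps $\Pr[0<\con_\sigma(i)\le 1-s_i]$ above $c_\sigma(i)$; translating this qualitative statement into the precise exponential bound, and checking that the Riemann-sum-type or Markov-inequality estimate on $\Pr[\con_\sigma(i)>1-s_i]$ closes exactly against the $c_\sigma(i_1)\exp(-\tfrac{2}{c_\sigma(i_1)}\sum_{j<_\sigma i}c_\sigma(j)\mu_j)$ term, is the crux. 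A secondary subtlety is the interplay between the two $\sigma$-directions: monotonicity part~1 of Definition~\ref{def:knapsack_feasible} ($c_{\fp}$ non-increasing, $c_{\bp}$ non-decreasing along $1,\dots,n$) is needed so that the invariant, stated for the \emph{arrival} order, is consistent for both $\sigma=\fp$ and $\sigma=\bp$; I would verify the argument for generic $\sigma$ using only the notation $<_\sigma$ and the first-element $i_1$, so that both cases are covered at once.
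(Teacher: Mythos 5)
Your high-level skeleton (induction on arrival position with a fixed $\sigma$, a separate lemma to drop the $\min$'s, telescoping the two branches to $c_\sigma(i)$) matches the paper, and your base case and step~(3) are correct. But the auxiliary invariant you propose is wrong, and the gap you flag in step~(4) is the entire crux of the theorem, not a loose end.

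Specifically, you propose carrying $\Pr[0<\con_\sigma(i)\le 1-s_i]\ge c_\sigma(i)$, but this fails immediately: for $i=i_1$ the left side is $0$ while $c_\sigma(i_1)>0$, and more generally whenever branch~\eqref{eqn:knapsack_bit_suboptimal_step} fires we have $c_\sigma(i)>\Pr[0<\con_\sigma(i)\le 1-s_i]$ by definition. The invariant the paper actually maintains, and which suffices to discharge the analogue of your ``\Cref{lem:implied_by_hypothesis}'', is \eqref{eqn:inductive_rate}: $\Pr[\con_\sigma(i)=0\mid\Lambda=\sigma]\ge c_\sigma(i)$. This is weaker, gives the $\min$-removal via $\Pr[\con_\sigma(i)\le 1-s_i]\ge\Pr[\con_\sigma(i)=0]\ge c_\sigma(i)$, and — unlike your version — is actually preservable. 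Preserving it, however, requires a second, stronger invariant you do not identify: the anti-concentration bound \eqref{eqn:inductive_invariant}, namely $\Pr[0<\con_\sigma(i)\le b]/c_\sigma(i_1)\le\exp\bigl(-\Pr[b<\con_\sigma(i)\le 1-b]/c_\sigma(i_1)\bigr)$ for all $b\in(0,1/2]$. The $\exp(-2\cdot)$ term in \eqref{eqn:knapsack_constraint_hard} is consumed against this family of exponential bounds, not against a Markov or union-type estimate of $\Pr[\con_\sigma(i)>1-s_{i+1}]$: the paper converts $\mathbb{E}[\con_\sigma(i)]=\sum_{j<_\sigma i}c_\sigma(j)\mu_j$ via integration by parts into $\Pr[0<\con_\sigma(i)\le 1]-\int_0^1\Pr[0<\con_\sigma(i)\le\tau]\,d\tau$ and then uses \eqref{eqn:inductive_invariant} to control the integral, closing exactly against \eqref{eqn:knapsack_constraint_easy}--\eqref{eqn:knapsack_constraint_hard}. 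Propagating \eqref{eqn:inductive_invariant} itself (the paper's Lemma~\ref{lem:inductive_invariant}) is a careful case analysis over how $s_i$ falls into four regimes relative to $b$ and the $0$-using/$0$-avoiding split, and is where the ``prioritize paths with $\con_\sigma>0$'' heuristic does its quantitative work. A minor additional slip: your step~(2) asserts the $\min$ in branch~\eqref{eqn:knapsack_bit_optimal_step} is redundant, but when $c_\sigma(i)>\Pr[0<\con_\sigma(i)\le 1-s_i]$ it deliberately clips to $1$ (greedy acceptance); your later parenthetical recovers the correct case split. In short: correct architecture, but the load-distribution invariant and its propagation argument are missing, and those are the theorem.
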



In order to prove \Cref{thm:feasible_to_selection}, we define the following induction hypothesis:

\begin{enumerate}
    \item For each $\sigma\in\{\fp,\bp\}$ and $i\in[n]$, for all $0 < b \le 1/2$, 
    \begin{equation} \label{eqn:inductive_invariant}
        \frac{\Pr[0 < \con_{\sigma}(i) \le b \mid \Lambda = \sigma]}{c_{\sigma}(i_1)} \le \exp\left( -\frac{\Pr[ b < \con_{\sigma}(i) \le 1-b \mid \Lambda = \sigma]}{c_{\sigma}(i_1)}\right);
    \end{equation}

    \item For each $\sigma\in\{\fp,\bp\}$ and $i\in[n]$,
    \begin{equation} \label{eqn:inductive_rate}
    \Pr[\con_{\sigma}(i) = 0 \mid \Lambda = \sigma] \ge c_{\sigma}(i).
    \end{equation}

\end{enumerate}

The following \namecref{lem:implied_by_hypothesis} shows that if element $i$ satisfies \eqref{eqn:inductive_rate}, then the minimum of the Bernoulli parameter in line \ref{line:second_bernoulli} of \Cref{alg:knapsack_CRS} is unnecessary, and so \Cref{alg:knapsack_CRS} accepts $i$ as specified in \Cref{thm:feasible_to_selection}.  This would complete the proof of \Cref{thm:feasible_to_selection}.
\begin{lemma}\label{lem:implied_by_hypothesis}
Fix $\sigma \in \{\fp,\bp\}$. If \eqref{eqn:inductive_rate} holds for $i \in [n]$,
then for each $s_i \in [0,1]$,
\begin{enumerate}
    \item \begin{equation} \label{eqn:knapsack_correction_bernoulli}
\cfrac{c_{\sigma}(i) - \Pr[0 < \con_{\sigma}(i) \le 1 - s_i \mid \rp = \sigma]}{\Pr[\con_{\sigma}(i) =0 \mid \rp = \sigma]} \le 1;
\end{equation}
\item \begin{equation} \label{eqn:knapsack_selection_rate_restated} 
\Pr[A_i \mid S_i = s_i, \rp = \sigma] = c_{\sigma}(i). 
\end{equation}
\end{enumerate}
\end{lemma}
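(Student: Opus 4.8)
The plan is to reduce both parts to the hypothesis \eqref{eqn:inductive_rate} via a short case analysis of \Cref{alg:knapsack_CRS}. Fix $\sigma \in \{\fp,\bp\}$, $i \in [n]$, and $s_i \in [0,1]$, and abbreviate $p_0 := \Pr[\con_\sigma(i) = 0 \mid \rp = \sigma]$ and $p_1 := \Pr[0 < \con_\sigma(i) \le 1 - s_i \mid \rp = \sigma]$; these are numbers fixed by the algorithm's execution on $\{j : j <_\sigma i\}$. If $c_\sigma(i) = 0$ the \namecref{lem:implied_by_hypothesis} is trivial: \Cref{alg:knapsack_CRS} then never draws $B_\sigma(i)$ with a positive parameter (the guard $c_\sigma(i) > p_1$ of line~\ref{line:second_bernoulli} fails since $p_1 \ge 0$), so $A_i = 0$ identically and \eqref{eqn:knapsack_selection_rate_restated} holds, while line~\ref{line:second_bernoulli} is never reached so \eqref{eqn:knapsack_correction_bernoulli} has no content. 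So assume $c_\sigma(i) > 0$, whence \eqref{eqn:inductive_rate} gives $p_0 \ge c_\sigma(i) > 0$. For part~1, note that \eqref{eqn:knapsack_correction_bernoulli} is equivalent (given $p_0 > 0$) to $c_\sigma(i) - p_1 \le p_0$, which is immediate since $c_\sigma(i) - p_1 \le c_\sigma(i) \le p_0$.

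For part~2, condition throughout on $\rp = \sigma$ and $S_i = s_i$, so that $A_i = \bI(B_\sigma(i) = 1)$, and observe that $B_\sigma(i) = 0$ on every realization of $\con_\sigma(i)$ except those in $\{0 < \con_\sigma(i) \le 1-s_i\}$ or those in $\{\con_\sigma(i) = 0\}$ with the second-branch guard holding; within each such branch $B_\sigma(i)$ is drawn independently of $\con_\sigma(i)$ with a parameter depending only on $p_0, p_1, c_\sigma(i)$. Hence $\Pr[A_i \mid \rp = \sigma, S_i = s_i]$ is the contribution of $\{0 < \con_\sigma(i) \le 1-s_i\}$, namely $p_1 \cdot \min(1, c_\sigma(i)/p_1) = \min(p_1, c_\sigma(i))$, plus that of $\{\con_\sigma(i) = 0\}$ when the guard holds. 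If $c_\sigma(i) \le p_1$, the first term is $c_\sigma(i)$ and the guard $c_\sigma(i) > p_1$ fails, so $\Pr[A_i \mid \cdots] = c_\sigma(i)$. If $c_\sigma(i) > p_1$, the first term is $p_1$, the guard holds, and the second contributes $p_0 \cdot \min(1, (c_\sigma(i)-p_1)/p_0) = \min(p_0, c_\sigma(i)-p_1) = c_\sigma(i) - p_1$, the last equality being exactly part~1; thus $\Pr[A_i \mid \cdots] = p_1 + (c_\sigma(i)-p_1) = c_\sigma(i)$. Either way \eqref{eqn:knapsack_selection_rate_restated} holds, which completes the proof.

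I do not anticipate a real obstacle in this \namecref{lem:implied_by_hypothesis}: its whole content is that the two branch-probabilities of \Cref{alg:knapsack_CRS} were calibrated to sum to $c_\sigma(i)$, with \eqref{eqn:inductive_rate} supplying precisely enough mass on $\{\con_\sigma(i)=0\}$ to absorb the shortfall $c_\sigma(i)-p_1$ when it is positive. The only care needed is to avoid dividing by a vanishing $p_0$ or $p_1$, which is handled by $p_0 \ge c_\sigma(i) > 0$ together with the dichotomy on whether $c_\sigma(i) \le p_1$ (and by writing the branch contributions in the division-free form $\min(p_1,c_\sigma(i))$ and $\min(p_0,c_\sigma(i)-p_1)$). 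The genuinely substantive work of this section lies elsewhere — in showing the induction hypothesis \eqref{eqn:inductive_invariant}--\eqref{eqn:inductive_rate} propagates from one arrival to the next, which is where feasibility of $(c_\fp(i),c_\bp(i))_{i=1}^n$, and in particular \eqref{eqn:knapsack_constraint_hard}, is used.
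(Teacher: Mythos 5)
Your proof is correct and takes essentially the same route as the paper's: derive the bound $c_\sigma(i) - p_1 \le p_0$ from \eqref{eqn:inductive_rate}, then case-split on $c_\sigma(i) \le p_1$ versus $c_\sigma(i) > p_1$ and sum the two branch contributions to get $c_\sigma(i)$. The only (minor) difference is that you explicitly dispatch the degenerate case $c_\sigma(i)=0$ and phrase the branch contributions in division-free form, which tidies up the potential zero denominators that the paper leaves implicit.
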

\begin{proof}[Proof of \Cref{lem:implied_by_hypothesis}] 
Observe first that, 
\begin{align*}
\Pr[0 < \con_{\sigma}(i) \le 1 - s_i \mid \rp = \sigma] + \Pr[\con_{\sigma}(i) =0 \mid \rp = \sigma] &=
 \Pr[\con_{\sigma}(i) \le 1 - s_i \mid \rp = \sigma] \\
 &\ge \Pr[\con_{\sigma}(i) = 0 \mid \rp = \sigma] \\
 &\ge c_{\sigma}(i),
\end{align*}
where the last inequality follows by the assumption \eqref{eqn:inductive_rate} for $i$. Thus, we can now subtract the term
$\Pr[0 < \con_{\sigma}(i) \le 1 - s_i \mid \rp = \sigma]$ from both sides, and then divide by $\Pr[\con_{\sigma}(i) =0 \mid \rp = \sigma]$ to get \eqref{eqn:knapsack_correction_bernoulli}.

Let us now implicitly condition on $\rp = \sigma$ and $S_i = s_i$ for the remainder of the proof. Observe
then that $A_i$ occurs if and only if $\{0 < \con_{\sigma}(i) \le 1 - s_i\} \cap \{B_{\sigma}(i) =1\}$
or $\{\con_{\sigma}(i) = 0\} \cap \{B_{\sigma}(i) =1\}$. Since the latter are disjoint events,
\begin{align}
    \Pr[A_i \mid S_i = s_i, \rp = \sigma] = \Pr[\{0 < \con_{\sigma}(i) \le 1 - s_i\} & \cap \{B_{\sigma}(i) =1\} \mid S_i = s_i, \rp = \sigma] \notag \\
    &+ \Pr[\{ \con_{\sigma}(i) =0\} \cap \{B_{\sigma}(i) =1\} \mid S_i = s_i, \rp = \sigma]. \label{eqn:separate_events}
\end{align}
In order to simplify \eqref{eqn:separate_events}, we first consider the case when $c_{\sigma}(i) \le \Pr[0 < \con_{\sigma}(i) \le 1 - s_i \mid \rp = \sigma].$ Observe then that by the definition of $B_{\sigma}(i)$, for each $d_i \in [0,1-s_i]$,
\begin{equation} \label{eqn:no_zero_paths}
        \Pr[B_{\sigma}(i) =1 \mid \rp = \sigma, S_i = s_i, \con_{\sigma}(i) = d_i] = \begin{cases}
 \cfrac{c_{\sigma}(i)}{\Pr[0 < \con_{\sigma}(i) \le 1 - s_i \mid \rp = \sigma]} & \text{if } 0 < d_i\le 1 - s_i. \\
  0 & \text{if } d_i = 0.
\end{cases}    
\end{equation}
Thus, by applying \eqref{eqn:no_zero_paths} to the RHS of \eqref{eqn:separate_events}, we can write $\Pr[A_i \mid S_i = s_i, \rp = \sigma]$ as:
\begin{align*}
     &  \Pr[0 < \con_{\sigma}(i) \le 1 - s_i \mid \rp = \sigma, S_i = s_i] \cdot \Pr[B_{\sigma}(i) =1 \mid \rp = \sigma, S_i = s_i, 0 < \con_{\sigma}(i) \le 1 - s_i] \\
    &= \frac{c_{\sigma}(i)}{\Pr[0 < \con_{\sigma}(i) \le 1 - s_i \mid \rp = \sigma]} \cdot {\Pr[0 < \con_{\sigma}(i) \le 1 - s_i \mid \rp = \sigma]} = c_{\sigma}(i),
\end{align*}
and so \eqref{eqn:knapsack_selection_rate_restated} holds. It remains to consider the case when 
$c_{\sigma}(i) > \Pr[0 \le \con_{\sigma}(i) > 1 - s_i \mid \rp = \sigma].$ Since we've already proven \eqref{eqn:knapsack_correction_bernoulli}, by the definition of $B_{\sigma}(i)$ we know that for each $d_i \in [0,1-s_i]$,
\begin{equation} \label{eqn:zero_paths}
        \Pr[B_{\sigma}(i) =1  \mid \rp = \sigma, S_i = s_i, \con_{\sigma}(i)=d_i] = \begin{cases}
 1 & \text{if } 0 < d_i \le 1 - s_i. \\
  \cfrac{c_{\sigma}(i) - \Pr[0 < \con_{\sigma}(i) \le 1 - s_i \mid \rp = \sigma]}{\Pr[\con_{\sigma}(i) =0 \mid \rp = \sigma]} & \text{if } d_i = 0.
\end{cases}    
\end{equation}
By applying \eqref{eqn:zero_paths} to the RHS side of \eqref{eqn:separate_events}, we can write $\Pr[A_i \mid S_i = s_i, \rp = \sigma]$ as
\begin{align*}
     &  \Pr[0 < \con_{\sigma}(i) \le 1 - s_i \mid \rp = \sigma, S_i = s_i]  \\
    & + \left(\cfrac{c_{\sigma}(i) - \Pr[0 < \con_{\sigma}(i) \le 1 - s_i \mid \rp = \sigma]}{\Pr[\con_{\sigma}(i) =0 \mid \rp = \sigma]} \right) \cdot {\Pr[\con_{\sigma}(i) =0 \mid \rp = \sigma]} = c_{\sigma}(i),
\end{align*}
and so \eqref{eqn:knapsack_selection_rate_restated} holds. The proof is thus complete.
\end{proof}

For the induction hypothesis, we now discuss how \eqref{eqn:inductive_invariant} relates to \eqref{eqn:inductive_rate}.
By \Cref{lem:implied_by_hypothesis}, to establish the guarantee of \Cref{thm:feasible_to_selection}, it suffices to prove~\eqref{eqn:inductive_rate} or equivalently $\Pr[0 < \con_{\sigma}(i) \le 1 \mid \rp = \sigma] \le 1 - c_{\sigma}(i)$.  On the other hand, as we shall see in our inductive argument, we can apply \Cref{lem:implied_by_hypothesis} to write 
$\mb{E}[ \con_{\sigma}(i) \mid \rp = \sigma] = \sum_{j <_{\sigma} i} c_{\sigma}(j) \mu_j$. 
By rewriting $\mb{E}[ \con_{\sigma}(i) \mid \rp = \sigma]$ using integration by parts, we get
the following:
\begin{equation} \label{eqn:expectation_reformulate}
\sum_{j <_{\sigma} i} c_{\sigma}(j) \mu_j = \mb{E}[\con_{\sigma}(i) \mid \rp = \sigma]= \Pr[0 < \con_{\sigma}(i) \le 1 \mid \rp = \sigma] - \int_{0}^{1} \Pr[0 < \con_{\sigma}(i) \le \tau \mid \rp = \sigma] d\tau.
\end{equation}
We shall use \eqref{eqn:inductive_invariant} to upper-bound the integral in \eqref{eqn:expectation_reformulate},
which combined with 
\Cref{def:knapsack_feasible}, will allow us to prove $\Pr[0 < \con_{\sigma}(i) \le 1 \mid \rp = \sigma] \le 1- c_{\sigma}(i)$ as desired. In this way, we can roughly interpret \eqref{eqn:inductive_invariant} as an \textit{anti-concentration}
inequality: it controls the amount of mass $\con_{\sigma}(i)$ can have away from $0$.
Our formal induction argument establishing~\eqref{eqn:inductive_invariant} and~\eqref{eqn:inductive_rate} is \textit{deferred to \Cref{sec:inductionPf}}.

Having verified  \eqref{eqn:inductive_rate} and \eqref{eqn:inductive_invariant} for all $i \in [n]$ and $\sigma \in \{\fp, \bp\}$,
\Cref{lem:implied_by_hypothesis} immediately implies \Cref{thm:feasible_to_selection}. We now construct a specific choice of feasible $(c_{\fp}(i), c_{\bp}(i))_{i=1}^n$ which implies the selection guarantee claimed in
\Cref{thm:knapsack_postive}. As in the single-unit case, we first describe a continuous function
$\phi: [0,1] \rightarrow [0,1]$ to help us describe our solution.
Specifically, for each $z \in [0,1]$,
\begin{equation} \label{eqn:knapsack_continuous_solution}
    \phi(z):= \frac{4}{9} - \frac{2z}{9}.
\end{equation}

\begin{proposition}[proof in \Cref{pf:obs:limiting_knapsack_selection_values}] \label{obs:limiting_knapsack_selection_values}
Function $\phi$ defined in \eqref{eqn:knapsack_continuous_solution} satisfies the following:
\begin{enumerate}
        \item $\phi$ is decreasing and continuous on $[0,1]$.
    \item For each $z \in [0,1]$:
    \begin{align}
        &\frac{\phi(z) + \phi(1-z)}{2} = \frac{1}{3} \\
       &\phi(z) \le 1 - \phi(0) - \int_{0}^{z} \phi(\tau) d \tau \label{eqn:knapsack_limit_integral_value_one} \\
        &\phi(z) \le 1 - 2 \int_{0}^z \phi(\tau) d\tau - \phi(0) \cdot \exp \left( \frac{-2}{\phi(0)} \int_{0}^z \phi(\tau) d\tau\right) \label{eqn:knapsack_limit_integral_value_two}\ . 
    \end{align}
\end{enumerate}
\end{proposition}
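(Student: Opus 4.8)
The plan is to verify all four properties by direct computation, exploiting that $\phi(z)=\tfrac{4}{9}-\tfrac{2z}{9}$ is affine and that $\int_0^z\phi(\tau)\,d\tau=\tfrac{4z-z^2}{9}$ has a clean closed form. For the first property, $\phi$ is continuous with slope $-\tfrac29<0$, hence strictly decreasing, and since $\phi(0)=\tfrac49$ and $\phi(1)=\tfrac29$ it indeed maps $[0,1]$ into $[0,1]$. The identity $\tfrac12(\phi(z)+\phi(1-z))=\tfrac13$ follows from $\phi(z)+\phi(1-z)=\tfrac89-\tfrac{2z}{9}-\tfrac{2(1-z)}{9}=\tfrac69$. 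For~\eqref{eqn:knapsack_limit_integral_value_one}, one computes $1-\phi(0)-\int_0^z\phi(\tau)\,d\tau=\tfrac{z^2-4z+5}{9}$, so the claim $\tfrac{4-2z}{9}\le\tfrac{z^2-4z+5}{9}$ is equivalent to $0\le z^2-2z+1=(z-1)^2$, which is trivial.

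The main obstacle is~\eqref{eqn:knapsack_limit_integral_value_two}, the inequality involving the exponential. Here $\phi(0)=\tfrac49$ and $\int_0^z\phi(\tau)\,d\tau=\tfrac{4z-z^2}{9}$ give $\tfrac{-2}{\phi(0)}\int_0^z\phi(\tau)\,d\tau=\tfrac{z^2-4z}{2}$, and after multiplying through by $9$ and rearranging, the desired inequality reduces to $4\exp\!\big(\tfrac{z^2-4z}{2}\big)\le 2z^2-6z+5$ for $z\in[0,1]$. Since $2z^2-6z+5>0$ on $[0,1]$ (its discriminant is $-4<0$), I can take logarithms: it suffices to show $\psi(z):=\ln(2z^2-6z+5)-\ln 4-\tfrac{z^2-4z}{2}\ge 0$ on $[0,1]$.

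I would then argue that $\psi$ is concave on $[0,1]$. Differentiating twice gives $\psi''(z)=\dfrac{-8(z-1)(z-2)}{(2z^2-6z+5)^2}-1$, and since $(z-1)(z-2)\ge 0$ for $z\in[0,1]$ we get $\psi''(z)\le -1<0$. A concave function on $[0,1]$ lies above the chord through its endpoint values, so $\psi(z)\ge\min\{\psi(0),\psi(1)\}=\min\{\ln(5/4),\,\tfrac32-\ln 4\}$, and both numbers are positive (for the second, $\tfrac32-\ln 4>0\iff e^{3/2}>4$, a direct numerical check). This proves~\eqref{eqn:knapsack_limit_integral_value_two} and completes the proof; the only genuine idea needed is that passing to logarithms converts the exponential inequality into a concavity statement whose second derivative simplifies to a ratio of explicit polynomials.
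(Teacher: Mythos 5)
Your proof is correct, and all the algebra checks out. For properties 1, 2, and the first inequality, your argument is essentially identical to the paper's (the paper verifies~\eqref{eqn:knapsack_limit_integral_value_one} by writing $\phi(z)+\phi(0)+\int_0^z\phi = \tfrac{8+2z-z^2}{9}$ and noting it is maximized at $z=1$ with value $1$, which is the same fact as your $(z-1)^2\ge 0$).

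For~\eqref{eqn:knapsack_limit_integral_value_two} you take a genuinely different route from the paper, and a more robust one. The paper simply computes the left-hand side as an explicit function of $z$ and asserts, without further justification, that it ``is maximized at $z=1$.'' That claim is true but is not immediate: the derivative of the expression changes sign on $[0,1]$ (the function has an interior minimum, not a monotone shape), so verifying it properly would require analyzing the critical-point structure of a polynomial-plus-exponential. Your logarithm-and-concavity reduction avoids this entirely: after simplifying to $4e^{(z^2-4z)/2}\le 2z^2-6z+5$ and defining $\psi(z)=\ln(2z^2-6z+5)-\ln 4-\tfrac{z^2-4z}{2}$, the second derivative $\psi''(z)=\tfrac{-8(z-1)(z-2)}{(2z^2-6z+5)^2}-1\le -1$ makes concavity trivial, and then the conclusion follows from checking the two endpoint values $\psi(0)=\ln(5/4)>0$ and $\psi(1)=\tfrac32-\ln 4>0$. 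This is a cleaner and more self-contained verification. (Incidentally, the paper's displayed polynomial $\tfrac{22}{9}+\tfrac{2z}{9}-\tfrac{z^2}{9}$ contains a typo; the correct expression for $\phi(z)+2\int_0^z\phi$ is $\tfrac{4+6z-2z^2}{9}$, as your reduction reflects. The paper's ultimate conclusion is nonetheless correct.)
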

\begin{remark}
Properties \eqref{eqn:knapsack_limit_integral_value_one} and \eqref{eqn:knapsack_limit_integral_value_two} correspond to \eqref{eqn:knapsack_constraint_easy} and \eqref{eqn:knapsack_constraint_hard} of \Cref{def:knapsack_feasible} for an input
with $\max_{1 \le i \le n} \mu_i \rightarrow 0$. Thus, we can interpret $(\phi(z), \phi(1 -z))_{0 \le z \le 1}$ as a limiting solution of \Cref{def:knapsack_feasible} as $\max_{1 \le i \le n} \mu_i \rightarrow 0$. 
\end{remark}

For each $i \in [n]$ and $\sigma \in \{\fp,\bp\}$, define $\mu_{\sigma}(i) := \sum_{j  <_{\sigma} i } \mu_j$ where $\mu_{\fp}(1) = \mu_{\bp}(n) := 0$ for convenience. 
Using $\phi$, and recalling that $\sum_{i=1}^n \mu_i = 1$, we define $(c_{\fp}(i), c_{\bp}(i))_{i=1}^n$ in the following way:
\begin{align} \label{eqn:knapsack_feasible_solution_definition}
    \text{$c_{\fp}(i) := \int_{\mu_{\fp}(i)}^{\mu_{\fp}(i) + \mu_i} \frac{\phi\left(\tau \right)}{\mu_i} d\tau$,  and $c_{\bp}(i) := \int_{\mu_{\bp}(i)}^{\mu_{\bp}(i) + \mu_i} \frac{\phi\left(\tau \right)}{\mu_i} d\tau$}.
\end{align} 
Here we can interpret $c_{\sigma}(i)$ as the average value of the function $\phi$ on the interval $[\mu_{\sigma}(i), \mu_{\sigma}(i) + \mu_i]$. As such, $c_{\sigma}(i)$ agrees exactly with $\phi$ for inputs with $\max_{1 \le i \le n} \mu_i \rightarrow 0$.
Note that $\phi$ is a decreasing function on $[0, 1]$.
Thus, the \textit{further} an element $i$ is in the order specified by $\sigma \in \{\fp,\bp\}$, the \textit{smaller} the value of $c_{\sigma}(i)$. The next lemma proceeds similarly to \Cref{lem:phi_feasible} from \Cref{sec:positive_result_single_item}.
\begin{lemma}[proof in \Cref{pf:lem:phi_feasible_knapsack}] \label{lem:phi_feasible_knapsack}
Fix an input $(n,(F_i)_{i=1}^n)$ with $\sum_{i=1}^n \mu_i = 1$ and $\min_{i \in [n]} \mu_i > 0$. Then, $(c_{\fp}(i), c_{\bp}(i))_{i=1}^n$ as defined
in \eqref{eqn:knapsack_feasible_solution_definition} is feasible for $(n,(F_i))_{i=1}^n$, and
    $\min_{i \in [n]} \frac{c_{\fp}(i) + c_{\bp}(i)}{2} = \frac{1}{3}$.
\end{lemma}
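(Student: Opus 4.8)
The plan is to mirror the proof of \Cref{lem:phi_feasible}, with \Cref{obs:limiting_knapsack_selection_values} playing the role of the continuous counterpart of the feasibility inequalities in \Cref{def:knapsack_feasible}, and with the monotonicity of $\phi$ used to pass between averages of $\phi$ over subintervals and pointwise/integral quantities. Write $\mu_{\sigma}(i) = \sum_{j <_{\sigma} i}\mu_j$ as in \eqref{eqn:knapsack_feasible_solution_definition}. Two elementary observations will be used throughout: telescoping gives $\sum_{j <_{\sigma} i} c_{\sigma}(j)\,\mu_j = \int_0^{\mu_{\sigma}(i)}\phi(\tau)\,d\tau$ exactly as in \eqref{eqn:upper_bound_sum_by_integral}; and since $\phi$ is decreasing, $c_{\sigma}(i) \le \phi(\mu_{\sigma}(i))$ as in \eqref{eqn:upper_bound_average}, and in particular $c_{\sigma}(i_1) \le \phi(0)$ since $\mu_{\sigma}(i_1) = 0$. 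Since $\phi > 0$ on $[0,1]$ we also have $c_{\sigma}(i_1) > 0$, so the exponentials below are well-defined.

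First I would dispatch part~(1) of \Cref{def:knapsack_feasible}: the inequalities $c_{\fp}(i+1) \le c_{\fp}(i)$ and $c_{\bp}(i) \le c_{\bp}(i+1)$ hold because in each case the two quantities are the averages of the decreasing function $\phi$ over two adjacent intervals, the later of which carries the smaller average. For the ``easy'' constraint \eqref{eqn:knapsack_constraint_easy}, combine the two observations above with \eqref{eqn:knapsack_limit_integral_value_one} of \Cref{obs:limiting_knapsack_selection_values} evaluated at $z = \mu_{\sigma}(i)$:
\[ c_{\sigma}(i) + c_{\sigma}(i_1) + \sum_{j <_{\sigma} i} c_{\sigma}(j)\,\mu_j \;\le\; \phi(\mu_{\sigma}(i)) + \phi(0) + \int_0^{\mu_{\sigma}(i)}\phi(\tau)\,d\tau \;\le\; 1. \]

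The only genuinely non-routine point is the ``hard'' constraint \eqref{eqn:knapsack_constraint_hard}. Put $M := \sum_{j <_{\sigma} i} c_{\sigma}(j)\,\mu_j = \int_0^{\mu_{\sigma}(i)}\phi(\tau)\,d\tau \ge 0$ and $a := c_{\sigma}(i_1) \in (0,\phi(0)]$. Evaluating \eqref{eqn:knapsack_limit_integral_value_two} at $z = \mu_{\sigma}(i)$ and using $c_{\sigma}(i) \le \phi(\mu_{\sigma}(i))$ gives $c_{\sigma}(i) \le 1 - 2M - \phi(0)\exp(-2M/\phi(0))$, so it suffices to check that $a\exp(-2M/a) \le \phi(0)\exp(-2M/\phi(0))$. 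This holds because $g(t) := t\exp(-2M/t)$ satisfies $g'(t) = \exp(-2M/t)(1 + 2M/t) > 0$ for $t > 0$, so $g$ is increasing, and $a \le \phi(0)$. The key subtlety is thus that the $\phi(0)$ appearing in \eqref{eqn:knapsack_limit_integral_value_two} may be replaced by the smaller $c_{\sigma}(i_1)$ precisely because the relevant exponential expression is monotone; the degenerate case $i = i_1$ (where $M = 0$) is subsumed.

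Finally, for the objective value I would use $\mu_{\bp}(i) + \mu_i = 1 - \mu_{\fp}(i)$ together with the substitution $\tau \mapsto 1 - \tau$ to rewrite $c_{\bp}(i) = \frac{1}{\mu_i}\int_{\mu_{\fp}(i)}^{\mu_{\fp}(i)+\mu_i}\phi(1-\tau)\,d\tau$, whence
\[ \frac{c_{\fp}(i) + c_{\bp}(i)}{2} = \frac{1}{\mu_i}\int_{\mu_{\fp}(i)}^{\mu_{\fp}(i)+\mu_i}\frac{\phi(\tau) + \phi(1-\tau)}{2}\,d\tau = \frac13 \]
by the first identity of \Cref{obs:limiting_knapsack_selection_values}, so $\min_{i \in [n]}(c_{\fp}(i)+c_{\bp}(i))/2 = 1/3$. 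I expect no obstacle beyond the monotone-function observation for \eqref{eqn:knapsack_constraint_hard}; everything else transcribes the single-unit argument, and the real content of the result sits in \Cref{obs:limiting_knapsack_selection_values} and the inductive invariant \eqref{eqn:inductive_invariant}--\eqref{eqn:inductive_rate} established elsewhere.
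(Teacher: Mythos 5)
Your proposal is correct and mirrors the paper's own proof essentially step for step: the telescoping identity, the pointwise bound $c_\sigma(i)\le\phi(\mu_\sigma(i))$, \Cref{obs:limiting_knapsack_selection_values} applied at $z=\mu_\sigma(i)$, the monotonicity of $t\mapsto t\exp(-2M/t)$ to replace $\phi(0)$ by $c_\sigma(i_1)$ in \eqref{eqn:knapsack_constraint_hard}, and the $\tau\mapsto1-\tau$ change of variables for the objective. You even write out the derivative $g'(t)=e^{-2M/t}(1+2M/t)>0$, which the paper states without computation, and you flag the $c_\sigma(i_1)>0$ point explicitly; these are minor polish on the same argument, not a different route.
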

\Cref{thm:knapsack_postive} now follows easily.

\begin{proof}[Proof of \Cref{thm:knapsack_postive}]
\Cref{lem:phi_feasible_knapsack} implies that $(c_{\fp}(i), c_{\bp}(i))_{i=1}^n$ as defined
in \eqref{eqn:knapsack_feasible_solution_definition} is feasible for $(n, (F_i)_{i=1}^n)$. Thus,
by using them in \Cref{alg:knapsack_CRS}, \Cref{thm:feasible_to_selection} implies
that for each $\sigma \in \{\fp,\bp\}$, $i \in [n]$ and $s_i \in [0,1]$, $\Pr[A_i =1 \mid \rp = \sigma, S_i = s_i] = c_{\sigma}(i).$
As such, since $\rp$ is uniformly distributed on $\{\fp,\bp\}$, \Cref{alg:knapsack_CRS} is
$\min_{i \in [n]} (c_{\fp}(i) + c_{\bp}(i))/2$-selectable. Since \Cref{lem:phi_feasible_knapsack} ensures 
$
\min_{i \in [n]} \frac{c_{\fp}(i) + c_{\bp}(i)}{2} = \frac{1}{3},
$
the proof is complete.
\end{proof}





\subsection{Knapsack FB-CRS Hardness Result: Proving \Cref{thm:knapsack_hardness}} \label{sec:knapsackNegative}

In order to prove \Cref{thm:knapsack_hardness}, for each $n \in \mb{N}$,
we set $\sumx = 2n/(n+2)$,
and consider the knapsack input $(2n+1, (F_i)_{i=1}^{2n+1})$, where $S_i \in \{1/2 + 1/n, \infty\}$ 
and $\Pr[S_i = 1/2 + 1/n] = \sumx/(2n +1)$
for each $i \in [2n +1]$. Instead of directly trying to analyze the performance of a knapsack FB-CRS on
$(2n+1, (F_i)_{i=1}^{2n+1})$, we consider the single-unit input $(2n +1, \bm{x})$, where $x_i = \sumx/(2n+1)$ for each $i \in [2n +1]$,
and make the following observation.

\begin{proposition} \label{prop:knapsack_single_unit_equivalence}
There exists an $\alpha$-selectable knapsack FB-CRS for $(2n+1, (F_i)_{i=1}^{2n+1})$ if and only if there exists an $\alpha$-selectable single-unit FB-CRS for  $(2n +1, \bm{x})$.
\end{proposition}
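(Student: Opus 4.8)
The plan is to exploit the very special structure of the input $(2n+1,(F_i)_{i=1}^{2n+1})$: every active element has the \emph{same deterministic} size $s := 1/2 + 1/n$, and $1/2 < s \le 1$ for $n \ge 2$ (the only regime relevant for \Cref{thm:knapsack_hardness}). Since $2s = 1 + 2/n > 1$ while $s \le 1$, a set $I \subseteq [2n+1]$ satisfies $\sum_{i \in I} S_i \le 1$ if and only if $I$ contains at most one active element. Thus the knapsack feasibility requirement on this particular input \emph{coincides} with the single-unit feasibility requirement. Moreover, the single-unit input $(2n+1,\bm x)$ with $x_i = \sumx/(2n+1)$ makes each element $i$ active independently with exactly the same probability $x_i$ as in $(2n+1,(F_i)_i)$, and in both problems the elements are revealed in the same order dictated by the forward-backward permutation $\rp$.

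Using this, I would exhibit a selection-probability-preserving correspondence between knapsack FB-CRS's for $(2n+1,(F_i)_i)$ and single-unit FB-CRS's for $(2n+1,\bm x)$. Given an $\alpha$-selectable knapsack FB-CRS, one simulates it on the realized instance by declaring each active element to have size $s$ instead of $1$ (carrying along all internal randomization), and outputs whatever subset $I$ it outputs. By knapsack feasibility $\sum_{i\in I} S_i \le 1$, so $I$ contains at most one active element and is a legal single-unit output. Crucially, the only information the knapsack scheme ever sees about an arriving element is whether it is active and, if so, its size --- and here that size is the constant $s$, which reveals nothing beyond activeness --- so the simulated single-unit scheme makes decisions with exactly the same joint distribution. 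Hence $\Pr[i \in I \mid S_i = 1] = \Pr[i \in I \mid S_i = s] \ge \alpha$ for every $i$, i.e.\ the single-unit scheme is $\alpha$-selectable. The converse is symmetric: an $\alpha$-selectable single-unit FB-CRS for $(2n+1,\bm x)$, run with each active size relabeled from $1$ to $s$, outputs at most one active element, of size $s \le 1$, which is a legal knapsack output, and its selection guarantee transfers verbatim.

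This is essentially a relabeling argument, so I do not anticipate a genuine obstacle; the two points that need care are (i) the arithmetic $1/2 < 1/2 + 1/n \le 1$ and $2(1/2+1/n) > 1$, which is what makes the knapsack constraint collapse to ``at most one active element'', and (ii) the fact that, because active sizes here are \emph{deterministic}, the knapsack scheme holds no information that a single-unit scheme lacks --- this is precisely why the hardness instance is chosen with a single fixed size, and the argument would break if active sizes were random. It is also worth recording that both inputs are admissible: $\sum_{i=1}^{2n+1}\mb E[S_i \bI(S_i < \infty)] = (2n+1)\cdot s \cdot \frac{\sumx}{2n+1} = s\,\sumx = 1$ for the knapsack input and $\sum_{i=1}^{2n+1} x_i = \sumx$ for the single-unit input, so the equivalence is exactly what is needed to feed $\sumx = 2n/(n+2) \to 2$ into \Cref{thm:single_element_hardness} (through \Cref{thm:single_element_hardness_explicit}) and conclude \Cref{thm:knapsack_hardness}.
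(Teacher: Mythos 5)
Your argument is the same as the paper's: because every active element has deterministic size $s = 1/2 + 1/n > 1/2$ while $s \le 1$, the knapsack constraint degenerates to ``at most one active element,'' and since the size carries no information beyond activeness, the two problems coincide. The paper's proof is a two-line version of this; your extra detail (the relabeling correspondence, the check $s\sumx = 1$, the caveat about $n\ge 2$) is all correct and merely spells out what the paper leaves implicit.
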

\begin{proof}[Proof of \Cref{prop:knapsack_single_unit_equivalence}]
Since the support of each $F_i$ is $\{1/2 + 1/n, \infty\}$, and  $1/2 + 1/n> 1/2$, at most one element can be accepted by any knapsack FB-CRS. 
The claim thus follows immediately.
\end{proof}

Using this observation, we can now apply our negative results from \Cref{sec:single_unit}, namely
\Cref{lem:instance_optimal} and \Cref{thm:single_element_hardness_explicit}, to derive \Cref{thm:knapsack_hardness}. 
\begin{proof}[Proof of \Cref{thm:knapsack_hardness}]
It suffices to upper bound the selection guarantee of an arbitrary single-unit FB-CRS on $(2n +1, \bm{x})$.
Now, by applying \Cref{lem:instance_optimal} to $(2n+1, \bm{x})$, we know
that no single-unit FB-CRS can attain a selection guarantee greater than $\LPOPT(2n +1, \bm{x})$.
However, $(2n +1 , \bm{x})$ is precisely the input described in \Cref{thm:single_element_hardness_explicit} for the parameter $\sumx$, and so we know
that
$$
    \LPOPT(2n +1, \bm{x}) \le  \frac{\exp(\sumx/2)}{1 + \exp(\sumx/2) \sumx} + \frac{\sumx + 2}{2n + 1} = (1 +o(1))\frac{1}{2 + e^{-1}},
$$
where the asymptotics hold since $\sumx \rightarrow 2$ as $n \rightarrow \infty$. The proof is thus complete.
\end{proof}

\bibliographystyle{amsalpha}
\bibliography{refs}

\newpage

\appendix

\section{Additions to \Cref{sec:reduction}}

\subsection{Proof of \Cref{lem:ub}} \label{pf:ub}

Let $\beta_i=\bE[s_i(Y_i,D_i)]$ for all $i$.
We set $q_i$ to the smallest value in [0,1] that makes~\eqref{eqn:exanteFeasService} hold.
To see that such a value exists, note that if $q_i=1$, then the left-hand side (LHS) of~\eqref{eqn:exanteFeasService} equals $\bE[s_i(\min\{D_i,1\},D_i)]$, which is an upper bound on $\bE[s_i(Y_i,D_i)]=\beta_i$ as $Y_i\le\min\{D_i,1\}$.
The LHS of~\eqref{eqn:exanteFeasService} continuously decreases to 0 as $q_i$ decreases from 1 to 0, and hence this value of $q_i$ exists.
Under this definition of $q_i$, we now prove that
\begin{align} \label{eqn:ubKey}
\bE[Y_i]\ge\int_0^{q_i}\min\{F^{-1}_i(q),1\}dq,
\end{align}
which will help us establish~\eqref{eqn:exanteFeasTotal}.

If $s_i$ is the Type-I service function, then~\eqref{eqn:exanteFeasService} implies $\beta_i\le\min\{q_i,F_i(1)\}$ (see~\eqref{eqn:typeI}).  However in this case we know $q_i\le F_i(1)$ by virtue of $q_i$ being the smallest value that satisfies~\eqref{eqn:exanteFeasService} (increasing $q_i$ beyond $F_i(1)$ does increase the LHS of~\eqref{eqn:exanteFeasService}, under Type-I service).  We derive
$$
\bE[Y_i]\ge \bE[ \bI(Y_i=D_i) D_i]\ge\int_0^{\Pr[Y_i=D_i]} F^{-1}_i(q)dq
$$
by the optimality of monotone coupling between $\bI(Y_i=D_i)$ and demand $D_i$ being small.
Using the facts that $\Pr[Y_i=D_i]=\beta_i=q_i$ and that $F^{-1}_i(q)\le1$ for all $q$ below this $q_i$, the right-hand side (RHS) of the preceding equation equals $\int_0^{q_i}\min\{F^{-1}_i(q),1\}dq$, establishing~\eqref{eqn:ubKey}.

If $s_i$ is the Type-II service function, then~\eqref{eqn:exanteFeasService} implies $\int_0^{q_i}(\min\{F^{-1}_i(q),1\}/\mu_i) dq=\beta_i=\bE[Y_i]/\mu_i$, which immediately establishes~\eqref{eqn:ubKey} as equality.

If $s_i$ is the Type-III service function, then we have
$$
\beta_i=\bE[\frac{Y_i}{D_i}]=\int_0^1 \frac{\bE[Y_i\mid Q_i=q]}{F^{-1}_i(q)} dq \le \int_0^{q'_i} \frac{\min\{F^{-1}_i(q), 1\}}{F^{-1}_i(q)} dq
$$
where $q'_i$ is such that $\int_0^{q'_i} \min\{F^{-1}_i(q),1\}dq=\bE[Y_i]$.
This again follows by optimality of monotone coupling between the distributions $Y_i$ and $D_i$: the coefficient $1/F^{-1}_i(q)$ is maximized when $q$ is small, and hence we also want $\bE[Y_i\mid Q_i=q]$ to be maximized when $q$ is small, subject to $\bE[Y_i\mid Q_i=q]\le F^{-1}_i(q)$ and $\bE[Y_i\mid Q_i=q]\le1$.
Now, because $\beta_i=\int_0^{q_i} \frac{\min\{F^{-1}_i(q), 1\}}{F^{-1}_i(q)} dq$, we deduce that $q'_i\ge q_i$.
But then $\bE[Y_i]\ge\int_0^{q_i} \min\{F^{-1}_i(q),1\}dq$, establishing~\eqref{eqn:ubKey}.

Having established~\eqref{eqn:ubKey} for all three types of service, we now use the fact that $\sum_{i=1}^n Y_i \le 1$ on every sample path, and take linearity of expectation.  We get
$$
1\ge\sum_{i=1}^n \bE[Y_i]\ge\sum_{i=1}^n \int_0^{q_i}\min\{F^{-1}_i(q),1\}dq,
$$
which establishes~\eqref{eqn:exanteFeasTotal} as desired.

\section{Additions to \Cref{sec:single_unit}}

\subsection{Proof of \Cref{lem:instance_optimal}} \label{pf:lem:instance_optimal}
Recall that $A_i$ is an indicator random variable for the event that $i \in [n]$ is accepted by \Cref{alg:rank_1_CRS}. Our goal is to show that for each $i \in [n]$ and $\sigma \in \{\fp,\bp\}$,
\begin{equation} \label{eqn:key_selection}
    \Pr[A_i =1 \mid S_i =1, \rp = \sigma] = c_{\sigma}(i).
\end{equation}
We first prove \eqref{eqn:key_selection} for $\sigma = \fp$ using induction on the elements of $[n]$. Observe first
that for $i =1$, if we condition on $\rp = \fp$ and $S_1 = 1$,
then element $1$ is accepted if and only if $B_{\fp}(1) =1$. Thus, $$\Pr[A_1 =1 \mid S_1 =1, \rp = \fp] = \Pr[B_{\fp}(1) =1 \mid \rp = \fp] = c_{\fp}(1).$$
For $\sigma = \fp$ and $i > 1$, let us now assume that \eqref{eqn:key_selection} holds for all $j < i$.
Observe then that since at most one element is accepted by \Cref{alg:rank_1_CRS},
\begin{equation} \label{eqn:applied_induction}
    \Pr[\cup_{j < i} \{A_j =1\} \mid \rp = \fp] = \sum_{j < i} c_{\fp}(j) \cdot x_{j}.
\end{equation}
On the other hand, conditional on $S_{i} =1$ and $\rp = \fp$, $i$ is accepted if and only if $\cap_{j <i} \{A_{j} =0\}$ and $B_{\fp}(i) =1$. Since conditional on $\rp = \fp$, $B_{\fp}(i)$ is independent of $(A_{j})_{j < i}$ and $S_i$,
we have that
\begin{align*}
\Pr[A_i =1 \mid S_i =1, \rp = \fp] &= \Pr[B_{\fp}(i) =1 \mid \rp = \fp] \cdot (1 - \Pr[\cup_{j < i} \{A_j =1\} \mid \rp = \fp]) \\
&= \frac{c_{\fp}(i)}{1 - \sum_{j < i} x_j \cdot c_{\fp}(j)} \cdot \left(1 - \sum_{j < i} x_j \cdot c_{\fp}(j)\right) \\
&= c_{\fp}(i),
\end{align*}
where the second equality follows from \eqref{eqn:applied_induction}. Thus, \eqref{eqn:key_selection} holds
for $\sigma = \fp$ for all $i \in [n]$. We omit the case when $\sigma = \bp$, as the argument proceeds identically. 
Using \eqref{eqn:key_selection}, and the fact that $\Pr[ \rp = \fp ] = \Pr[ \rp = \bp ] = 1/2$, 
we have that for each $i \in [n]$,
 $$
 \Pr[A_i =1 \mid S_i=1] = (c_{\fp}(i) + c_{\bp}(i))/2.
 $$
The selection guarantee of \Cref{alg:rank_1_CRS} is therefore
 $\min_{1 \le i \le n} (c_{\fp}(i) + c_{\bp}(i))/2 = \LPOPT(n,\bm{x})$.

Suppose now that we have an arbitrary FB-CRS for $(n, \bm{x})$.
Let us first set $c_{\sigma}(i) := \Pr[A_i =1 \mid S_i =1, \rp = \sigma]$
for each $i \in [n]$ and $\sigma \in \{\fp,\bp\}$, where $A_i$ is the indicator random variable for the event
the FB-CRS accepts $i$. Observe then
that since $\Lambda$ is distributed uniformly on $\{\fp,\bp\}$,
\begin{equation*}
    \Pr[A_i =1 \mid S_i =1] = (c_{\fp}(i) + c_{\bp}(i))/2,
\end{equation*}
Thus, the selection guarantee of the FB-CRS is $\min_{1 \le i \le n} (c_{\fp}(i) + c_{\bp}(i))/2$. 
To complete the proof, it suffices to show that $(c_{\fp}(i),c_{\bp}(i))_{i=1}^n$ is a feasible solution to \ref{LP:instance_opt}.

Clearly, $(c_{\fp}(i),c_{\bp}(i))_{i=1}^n$ is non-negative. Now, fix $i \in [n]$ and condition on $\rp= \sigma$ and $S_i =1$. Observe that if $i$ is accepted,
then no $j \in [n]$ with $\sigma(j) < \sigma(i)$ could have previously been accepted. Thus,
\begin{align*}
    c_{\sigma}(i) = \Pr[A_i =1 \mid S_i =1, \rp = \sigma] &\le 1 - \Pr[ \cup_{j <_{\sigma} i} \{ A_j =1\} \mid \rp = \sigma, S_i =1] \\
                                             &= 1 - \sum_{j <_{\sigma} i} \Pr[A_j =1 \mid \rp = \sigma, S_i =1] \\
                                             &= 1 - \sum_{j <_{\sigma} i} \Pr[A_j =1 \mid \rp = \sigma] \\
                                             &= 1 - \sum_{j <_{\sigma} i} c_{\sigma}(j) \cdot x_j.
\end{align*}
Here the first equality uses that at most one element can be accepted, the second uses that $S_i$
is independent of $A_j$ (conditional on $\rp = \sigma$), and the final uses the definition of $c_{\sigma}(j)$.
Thus, \eqref{eqn:LP_feasible} holds, and so the proof is complete.

\subsection{Proof of \Cref{obs:phi_properties}} \label{pf:obs:phi_properties}
Denote $\alpha_0 = \frac{\exp(\sumx/2)}{1 + \exp(\sumx/2) \sumx}$, and recall that  $\phi: [0,\sumx] \rightarrow [0,1]$, where for $z \in [0, \sumx]$,
    \begin{equation} \label{eqn:primal_function_definition_appendix}
        \phi(z) := \begin{cases}
\frac{2 e^{\sumx/2} - e^z}{1 + e^{\sumx/2} \sumx}  & \text{if } z \le \sumx/2,\\
\frac{e^{\sumx - z}}{1 + e^{\sumx/2} \sumx}  & \text{if } \sumx/2 < z \le \sumx.
\end{cases}
         \end{equation}
We verify the properties of \Cref{obs:phi_properties} in order.         
Since $\lim_{z \rightarrow (\sumx/2)^-} \phi(z) = \lim_{z \rightarrow (\sumx/2)^+} \phi(z) = \alpha_0$, it is clear
that $\phi$ is continuous. More, its derivative on $[0, \sumx/2)$ and $(\sumx/2, \sumx]$ is negative, so it is
decreasing. 

Now, we already know $(\phi(z) + \phi(\sumx-z))/2= \alpha_0$ for $z =\sumx/2$.
Observe that for any $z \in [0, \sumx/2)$, $\sumx/2 < \sumx - z \le \sumx$, and so
\begin{align*}
    \frac{\phi(z) + \phi(\sumx-z)}{2} = \frac{1}{2} \left(\frac{2 e^{\sumx/2} - e^z}{1 + e^{\sumx/2} \sumx} + \frac{e^{\sumx - (\sumx - z)}}{1 + e^{\sumx/2} \sumx} \right)= \frac{\exp(\sumx/2)}{1 + \exp(\sumx/2) \sumx} = \alpha_0.
\end{align*}
The same applies for $z > \sumx/2$, due to the symmetry of \eqref{eqn:primal_function_definition_appendix}. Next, observe that for $z \le \sumx/2$,
\begin{align*}
    \phi(z) + \int_{0}^{z} \phi(\tau) d \tau = \frac{2 e^{\sumx/2} - e^z}{1 + e^{\sumx/2} \sumx} + \frac{1 - e^{z} + 2 e^{\sumx/2} z}{1 + e^{\sumx/2} \sumx} = \frac{1 -2e^{z} + 2 e^{\sumx/2}(1+z)}{1 + e^{\sumx/2} \sumx} \le 1,
\end{align*}
where the inequality follows since the maximum of the function of $z$ on the LHS occurs at $z = \sumx/2$ (where it in fact takes a value of $1$).
Finally, for $z > \sumx/2$,
\begin{align*}
    \phi(z) + \int_{0}^{\sumx/2} \phi(\tau) d \tau + \int_{\sumx/2}^{z} \phi(\tau) d \tau = \frac{e^{\sumx - z}}{1 + e^{\sumx/2} \sumx}  + \frac{e^{\sumx/2} - e^{\sumx -z}}{1 + e^{\sumx/2} \sumx} + \frac{1 + e^{\sumx/2}(\sumx -1) }{1 + e^{\sumx/2} \sumx} = 1.
\end{align*}
Thus, the proof is complete.

\subsection{Single-unit LP Duality Details} \label{sec:dualityDetails}
We first state \ref{LP:instance_opt} as a linear program in standard form by introducing an additional 
variable $\beta$, and rearranging the inequalities:
\begin{align*}
    \tag{LP-SI-A}
	&\text{maximize} &  \beta \\
	&\text{subject to} &\beta -  (c_{\fp}(i) + c_{\bp}(i))/2 \le 0  && \forall i \in [n] \\
    && c_{\sigma}(i) + \sum_{j <_{\sigma} i } x_j \cdot c_{\sigma}(j)\le 1  && \forall i \in [n],  \sigma \in \{\fp,\bp\}  \\
	&&c_{\sigma}(i) \ge 0  && \forall i \in [n], \sigma \in \{\fp,\bp\}
\end{align*}
After taking its dual, we get the following:
\begin{align*}
	&\text{minimize} &  \sum_{i=1}^N (y_{\fp}(i) + y_{\bp}(i)) \\
	&\text{subject to} & y_{\sigma}(i) + \sum_{j >_{\sigma} i} \frac{\sumx \cdot y_{\sigma}(j)}{N} - \frac{\xi(i)}{2} \ge 0 && \forall i \in [N], \sigma \in \{\fp,\bp\}\\
	&& \sum_{i=1}^N \xi(i) \ge 1 &&   \\
	&&\xi(i), y_{\fp}(i), y_{\bp}(i) \ge 0  && \forall i \in [n].
\end{align*}
By scaling a solution to this dual up by $N$, we can write it
in the following way, as presented in \ref{LP:dual_instance_opt_simp} of \Cref{sec:negative_result_single_item}:
\begin{align*}
	&\text{minimize} &  \sum_{i=1}^N \frac{(y_{\fp}(i) + y_{\bp}(i))}{N} \\
	&\text{subject to} & y_{\sigma}(i) + \sum_{j >_{\sigma} i} \frac{\sumx \cdot y_{\sigma}(j)}{N} - \frac{\xi(i)}{2} \ge 0  && \forall i \in [N], \sigma \in \{\fp,\bp\}\\
	&& \sum_{i=1}^n \frac{\xi(i)}{N} \ge 1 &&  \\
	&&\xi(i), y_{\fp}(i), y_{\bp}(i) \ge 0  && \forall i \in [N].
\end{align*}

\subsection{Proof of \Cref{obs:gamma_properties}} \label{pf:obs:gamma_properties}
Recall that $\alpha_0 := \frac{\exp(\sumx/2)}{1 + \exp(\sumx/2) \sumx}$,
and 
$ 
    \gamma(z):= \frac{\sumx \exp(z-\sumx/2)}{2(1 + \exp(\sumx/2) \sumx)}.
$
Observe that $\gamma'(z) = \gamma(z)$ for $z \in [\sumx/2, \sumx]$,
so clearly $\gamma$ is increasing on $[\sumx/2, \sumx]$. More, it is
$1$-Lipschitz, since $\max_{z \in [\sumx/2, \sumx]} |\gamma'(z)| = \max_{z \in [\sumx/2, \sumx]} |\gamma(z)| = \gamma(\sumx) = \frac{p \alpha_0}{2} \le 1$.
Finally,
$$
\gamma(z) + \int_{z}^{\sumx} \gamma(\tau) d\tau = \frac{\sumx \alpha_0}{2},
$$
is easily verified for $z \in [\sumx/2, \sumx]$, as $\gamma(z)$ is the unique solution to the 
differential equation $\gamma'(z) = \gamma(z)$, with initial condition $\gamma(\sumx) = \frac{\sumx \alpha_0}{2}$.

\subsection{Splitting Argument} \label{sec:splitting_argument}
Suppose that $(n, \bm{x})$ is an arbitrary input
with $\sum_{i=1}^n x_i = \sumx$. Now, if we take any $\eps > 0$,
then we claim that there exists an input $(\til{n}, \bm{\til{x}})$
with $\til{x}_i \le \eps$ for all $i \in [\til{n}]$, $\sum_{i=1}^{\til{n}} \til{x}_i = \sumx$,
and 
$$
\LPOPT(\til{n},\bm{\til{x}}) \le \LPOPT(n,\bm{x}).
$$
In order to prove this, it suffices to prove the following \textit{splitting argument}. The claim
then follows by applying this lemma a finite number of times.

\begin{lemma}[Splitting Argument] \label{lem:splitting_argument}
Given an input $(n, \bm{x})$ and an index $k\in[n]$,
construct an input $(n+1, \bm{\til{x}})$, where $\til{x}_i := x_i$ for $1 \le i < k$,
$\til{x}_k = \til{x}_{k+1} := x_k/2$, and $\til{x}_i := x_{i-1}$ for $k +1 < i \le n +1$.
Then, $\sum_{i=1}^{n+1} \til{x}_i = \sum_{i=1}^{n} x_i$,
and $\LPOPT(n+1,\bm{\til{x}}) \le \LPOPT(n,\bm{x})$.

\end{lemma}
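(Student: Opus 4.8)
The plan is to establish the inequality $\LPOPT(n+1,\bm{\til{x}}) \le \LPOPT(n,\bm{x})$ by taking an optimal solution of \ref{LP:instance_opt} on the refined input $(n+1,\bm{\til{x}})$ and \emph{merging} the two split coordinates back into one, producing a feasible solution of \ref{LP:instance_opt} on $(n,\bm{x})$ whose objective is no smaller. The identity $\sum_{i=1}^{n+1}\til{x}_i=\sum_{i=1}^{n}x_i$ is immediate from the construction. Concretely, let $(\til{c}_{\fp}(i),\til{c}_{\bp}(i))_{i=1}^{n+1}$ be optimal for \ref{LP:instance_opt} on $(n+1,\bm{\til{x}})$, and for $\sigma\in\{\fp,\bp\}$ define a merged solution on $(n,\bm{x})$ by $c_{\sigma}(i):=\til{c}_{\sigma}(i)$ for $i<k$, $c_{\sigma}(k):=(\til{c}_{\sigma}(k)+\til{c}_{\sigma}(k+1))/2$, and $c_{\sigma}(i):=\til{c}_{\sigma}(i+1)$ for $k<i\le n$ (so old index $i\ge k+1$ matches new index $i+1$). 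Non-negativity is clear.

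The key bookkeeping observation is that merging preserves the weighted partial masses appearing on the right-hand side of \eqref{eqn:LP_feasible}: since $\til{x}_k=\til{x}_{k+1}=x_k/2$ we have $x_k c_{\sigma}(k)=\til{x}_k\til{c}_{\sigma}(k)+\til{x}_{k+1}\til{c}_{\sigma}(k+1)$, and $x_j c_\sigma(j)=\til{x}_j\til{c}_\sigma(j)$ (resp.\ $=\til{x}_{j+1}\til{c}_\sigma(j+1)$) for $j<k$ (resp.\ $j>k$). Hence for every old element $m$ and every $\sigma$, the sum $\sum_{j<_{\sigma}m}x_j c_{\sigma}(j)$ coincides with the analogous sum over the new instance indexed by the new label of $m$. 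For old elements lying strictly before or strictly after the split this transfers the feasibility constraint \eqref{eqn:LP_feasible} verbatim from the new solution. For the merged element $k$ itself, the required right-hand side equals $1-\sum_{j<k}\til{x}_j\til{c}_{\sigma}(j)$, which dominates the right-hand side bounding $\til{c}_{\sigma}(k)$ (they are equal) and also the one bounding $\til{c}_{\sigma}(k+1)$ (which has the extra nonnegative term $\til{x}_k\til{c}_{\sigma}(k)$ subtracted); therefore it dominates the average $c_{\sigma}(k)$. I would carry these checks out for $\sigma=\fp$ with prefixes, and invoke the obvious symmetry between $\fp$ and $\bp$ (prefixes becoming suffixes, the first element moving to the other end) for $\sigma=\bp$.

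For the objective, note that for old $i\ne k$ the quantity $(c_{\fp}(i)+c_{\bp}(i))/2$ is literally one of the quantities $(\til{c}_{\fp}(i')+\til{c}_{\bp}(i'))/2$ from the new solution, hence at least $\LPOPT(n+1,\bm{\til{x}})$; and $(c_{\fp}(k)+c_{\bp}(k))/2=\tfrac12\big[(\til{c}_{\fp}(k)+\til{c}_{\bp}(k))/2+(\til{c}_{\fp}(k+1)+\til{c}_{\bp}(k+1))/2\big]$ is an average of two such quantities, hence also at least $\LPOPT(n+1,\bm{\til{x}})$. Thus $\min_{i\in[n]}(c_{\fp}(i)+c_{\bp}(i))/2\ge \LPOPT(n+1,\bm{\til{x}})$, and since the merged solution is feasible for \ref{LP:instance_opt} on $(n,\bm{x})$, we conclude $\LPOPT(n,\bm{x})\ge \LPOPT(n+1,\bm{\til{x}})$, as desired.

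I expect the only real obstacle to be the index bookkeeping: keeping the correspondence between old index $i\ge k+1$ and new index $i+1$ consistent on both the forward and backward sides, and verifying that the weighted partial sums line up \emph{exactly} so feasibility is genuinely inherited. Everything else is routine.
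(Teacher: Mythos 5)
Your proposal is correct and follows the same strategy as the paper: take an optimal solution on the split instance, merge coordinates $k$ and $k+1$ by averaging, and show the merged solution is feasible with no smaller objective on the original instance. Your feasibility check at index $k$ (averaging the two constraints and dropping the nonnegative term $\til{x}_k\cs_\sigma(k)$) and your grouping of $(c_\fp(k)+c_\bp(k))/2$ as an average of the two original objective values are minor, and arguably cleaner, variants of the paper's case-split and regrouping, but the underlying argument is the same.
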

\begin{proof}

    Given $(n+1, \bm{\til{x}})$, suppose that $(\cs_{\fp}(i), \cs_{\bp}(i))_{i=1}^{n+1}$ is an optimal solution to \ref{LP:instance_opt}.
    For input $(n, \bm{x})$, we shall construct a feasible solution $(c_{\fp}(i),c_{\bp}(i))_{i=1}^n$ to \ref{LP:instance_opt} 
    such that 
    \begin{equation} \label{eqn:compared_solutions}
       \min_{1 \le i \le n} (c_{\fp}(i) + c_{\bp}(i)) \ge \min_{1 \le i \le n +1}( \cs_{\fp}(i)+ \cs_{\bp}(i)). 
    \end{equation}
    This will complete the proof, as the feasibility implies that $\LPOPT(n,\bm{x}) \ge \min_{1 \le i \le n} (c_{\fp}(i) + c_{\bp}(i))/2$, and so combined with \eqref{eqn:compared_solutions},
    $$\LPOPT(n,\bm{x}) \ge \min_{1 \le i \le n} (c_{\fp}(i) + c_{\bp}(i))/2 \ge \min_{1 \le i \le n +1}( \cs_{\fp}(i)+ \cs_{\bp}(i))/2 = \LPOPT(n+1,\bm{\til{x}}).$$
    For each $\sigma \in \{\fp, \bp\}$, we define $c_{\sigma}(i)$ based on the following cases:
    \begin{equation}
        c_{\sigma}(i) := \begin{cases}
\cs_{\sigma}(i) & \text{if } i < k,\\
(\cs_{\sigma}(k) + \cs_{\sigma}(k+1))/2  & \text{if } i = k,\\
\cs_{\sigma}(i+1)  & \text{if } k < i \le n.
\end{cases}
        \end{equation}
First observe that
    \begin{equation} \label{eqn:equality_cases}
        c_{\fp}(i) + c_{\bp}(i) = \begin{cases}
\cs_{\fp}(i) + \cs_{\bp}(i) & \text{if } i < k,\\
\cs_{\fp}(i+1) + \cs_{\bp}(i+1) & \text{if } k < i \le n.
\end{cases}
        \end{equation}
On the other hand,
\begin{align}
    c_{\fp}(k) + c_{\bp}(k) &= (\cs_{\fp}(k) + \cs_{\fp}(k+1))/2 + (\cs_{\bp}(k) + \cs_{\bp}(k+1))/2 \notag \\
                            &= (\cs_{\fp}(k) + \cs_{\bp}(k+1))/2 + (\cs_{\fp}(k) + \cs_{\bp}(k+1))/2 \notag \\
                            &\ge \min\{ \cs_{\fp}(k) + \cs_{\bp}(k+1), \cs_{\fp}(k) + \cs_{\bp}(k+1)\}. \label{eqn:lower_bound_by_min}
\end{align}
Thus, \eqref{eqn:equality_cases} and \eqref{eqn:lower_bound_by_min} immediately imply \eqref{eqn:compared_solutions}.

We shall now argue that $(c_{\fp}(i),c_{\bp}(i))_{i=1}^n$ is a feasible solution to \ref{LP:instance_opt}.
We focus on verifying \eqref{eqn:LP_feasible} of \ref{LP:instance_opt} for $\sigma = \fp$,
as the case of $\sigma = \bp$ proceeds identically.
First observe that since $(\cs_{\fp}(i), \cs_{\bp}(i))_{i=1}^{n+1}$ is a feasible solution to \ref{LP:instance_opt}, we have that for all $i \in [n +1]$,
\begin{equation} \label{eqn:proposed_solution_feasible}
    \cs_{\fp}(i) + \sum_{j < i} \cs_{\fp}(j) \til{x}_j \le 1
\end{equation}
Now, for $i < k$, $c_{\fp}(i) + \sum_{j< i} c_{\fp}(j) x_j = \cs_{\fp}(i) + \sum_{j < i} \cs_{\fp}(j) \til{x}_j,$ so \eqref{eqn:LP_feasible} of \ref{LP:instance_opt} immediately holds due to \eqref{eqn:proposed_solution_feasible}.
For $i= k$, we first observe that since $c_{\fp}(k) = (\cs_{\fp}(k) + \cs_{\fp}(k+1))/2$,
\begin{equation*}
    c_{\fp}(k) \le \max\{ \cs_{\fp}(k), \cs_{\fp}(k+1)\},
\end{equation*}
and so $c_{\fp}(k) \le \cs_{\fp}(k)$ or $c_{\fp}(k) \le \cs_{\fp}(k+1)$. We handle both cases separately.
If $c_{\fp}(k) \le \cs_{\fp}(k)$, then 
\begin{align*}
    c_{\fp}(k) + \sum_{j < k} c_{\fp}(j) x_j = c_{\fp}(k) + \sum_{j < k} \cs_{\fp}(j) \til{x}_j \le \cs_{\fp}(k) + \sum_{j < k} \cs_{\fp}(j) \til{x}_j \le 1,
\end{align*}
where the final inequality applies \eqref{eqn:proposed_solution_feasible} with $i =k$.
On the other hand, if $c_{\fp}(k) \le \cs_{\fp}(k+1)$, then
\begin{align*}
    c_{\fp}(k) + \sum_{j < k} c_{\fp}(j) x_j = c_{\fp}(k) + \sum_{j < k} \cs_{\fp}(j) \til{x}_j \le \cs_{\fp}(k+1) + \sum_{j \le k} \cs_{\fp}(j) \til{x}_j \le 1,
\end{align*}
where the final inequality applies \eqref{eqn:proposed_solution_feasible} with $i=k+1$.
It remains to verify the case when $i > k$. Observe that since $c_{\fp}(k) = (\cs_{\fp}(k) + \cs_{\fp}(k+1))/2$ and $\til{x}_k = \til{x}_{k+1} =x_k/2$,
\begin{equation} \label{eqn:key_equality_selections}
c_{\fp}(k)  x_k = \cs_{\fp}(k) \til{x}_k + \cs_{\fp}(k+1)\til{x}_{k+1}.
\end{equation}
Thus, using the definition of $(c_f(j))_{j \le i}$,
\begin{align*}
    c_{\fp}(i) + \sum_{j< i} c_{\fp}(j) x_j &= \cs_{\fp}(i+1) + \sum_{j < k} \cs_{\fp}(j) \til{x}_j + c_{\fp}(k) x_k + \sum_{k +1 < j \le i } \cs_{\fp}(j) \til{x}_{j} \\
                                            &= \cs_{\fp}(i+1) + \sum_{j < k} \cs_{\fp}(j) \til{x}_j + \cs_{\fp}(k) \til{x}_k + \cs_{\fp}(k+1)\til{x}_{k+1} + \sum_{k +1 < j \le i } \cs_{\fp}(j) \til{x}_{j} \\
                                            &= \cs_{\fp}(i+1) + \sum_{j \le i} \cs_{\fp}(j) \til{x}_j \le 1,
\end{align*}
where the second equality follows by \eqref{eqn:key_equality_selections}, and the final inequality applies \eqref{eqn:proposed_solution_feasible}.
Thus, $(c_{\fp}(i),c_{\bp}(i))_{i=1}^n$ is a feasible solution to \ref{LP:instance_opt}, and so the proof is complete.
\end{proof}

\section{Additions to \Cref{sec:knapsack}}

\subsection{Induction Proof} \label{sec:inductionPf}

We proceed inductively, beginning with the base case for element $i_1 = \sigma^{-1}(1) \in \{1, n\}$:

\begin{lemma}[Base Case] \label{lem:base_case_knapsack}
    Fix $\sigma \in \{\fp,\bp\}$. Then, \eqref{eqn:inductive_invariant} and \eqref{eqn:inductive_rate} hold for $i_1 = \sigma^{-1}(1)$.
\end{lemma}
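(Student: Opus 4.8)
The plan is to exploit the fact that no element precedes $i_1 = \sigma^{-1}(1)$ in the order $\sigma$, which forces $\con_\sigma(i_1)$ to be deterministically zero and thereby reduces the base case to a triviality.

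First I would recall the conventions $\con_{\fp}(1) := 0$ and $\con_{\bp}(n) := 0$. Since $i_1$ is by definition the first element processed under $\sigma$, the sum $\con_\sigma(i_1) = \sum_{j <_\sigma i_1} S_j A_j$ is empty, so $\con_\sigma(i_1) = 0$; in particular $\Pr[\con_\sigma(i_1) = 0 \mid \rp = \sigma] = 1$ and $\Pr[0 < \con_\sigma(i_1) \le 1 - s_{i_1} \mid \rp = \sigma] = 0$, exactly as was already noted when \Cref{alg:knapsack_CRS} was introduced.

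Given this, verifying \eqref{eqn:inductive_rate} is immediate: $c_\sigma(i_1)$ is a probability, so $c_\sigma(i_1) \le 1 = \Pr[\con_\sigma(i_1) = 0 \mid \rp = \sigma]$ (in fact \eqref{eqn:knapsack_constraint_easy} of \Cref{def:knapsack_feasible} applied with $i = i_1$ even gives $c_\sigma(i_1) \le 1/2$). For \eqref{eqn:inductive_invariant}, fix $0 < b \le 1/2$: since $\con_\sigma(i_1) = 0$ almost surely conditioned on $\rp = \sigma$, both $\Pr[0 < \con_\sigma(i_1) \le b \mid \rp = \sigma]$ and $\Pr[b < \con_\sigma(i_1) \le 1 - b \mid \rp = \sigma]$ vanish, so \eqref{eqn:inductive_invariant} reduces to the trivial inequality $0 \le \exp(0) = 1$.

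I do not anticipate any real obstacle here; the only point worth a remark is the degenerate case $c_\sigma(i_1) = 0$, where the ratio in \eqref{eqn:inductive_invariant} formally reads $0/0$. This is harmless: one may either interpret \eqref{eqn:inductive_invariant} as vacuous in that case, or observe that the algorithm then never accepts $i_1$, leaving the downstream induction unaffected (and in the concrete construction behind \Cref{thm:knapsack_postive} one has $c_\sigma(i_1) = \phi(0) = 4/9 > 0$ anyway). All the substantive work lies in the inductive step, which is carried out next.
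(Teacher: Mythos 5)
Your proof is correct and takes essentially the same route as the paper's: both observe that $\con_\sigma(i_1) = \sum_{j <_\sigma i_1} S_j A_j$ is an empty sum, so $\con_\sigma(i_1) = 0$ deterministically, which makes \eqref{eqn:inductive_invariant} hold with LHS equal to $0$ and makes \eqref{eqn:inductive_rate} reduce to $c_\sigma(i_1) \le 1$. Your extra remark about the degenerate case $c_\sigma(i_1) = 0$ and the observation that \eqref{eqn:knapsack_constraint_easy} even yields $c_\sigma(i_1) \le 1/2$ are both accurate but not needed; the paper simply uses $c_\sigma(i_1) \le 1$.
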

\begin{proof}[Proof of \Cref{lem:base_case_knapsack}]
For $\sigma \in \{\fp, \bp\}$, we verify \eqref{eqn:inductive_invariant} and \eqref{eqn:inductive_rate} for $i_1 = \sigma^{-1}(1)$.
Observe that for any $0 < b \le 1/2$, since
$\Pr[0 < \con_{\sigma}(i_1) \le b \mid \rp = \sigma] = 0$, the LHS of \eqref{eqn:inductive_invariant} is $0$, and so \eqref{eqn:inductive_invariant} holds. 
Similarly, since $\Pr[\con_{\sigma}(i_1) = 0 \mid \rp = \sigma] = 1$ and $c_{\sigma}(i_1) \le 1$ by \Cref{def:knapsack_feasible}, \eqref{eqn:inductive_rate} holds. 
\end{proof}

We now complete the inductive step for \eqref{eqn:inductive_invariant}. This is similar to
the proof of Lemma $4$ in \citet{jiang2022tight}.
\begin{lemma}[Inductive Step for \eqref{eqn:inductive_invariant}] \label{lem:inductive_invariant}
Fix $\sigma \in \{\fp,\bp\}$ and $i \in [n]$ with $i_1 <_{\sigma} i$. If \eqref{eqn:inductive_invariant} and \eqref{eqn:inductive_rate} hold for all $j <_{\sigma} i$, then \eqref{eqn:inductive_invariant} holds for $i$.
\end{lemma}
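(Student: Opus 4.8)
The plan is to pass from element $i$ to its immediate predecessor $i'$ in the order $\sigma$ (which exists since $i_1 <_\sigma i$), for which \eqref{eqn:inductive_invariant} and \eqref{eqn:inductive_rate} are available, and to analyze how processing $i'$ transforms the conditional law of $\con_\sigma(i') = \con_\sigma(i) - S_{i'}A_{i'}$ into that of $\con_\sigma(i)$. Throughout I condition on $\Lambda = \sigma$ and abbreviate $\lambda := c_\sigma(i_1)$ and $c' := c_\sigma(i')$, noting $c' \le \lambda$ by item~1 of \Cref{def:knapsack_feasible}. Since \eqref{eqn:inductive_rate} holds for $i'$, \Cref{lem:implied_by_hypothesis} applies: conditional on $S_{i'} = s < \infty$, element $i'$ is accepted with probability exactly $c'$, with the ``priority'' structure that it is accepted on essentially every path with $0 < \con_\sigma(i') \le 1-s$ unless that would overshoot probability $c'$, and only then, if necessary, on a fraction of the paths with $\con_\sigma(i') = 0$.

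Concretely, I would fix $s$ and split into the two cases of \Cref{alg:knapsack_CRS}. In Case~A, where $c' \le \Pr[0 < \con_\sigma(i') \le 1-s]$, processing $i'$ takes a $c'/\Pr[0 < \con_\sigma(i') \le 1-s]$-fraction of the mass that $\con_\sigma(i')$ places on $(0,1-s]$ and shifts it rightward by $s$, leaving everything else (including the atom at $0$) in place. In Case~B, where $c' > \Pr[0 < \con_\sigma(i') \le 1-s]$, \emph{all} of the mass on $(0,1-s]$ is shifted rightward by $s$, and additionally an atom of mass $c' - \Pr[0 < \con_\sigma(i') \le 1-s]$ is moved from $0$ to the point $s$. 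In either case the total positive mass created equals $c'$, matching the acceptance rate of \Cref{lem:implied_by_hypothesis}.

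Fixing a threshold $b \in (0,1/2]$, I would then bound $\Pr[0 < \con_\sigma(i) \le b]$ from above and $\Pr[b < \con_\sigma(i) \le 1-b]$ from below in terms of the analogous quantities for $\con_\sigma(i')$, using three elementary monotonicity facts: (i) a rightward shift by $s>0$ never produces positive mass inside $(0,s]$ out of old positive mass, so any \emph{new} contribution to $(0,b]$ must come from an accepted zero-path landing at $s \le b$, which occurs only in Case~B and has total mass $c' - \Pr[0 < \con_\sigma(i') \le 1-s]$; (ii) because $b \le 1/2$, mass leaving $(b,1-b]$ under an admissible shift can only land above $1-b$, so it was originally within distance $s$ of the right endpoint $1-b$; (iii) mass pushed out of $(0,b]$ into $(b,1-b]$ helps both sides of the target inequality. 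Feeding these bounds into the hypothesis $\Pr[0<\con_\sigma(i')\le b']\le\lambda\exp(-\Pr[b'<\con_\sigma(i')\le 1-b']/\lambda)$ at the relevant thresholds $b'$, and using $1-x\le e^{-x}$ together with the monotonicity and convexity of $t\mapsto\lambda e^{-t/\lambda}$, should yield \eqref{eqn:inductive_invariant} for $\con_\sigma(i)$ conditional on $S_{i'}=s$. The unconditional statement then follows by averaging over $S_{i'}\sim F_{i'}$ (inactive realizations leaving $\con_\sigma(i)$ unchanged); crucially this averaging must be carried out at the level of the mass accounting rather than applied to the final exponential bound, since a direct appeal to Jensen on $t\mapsto\lambda e^{-t/\lambda}$ would point the wrong way.

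The main obstacle I expect is exactly the interaction between the Case~A/B split and the averaging over the random size $S_{i'}$: the threshold $\Pr[0 < \con_\sigma(i') \le 1-s]$ that separates the two cases depends on $s$, and the exponential profile in \eqref{eqn:inductive_invariant} is calibrated so that the ``cost'' of creating a sliver of new positive mass just above $0$ in Case~B (namely $c' - \Pr[0<\con_\sigma(i')\le 1-s]$) is offset by the large amount $\Pr[0<\con_\sigma(i')\le 1-s]$ of fitting mass that the algorithm prioritized and shoved toward the middle band $(b,1-b]$. Making this trade-off quantitative, and checking that it survives both the mixture over $s$ and the boundary between the cases, is the delicate part; this mirrors the proof of Lemma~4 of \citet{jiang2022tight}, which I would follow as a template, adapting it to the varying selection values $c_\sigma(\cdot)$ by consistently using $\lambda = c_\sigma(i_1)$ as the scale in the invariant and invoking $c'\le\lambda$ wherever monotonicity of the acceptance rate is needed.
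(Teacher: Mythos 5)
Your plan is correct and matches the paper's proof essentially step for step: both condition on $\Lambda = \sigma$, pass from the predecessor $i'$ to $i$, split on whether $s_{i'}$ is ``$0$-avoiding'' (your Case~A) or ``$0$-using'' (your Case~B) exactly as the paper does, track the shifted mass with bookkeeping functions, average over $S_{i'}$ before applying $1-z\le e^{-z}$ rather than after (your Jensen remark is the right reason), and use $c'\le c_\sigma(i_1)$ where monotonicity is needed --- all of which mirrors the paper's adaptation of Lemma~4 of \citet{jiang2022tight}. You leave the quantitative bookkeeping (the sets $\scr{S}_{i,1},\dots,\scr{S}_{i,4}$, the auxiliary functions $a_1,a_3$, and the final case split on $p_1=0$ versus $p_1>0$) as a sketch, but you have correctly identified where the work lies and the right template to execute it.
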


\begin{proof}[Proof of \Cref{lem:inductive_invariant}]
Let us condition on $\rp = \fp$. We prove the claim for this case, as when $\rp = \bp$,
the argument proceeds identically. In order to simplify the notation, we implicitly condition
on $\rp = \fp$ in all of our computations. More, to be consistent with the indexing in \Cref{alg:knapsack_CRS}, we assume that \eqref{eqn:inductive_invariant} and \eqref{eqn:inductive_rate} hold for all $1 \le j \le i$, and prove that \eqref{eqn:inductive_invariant} holds for $i+1$. Fix $0 < b \le 1/2$, and observe that due
to the induction hypothesis, we know that
\begin{equation} \label{eqn:inductive_invariant_appendix}
    \frac{\Pr[0 < \con_{\fp}(i) \le b]}{c_{\fp}(1)} \le \exp\left(-\frac{\Pr[b < \con_{\fp}(i) \le 1 - b]}{c_{\fp}(1)} \right). 
\end{equation}
In order to extend this to $i+1$, we have to consider what happens when \Cref{alg:knapsack_CRS} processes element $i$, as this will determine how the \textit{distribution} of $\con_{\fp}(i+1)$ differs from the distribution of $\con_{\fp}(i)$. 
We refer to $s_i \in [0,1]$ as \textit{$0$-avoiding}, provided $c_{\fp}(i) \le \Pr[0 < \con_{\fp}(i) \le 1 - s_i]$. 
Otherwise, if $c_{\fp}(i) > \Pr[0 < \con_{\fp}(i) \le 1 - s_i]$,
then we refer to $s_i$ as \textit{$0$-using}. Now, recalling the definition of $B_{\fp}(i)$ from \Cref{alg:knapsack_CRS}, we know that if $s_i$ is $0$-avoiding, then 
\begin{align*} 
&\Pr[ B_{\fp}(i) =1 \mid 0 < \con_{\fp}(i) \le 1 - s_i] =\cfrac{c_{\fp}(i)}{\Pr[0 < \con_{\fp}(i) \le 1 - s_i]}. \\
&\Pr[ B_{\fp}(i) =1 \mid \con_{\fp}(i) =0] =0
\end{align*}
Otherwise, if $s_i$ is $0$-using, then
\begin{align*} 
&\Pr[ B_{\fp}(i) =1 \mid 0 < \con_{\fp}(i) \le 1 - s_i] =1. \\
&\Pr[ B_{\fp}(i) =1 \mid \con_{\fp}(i) =0] =\cfrac{c_{\fp}(i) - \Pr[0 < \con_{\fp}(i) \le 1 - s_i]}{\Pr[\con_{\fp}(i) =0]}
\end{align*}
(Observe that the final fraction is at most $1$, since we assumed \eqref{eqn:inductive_rate} holds for $i$, and so \eqref{eqn:knapsack_correction_bernoulli} of \Cref{lem:implied_by_hypothesis} applies).
Thus, when $s_i$ is $0$-avoiding, we never accept $i$ when $\con_{\fp}(i) =0$. Conversely, when
$i$ is  $0$-using, there is a non-zero probability that $i$ is accepted when $\con_{\fp}(i) =0$.
We further classify $s_i \in [0,1]$:
\begin{align*}
    &\scr{S}_{i,1} = \{s_i \in (0,b]: \text{$s_i$ is $0$-using}\} \\
    &\scr{S}_{i,2} = \{s_i \in (b,1-b]: \text{$s_i$ is $0$-using} \} \\
    &\scr{S}_{i,3} = \{s_i \in (0,1-b]: \text{$s_i$ is $0$-avoiding} \} \\
    &\scr{S}_{i,4} = (1-b,1] \cup \{0\}
\end{align*}
Before continuing, we define two functions on $[0,1]$, whose usage will become clear below: 
\begin{equation} \label{eqn:a_1}
 a_{1}(s) := \begin{cases}
 \Pr[b - s < \con_{\fp}(i) \le b] & \text{if } s \in \scr{S}_{i,1}. \\
 0 & \text{if } s \in [0,1] \setminus \scr{S}_{i,1}.
    \end{cases}
\end{equation}
\begin{equation} \label{eqn:a_3}
 a_{3}(s) := \begin{cases}
 \frac{\Pr[b -s < \con_{\fp}(i) \le b] \cdot c_{\fp}(i)}{\Pr[0 < \con_{\fp}(i) \le 1 -s]} & \text{if } s \in \scr{S}_{i,3}. \\
 0 & \text{if } s \in [0,1] \setminus \scr{S}_{i,3}.
    \end{cases}
\end{equation}
(Here \eqref{eqn:a_3} is well-defined, since $c_{\fp}(i) \le \Pr[0 < \con_{\fp}(i) \le 1 -s]$ for $s \in \scr{S}_{i,3}$).
We now upper bound $\Pr[0 < \con_{\fp}(i+1) \le b \mid S_i = s_i]$ and $\Pr[b < \con_{\fp}(i+1) \le 1- b \mid S_i = s_i]$ for the various classifications of $s_i$. In the explanations, we implicitly condition on $S_i = s_i$, which we note is independent of $\con_{\fp}(i)$.

If $s_i \in \scr{S}_{i,1}$: Observe $0 < \con_{\fp}(i+1) \le b$ occurs if and only if  $0 < \con_{\fp}(i) \le b - s_i$, or $\{ B_{\fp}(i)=1 \} \cap \{\con_{\fp}(i) = 0\}$. Now, due to the definition of $B_{\fp}(i)$ when $s_i$ is $0$-using, this final event
occurs with probability $$c_{\fp}(i) - \Pr[0 < \con_{\fp}(i) \le 1 - s_i] \le c_{\fp}(i) - \Pr[0 < \con_{\fp}(i) \le 1 - b],$$
where the inequality uses $1- b \le 1- s_i$. Similarly, if  $b < \con_{\fp}(i+1) \le 1- b$ occurs, then
$b < \con_{\fp}(i) \le 1 -b$ or  $b - s_i < \con_{\fp}(i) \le b$. Thus,
using the definition of $a_1$ from \eqref{eqn:a_1}:
\begin{align*}
&\Pr[0 < \con_{\fp}(i+1) \le b \mid S_i = s_i] \le \Pr[0 < \con_{\fp}(i) \le b] - a_{1}(s_i) + (c_{\fp}(i) - \Pr[0 < \con_{\fp}(i) \le 1- b]), \\
&\Pr[b < \con_{\fp}(i+1) \le 1- b \mid S_i = s_i] \le \Pr[b < \con_{\fp}(i) \le 1- b] + a_{1}(s).
\end{align*}
If $s_i \in \scr{S}_{i,2}$: Since $b<s_i \le 1- b$ and $s_i$ is $0$-using, we know that $0 < \con_{\fp}(i+1) \le b$ cannot 
occur. On the other hand, $b < \con_{\fp}(i+1) \le 1- b$ occurs only if $0 < \con_{\fp}(i) \le 1- b$,
or $\{\con_{\fp}(i) = 0\} \cap \{B_{\fp}(i) =1\}$. Due to the definition of $B_{\fp}(i)$ when $s_i$ is $0$-using, this final event occurs with probability $$c_{\fp}(i) - \Pr[0 < \con_{\fp}(i) \le 1- s_i] \le
c_{\fp}(i) - \Pr[0 < \con_{\fp}(i) \le b],$$ where the inequality holds since $1-s_i \ge b$.
Thus,
\begin{align*}
&\Pr[0 < \con_{\fp}(i+1) \le b \mid S_i = s_i] =0, \\ 
&\Pr[b < \con_{\fp}(i+1) \le 1- b \mid S_i = s_i] \le \Pr[b < \con_{\fp}(i) \le 1- b] + c_{\fp}(i).
\end{align*}
If $s_i \in \scr{S}_{i,3}$: Observe $0 < \con_{\fp}(i+1) \le b$ occurs if and only if $0 < \con_{\fp}(i) \le b - s_i$, 
or $\{b -s_i < \con_{\fp}(i) \le b\} \cap \{B_{\fp}(i) = 0\}$. On the other hand, since $s_i$ is $0$-avoiding, the
probability of the latter event is $$\Pr[b -s_i < \con_{\fp}(i) \le b] \cdot \left(1 - \frac{c_{\fp}(i)}{\Pr[0 < \con_{\fp}(i) \le 1 -s_i]}\right).$$
More, $b < \con_{\fp}(i+1) \le 1- b$ occurs only if $b < \con_{\fp}(i) \le 1- b$ or  $\{b -s_i < \con_{\fp}(i) \le b\} \cap \{B_{\fp}(i) = 1\}$.
Thus, recalling the definition of $a_3$ from \eqref{eqn:a_3},
we have that
\begin{align*}
&\Pr[0 < \con_{\fp}(i+1) \le b \mid S_i = s_i ] \le     \Pr[0 < \con_{\fp}(i) \le b] - a_{3}(s_i), \\
&\Pr[b < \con_{\fp}(i+1) \le 1- b \mid S_i = s_i ] \le     \Pr[b < \con_{\fp}(i) \le 1- b] + a_{3}(s_i).
\end{align*}
If $s_i \in \scr{S}_{i,4}$: Then, since $s_i=0$ or $s_i > 1- b$, the relevant probabilities are non-increasing.
\begin{align*}
&\Pr[0 < \con_{\fp}(i+1) \le b \mid S_i = s_i ] \le     \Pr[0 < \con_{\fp}(i) \le b], \\
&\Pr[b < \con_{\fp}(i+1) \le 1- b \mid S_i = s_i ] \le     \Pr[b < \con_{\fp}(i) \le 1- b].
\end{align*}
We now define
$\til{a}_1 := \mb{E}[a_{1}(S_i)]$, $\til{a}_3 := \mb{E}[ a_{3}(S_i)]$,
and $p_{k} := \Pr[S_i \in \scr{S}_{i,k}]$ for $k \in [4]$.
Using the upper bounds for $\Pr[0 < \con_{\fp}(i+1) \le b \mid S_{i} = s_i]$ and averaging over $S_i$, after simplification we get that:
\begin{align}
    \Pr[0 < \con_{\fp}(i+1) \le b] \le  \Pr[0 < \con_{\fp}(i) \le b] + (c_{\fp}(i) - \Pr[0 < \con_{\fp}(i) \le 1- b]) p_1 \notag \\
    - \til{a}_1 - \Pr[0 < \con_{\fp}(i) \le b] p_2 - \til{a}_3 \label{eqn:first_interval_upper_bound}
\end{align}
Similarly, after applying the upper bounds for $\Pr[b < \con_{\fp}(i+1) \le 1- b \mid S_i = s_i]$ and averaging over $S_i$:
\begin{align}
    \Pr[b < \con_{\fp}(i+1) \le 1- b] \le \Pr[b < \con_{\fp}(i) \le 1- b] + \til{a}_1 + c_{\fp}(i) p_2 + \til{a}_3. \label{eqn:second_interval_upper_bound}
\end{align}
The remaining computations are mostly algebraic and follow the derivation
of Lemma $4$ in \citet{jiang2022tight}, however we sketch the main steps for completeness.
Let us first consider when $p_1 =0$. In this case, $\til{a}_1 =0$, and so applied to \eqref{eqn:first_interval_upper_bound}, we get that
\begin{equation}\label{eqn:first_interval_upper_bound_simp}
    \Pr[0 < \con_{\fp}(i+1) \le b] \le \Pr[0 < \con_{\fp}(i) \le b]  - \Pr[0 < \con_{\fp}(i) \le b] p_2 - \til{a}_3.
\end{equation}
Moreover, using $\til{a}_1 =0$ and $c_{\fp}(i) \le c_{\fp}(1)$ (due to \Cref{def:knapsack_feasible}), \eqref{eqn:second_interval_upper_bound} simplifies to
\begin{equation} \label{eqn:second_interval_upper_bound_simp}
    \Pr[b < \con_{\fp}(i+1) \le 1- b] \le \Pr[b < \con_{\fp}(i) \le 1- b] + c_{\fp}(1) p_2 + \til{a}_3
\end{equation}
Thus, applying the elementary bounds of $1 -z \le \exp(-z)$ and  $\exp(-z) \le 1$ to \eqref{eqn:second_interval_upper_bound_simp}, followed by 
\eqref{eqn:inductive_invariant_appendix} (our induction hypothesis)
\begin{align}
    \exp\left(-\frac{\Pr[b < \con_{\fp}(i+1) \le 1- b]}{c_{\fp}(1)} \right) &\ge \exp\left(-\frac{\Pr[b < \con_{\fp}(i) \le 1- b]}{c_{\fp}(1)} \right) (1 - p_2) -\frac{\til{a}_3}{c_{\fp}(1)} \\
    &\ge \frac{\Pr[0 < \con_{\fp}(i) \le b]}{c_{\fp}(1)} (1-p_2) - \frac{\til{a}_3}{c_{\fp}(1)}.
\end{align}
By combining this with \eqref{eqn:first_interval_upper_bound_simp} (after dividing by $c_{\fp}(1)$), we have extended \eqref{eqn:inductive_invariant_appendix} to $i+1$
as desired. It remains to consider when $p_1 > 0$. Then, since $p_1 \le 1 - p_2$ and $c_{\fp}(i) \le c_{\fp}(1)$,
we can write \eqref{eqn:first_interval_upper_bound} as 
\begin{align}
    \Pr[0 < \con_{\fp}(i+1) \le b] \le (c_{\fp}(1) - \Pr[b < \con_{\fp}(i+1) \le 1 -b])(1 - p_2) - \til{a}_1 -\til{a}_3. \label{eqn:easy_step_invariant}
\end{align}
Moreover, by using $c_{\fp}(i) \le c_{\fp}(1)$ and the same elementary bounds as before,
\begin{align}
    \exp\left( -\frac{\Pr[b < \con_{\fp}(i+1) \le 1- b]}{c_{\fp}(i)}\right) &\ge \exp\left( -\frac{\Pr[b < \con_{\fp}(i) \le 1- b]}{c_{\fp}(i)} - p_2\right) - \frac{\til{a}_1}{c_{\fp}(1)} - \frac{\til{a}_3}{c_{\fp}(1)} \notag \\
    &\ge \left(1 - \frac{\Pr[b < \con_{\fp}(i) \le 1- b]}{c_{\fp}(1)}\right)(1-p_2) - \frac{\til{a}_1}{c_{\fp}(1)} - \frac{\til{a}_3}{c_{\fp}(1)}. \label{eqn:easy_step_invariant_2}
\end{align}
Thus, after dividing \eqref{eqn:easy_step_invariant} by $c_{\fp}(1)$ and applying \eqref{eqn:easy_step_invariant_2},
we have extended \eqref{eqn:inductive_invariant_appendix} to $i+1$ as desired.
\end{proof}

The next lemma provides the details of how to use \eqref{eqn:inductive_invariant} to simplify the integral in \eqref{eqn:expectation_reformulate}. This is essentially the proof of Theorem $5$ from \citet{jiang2022tight}; however we include it for completeness.
\begin{lemma}[Inductive Step  for \eqref{eqn:inductive_rate}] \label{lem:inductive_rate}
Fix $\sigma \in \{\fp,\bp\}$ and $i \in [n]$ with $i_1 <_{\sigma} i$. If \eqref{eqn:inductive_invariant} and \eqref{eqn:inductive_rate} hold for all $j <_{\sigma} i$, and \eqref{eqn:inductive_invariant} holds for $i$, then \eqref{eqn:inductive_rate} holds for $i$.
\end{lemma}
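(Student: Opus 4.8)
The plan is to follow the route sketched after \eqref{eqn:expectation_reformulate}: convert the predecessors' acceptance probabilities into a formula for $\mb{E}[\con_\sigma(i)\mid\rp=\sigma]$, apply the integration-by-parts identity \eqref{eqn:expectation_reformulate}, control the resulting integral with the anti-concentration invariant \eqref{eqn:inductive_invariant}, and close with the feasibility constraints of \Cref{def:knapsack_feasible}. We may assume $i_1<_{\sigma} i$, since $i=i_1$ is \Cref{lem:base_case_knapsack}, and since \Cref{alg:knapsack_CRS} accepts $j$ only when $\con_\sigma(j)+S_j\le1$ we always have $\con_\sigma(i)\in[0,1]$, so \eqref{eqn:inductive_rate} is equivalent to $P\le 1-c_\sigma(i)$, where I abbreviate $P:=\Pr[0<\con_\sigma(i)\le1\mid\rp=\sigma]$, $M:=\sum_{j<_{\sigma} i}c_\sigma(j)\mu_j$, and $r:=c_\sigma(i_1)$. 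First, since \eqref{eqn:inductive_rate} holds for all $j<_{\sigma} i$ by hypothesis, \Cref{lem:implied_by_hypothesis} gives $\Pr[A_j\mid S_j=s_j,\rp=\sigma]=c_\sigma(j)$ for every $s_j\in[0,1]$; as this rate does not depend on the realized active size and $A_j=0$ when $j$ is inactive, $\mb{E}[S_jA_j\mid\rp=\sigma]=\mu_j c_\sigma(j)$, and summing over $j<_{\sigma} i$ gives $\mb{E}[\con_\sigma(i)\mid\rp=\sigma]=M$. Feeding this into \eqref{eqn:expectation_reformulate} shows $\int_0^1 f(\tau)\,d\tau = P-M$, where $f(\tau):=\Pr[0<\con_\sigma(i)\le\tau\mid\rp=\sigma]$ is non-decreasing with $f(0)=0$ and $f(1)=P$.

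The main work is to turn \eqref{eqn:inductive_invariant} into the estimate $P\le 2M + r\exp(-2M/r)$. Substituting $\tau\mapsto1-\tau$ on $[\tfrac12,1]$ gives $\int_0^1 f = \int_0^{1/2}\bigl(f(\tau)+f(1-\tau)\bigr)\,d\tau$. For $\tau\in(0,\tfrac12]$ one has $\Pr[\tau<\con_\sigma(i)\le1-\tau\mid\rp=\sigma]=f(1-\tau)-f(\tau)$, so \eqref{eqn:inductive_invariant} with $b=\tau$ reads $f(\tau)\le r\exp\!\bigl(-(f(1-\tau)-f(\tau))/r\bigr)$; this both gives $f(\tau)\le r$ and, after rearrangement, $f(1-\tau)\le f(\tau)+r\ln(r/f(\tau))$ wherever $f(\tau)>0$ (the zero set of $f$ contributes nothing). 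Inserting these two facts into the previous display and keeping $P-\int_0^1 f=M$ in mind reduces the problem to a one-dimensional integral inequality whose integrand can be optimized to yield $P\le 2M + r\exp(-2M/r)$. This step parallels the proof of Theorem~5 in \citet{jiang2022tight}, and as there a short case analysis according to the size of $M$ — invoking \eqref{eqn:knapsack_constraint_easy} of \Cref{def:knapsack_feasible} in one regime — may be needed to stitch the estimates together. Given the estimate, the feasibility inequality \eqref{eqn:knapsack_constraint_hard} gives $c_\sigma(i)\le1-2M-r\exp(-2M/r)\le1-P$, i.e.\ $\Pr[\con_\sigma(i)=0\mid\rp=\sigma]=1-P\ge c_\sigma(i)$, which is \eqref{eqn:inductive_rate}.

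The conceptual steps — invoking \Cref{lem:implied_by_hypothesis} and \eqref{eqn:expectation_reformulate} — are routine bookkeeping. The obstacle is the calculus of the second paragraph: converting the pointwise exponential inequality \eqref{eqn:inductive_invariant} relating the two tails $f(\tau)$ and $f(1-\tau)$ into the clean bound $P\le 2M + r\exp(-2M/r)$, with the delicate points being the region where $f$ vanishes and the fact that the specific constants of \Cref{def:knapsack_feasible} (equivalently, the slope of $\phi$ in \eqref{eqn:knapsack_continuous_solution}) are exactly what make everything fit. Staying close to the analogous derivation in \citet{jiang2022tight} is the safest route.
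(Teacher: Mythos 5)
Your plan is the same as the paper's: use \Cref{lem:implied_by_hypothesis} to get $\mb{E}[\con_\sigma(i)\mid\rp=\sigma]=M$, feed this into the integration-by-parts identity \eqref{eqn:expectation_reformulate} so that $P-\int_0^1 f=M$, combine with the invariant \eqref{eqn:inductive_invariant} to bound $P$, then close with \Cref{def:knapsack_feasible}. Your target inequality $P\le 2M+r\exp(-2M/r)$ is exactly what the paper proves (and, since $g(M)=2M+r e^{-2M/r}$ satisfies $g(0)=r$ and $g'\ge 0$, it subsumes the easy case $P\le r$, which the paper handles separately via \eqref{eqn:knapsack_constraint_easy}); combining it with \eqref{eqn:knapsack_constraint_hard} indeed yields \eqref{eqn:inductive_rate}.

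What you leave unfilled is precisely the bulk of the paper's proof: converting the pointwise bound $f(\tau)\le r\exp(-(f(1-\tau)-f(\tau))/r)$ into the clean estimate. The paper does this not by a generic "optimization of the integrand," but by a specific parametrization: it introduces $u_0\in(0,1)$ with $U_i(1)=r(u_0-\log u_0)$, defines a cut point $\tau_0$ where $U_i$ crosses the level $ru_0$, splits $\int_0^1 U_i$ into the ranges $[0,\tau_0]$, $[\tau_0,1/2]\cup[1/2,1-\tau_0]$, $[1-\tau_0,1]$, bounds the first and last by $\tau_0 r u_0$ and $\tau_0 U_i(1)$ using monotonicity, rewrites the middle as $\int_{\tau_0}^{1/2}(2U_i(\tau)-r\log(U_i(\tau)/r))$ via the invariant, and then uses \emph{convexity} of $z\mapsto 2z-r\log(z/r)$ (so the maximum over $[ru_0, r]$ is at an endpoint). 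This is the same chain of manipulations as Jiang et al.\ Theorem~5, which both you and the paper cite, so your deferral is not a wrong turn — but it is the step where essentially all the work sits, and it is not as routine as "optimize the integrand": the $u_0$ substitution is what linearizes the logarithm and makes the endpoint-maximum argument line up with \eqref{eqn:knapsack_constraint_hard}. If you want your sketch to become a proof, you need to reproduce that parametrization (or find a genuine alternative to it), not just cite it.
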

\begin{remark}
We can assume that \eqref{eqn:inductive_invariant} holds for $i$, due to \Cref{lem:inductive_invariant}.
\end{remark}


\begin{proof}[Proof of \Cref{lem:inductive_rate}]
Let us condition on $\rp = \fp$. We prove the claim for this case, as when $\rp = \bp$,
the argument proceeds identically. For each $0 < \tau \le1$, define
$U_{i}(\tau):= \Pr[0 < \con_{\fp}(i) \le \tau \mid \rp = \fp]$.
In this notation, our goal is to show that
\begin{equation} \label{eqn:inductive_rate_proof}
    U_{i}(1) \le 1- c_{\fp}(i).
\end{equation}
Now, if $U_{i}(1) \le c_{\fp}(1)$, then since $c_{\fp}(1) \le 1 - c_{\fp}(i) - \sum_{j < i} c_{\fp}(j) \le 1 - c_{\fp}(i)$
by \eqref{eqn:knapsack_constraint_easy} of \Cref{def:knapsack_feasible}, this immediately implies \eqref{eqn:inductive_rate_proof}.
Thus, for the remainder of the proof, we assume that $U_{i}(1) > c_{\fp}(1)$. In this case, it will be convenient to
take $u_0 \in (0,1)$ such that
\begin{equation} \label{eqn:u_0}
    U_{i}(1) = c_{\fp}(1) (u_0 - \log(u_0)).
\end{equation}
First observe that since \eqref{eqn:inductive_rate} holds for all
$j < i$, we can apply \Cref{lem:implied_by_hypothesis} to get
$
\mb{E}[\con_{\fp}(i) \mid \rp = \fp] = \sum_{j < i} c_{\fp}(j) \mu_j.
$
On the other hand, by writing $\mb{E}[\con_{\fp}(i) \mid \rp = \fp]$ as a Riemann–Stieltjes integral,
and applying integration by parts, we get that
\begin{align} 
\sum_{j < i} c_{\fp}(j) \mu_j = \mb{E}[\con_{\fp}(i) \mid \rp = \fp] &= \Pr[0 \le \con_{\fp}(i) \le 1 \mid \rp = \fp] - \int_{0}^{1} \Pr[0 \le \con_{\fp}(i) \le \tau \mid \rp = \fp] d\tau \notag \\ 
                     &= U_{i}(1) - \int_{0}^{1} U_{i}(\tau) d\tau. \label{eqn:expectation_reformulate_proof}
\end{align}
Our goal is to upper bound the integral in \eqref{eqn:expectation_reformulate_proof}. In order to do so,
observe that by \Cref{lem:inductive_invariant}, \eqref{eqn:inductive_invariant} holds for $i$.
As a result, after rewriting this, we know that for all $0 < \tau \le 1/2$,
\begin{equation} \label{eqn:applied_inductive_invariant}
    U_{i}(1- \tau) \le U_{i}(\tau) - c_{\fp}(1) \log\left( \frac{U_{i}(\tau)}{c_{\fp}(1)} \right)
\end{equation}
Now, define
\begin{equation} \label{eqn:s_0}
        \tau_0 = \begin{cases}
\min\{ \tau \in (0,1/2]: U_{i}(\tau) \ge c_{\fp}(1) \cdot u_0\} & \text{if } U_{i}(1/2) \ge c_{\fp}(1) \cdot u_0 \\
  \frac{1}{2} & \text{if } U_{i}(1/2) < c_{\fp}(1) \cdot u_0.
\end{cases}    
\end{equation}
We assume that $\tau_0 < 1/2$, as $\tau_0 = 1/2$ is an edge case that is handled
easily. By applying a change of variables, followed by \eqref{eqn:applied_inductive_invariant},
\begin{align} 
    \int_{\tau_0}^{1/2} U_{i}(\tau) d\tau + \int_{1/2}^{1-\tau_0}U_{i}(\tau) d\tau &= \int_{\tau_0}^{1/2} U_{i}(\tau) d\tau + \int_{\tau_0}^{1/2} U_i(1-\tau) d\tau \notag \\
    &\le \int_{\tau_0}^{1/2} \left(2 U_{i}(\tau) - c_{\fp}(1) \log\left(\frac{U_{i}(\tau)}{c_{\fp}(1)}\right) \right) d\tau \label{eqn:applied_inductive_invariant_integral}
\end{align}
More, by using the definition of $\tau_0$, together with \eqref{eqn:applied_inductive_invariant} at $\tau =1/2$,
\begin{align}
&U_{i}(\tau) \le c_{\fp}(1) u_0 \text{ if $\tau \in [0,\tau_0)$} \label{eqn:low_range} \\
& c_{\fp}(1) \cdot u_0 \le U_{i}(\tau) \le U_{i}(1/2) \le c_{\fp}(1) \text{ if $\tau \in [\tau_0, 1/2]$} \label{eqn:med_range} \\
& U_{i}(\tau) \le U_{i}(1) \text{ if $\tau \in [1- \tau_0, 1]$} \label{eqn:high_range}
\end{align}
Thus, using \eqref{eqn:low_range} and \eqref{eqn:high_range}, followed by \eqref{eqn:u_0} and \eqref{eqn:applied_inductive_invariant_integral},
\begin{align*}
    \int_{0}^{1} U_{i}(\tau) d\tau &\le \tau_0 c_{\fp}(1) u_0 + \int_{\tau_0}^{1/2} U_{i}(\tau) d\tau + \int_{1/2}^{1-\tau_0}U_{i}(\tau) d\tau + U_{i}(1) \tau_0 \\
    &\le \tau_0 (2 c_{\fp}(1) - c_{\fp}(1) \log(u_0)) + \int_{\tau_0}^{1/2} \left(2 U_{i}(\tau) - c_{\fp}(1) \log\left(\frac{U_{i}(\tau)}{c_{\fp}(1)}\right) \right) d\tau. \\
    &\le \tau_0 (2 c_{\fp}(1) - c_{\fp}(1) \log(u_0)) + (1/2 - \tau_0) \max\{ 2 c_{\fp}(1) - c_{\fp}(1) \log(u_0), 2 c_{\fp}(1)\},
\end{align*}
where the final line uses \eqref{eqn:med_range} combined with the convexity of $z \rightarrow 2 z - c_{\fp}(1) \log(z/c_{\fp}(1))$ on $(0,1)$.

We verify that \eqref{eqn:inductive_rate_proof} holds by handling both cases of the maximum.
If $2 c_{\fp}(1) - c_{\fp}(1) \log(u_0) \le 2 c_{\fp}(1)$,
then $\int_{0}^{1} U_{i}(\tau) d\tau \le c_{\fp}(i)$, and so $U_{i}(1) \le c_{\fp}(1) + \sum_{j < i} c_{\fp}(j) \mu_j \le 1 - c_{\fp}(i)$,
where the last line applies \eqref{eqn:knapsack_constraint_easy} of \Cref{def:knapsack_feasible}.

On the other hand, if $2 c_{\fp}(1) < 2 c_{\fp}(1) - c_{\fp}(1) \log(u_0)$,
then $\int_{0}^{1} U_{i}(\tau) d\tau \le c_{\fp}(1) u_0 - \frac{c_{\fp}(1)}{2} \log(u_0)$,
and so
$$
U_{i}(1) = c_{\fp}(1) u_0 - c_{\fp}(1) \log(u_0) \le \sum_{j < i} c_{\fp}(j) \mu_j + c_{\fp}(1) u_0 - \frac{c_{\fp}(1)}{2} \log(u_0),
$$
which implies $u_0 \ge \exp\left(-\frac{2}{c_{\fp}(1)} \sum_{j < i} c_{\fp}(j) \mu_j \right)$.
Thus, since $z \rightarrow z - \log(z)$ is non-increasing on $(0,1)$,
we get that $U_{i}(1) \le 2 \sum_{j < i} c_{\fp}(j) \mu_j + c_{\fp}(1) \exp\left(-\frac{2}{c_{\fp}(1)} \sum_{j < i} c_{\fp}(j) \mu_j \right) \le 1 - c_{\fp}(i)$, where the final inequality applies \eqref{eqn:knapsack_constraint_hard} of \Cref{def:knapsack_feasible}.
\end{proof}

\subsection{Proof of \Cref{obs:limiting_knapsack_selection_values}} \label{pf:obs:limiting_knapsack_selection_values}
Recall that $\phi(z):= \frac{4}{9} - \frac{2z}{9}$ for each $z \in [0,1]$. Clearly, $\phi$ is decreasing and continuous,
and $(\phi(z) + \phi(1-z))/2 = 1/3$ for each $z \in [0,1]$.
We verify the remaining properties of \Cref{obs:limiting_knapsack_selection_values} in order.
Note that for each $z \in [0,1]$,
$$
\phi(z) + \phi(0) + \int_{0}^{z} \phi(\tau) d\tau =  \left(\frac{4}{9} - \frac{2z}{9}\right) + \frac{4}{9} + \left(\frac{4z}{9} - \frac{z^2}{9}\right) = \frac{8}{9} + \frac{2z}{9} - \frac{z^2}{9} \le 1,
$$
where the inequality holds since the function of $z$ is maximized at $z =1$. Thus,
\eqref{eqn:knapsack_limit_integral_value_one} holds. Similarly,
\begin{align*}
\left(\phi(z) + 2 \int_{0}^{z} \phi(\tau) d \tau \right) + \phi(0) \exp\left( -\frac{2\int_{0}^z \phi(\tau) d \tau}{\phi(0)}\right) = \left(\frac{22}{9} +\frac{2z}{9} - \frac{z^2}{9} \right) + \frac{4}{9} \exp\left(-\frac{9}{2}\left(\frac{4z}{9} - \frac{z^2}{9}\right)\right) \le 1,
\end{align*}
where the inequality holds since the function of $z$ is maximized at $z =1$. Thus,
\eqref{eqn:knapsack_limit_integral_value_two} holds, and so the proof is complete.

\subsection{Proof of \Cref{lem:phi_feasible_knapsack}} \label{pf:lem:phi_feasible_knapsack}
Recall that we have assumed $\sum_{i=1}^n \mu_i =1$, and $\mu_i > 0$ for each $i \in [n]$.
We first argue that $(c_{\fp}(i), c_{\bp}(i))_{i=1}^n$ satisfy \Cref{def:knapsack_feasible}. 
Now, $(c_{\fp}(i))_{i=1}^n$ (respectively, $(c_{\bp}(i))_{i=1}^n$) is non-increasing (respectively, non-decreasing),
due to the fact that $\phi$ is decreasing, as claimed in \Cref{obs:limiting_knapsack_selection_values}.
We next verify \eqref{eqn:knapsack_constraint_easy} and \eqref{eqn:knapsack_constraint_hard} of \Cref{def:knapsack_feasible} hold,
beginning with $\sigma = \fp$. Observe that
\begin{equation} \label{eqn:upper_bound_sum_by_integral_knapsack}
    \sum_{j < i} c_{\fp}(j) \mu_j = \sum_{j < i } \int_{\mu_{\fp}(j)} ^{\mu_{\fp}(j) + \mu_j} \phi(\tau) d \tau = \int_{0}^{\mu_{\fp}(i)} \phi(\tau) d\tau.
\end{equation}
On the other hand, since $\phi$ is a decreasing function,
\begin{equation} \label{eqn:upper_bound_average_knapsack}
    c_{\fp}(i) = \int_{\mu_{\fp}(i)} ^{\mu_{\fp}(i) + \mu_i} \frac{\phi(\tau) d \tau}{\mu_i} \le \phi(\mu_{\fp}(i)), \text{ and }  c_{\fp}(1) = \int_{0} ^{\mu_1} \frac{\phi(\tau) d \tau}{\mu_1} \le \phi(0).
\end{equation}
Finally, using \eqref{eqn:upper_bound_sum_by_integral_knapsack} and \eqref{eqn:upper_bound_average_knapsack}, together with the
fact that $z \rightarrow z \exp\left(-\frac{2 \int_{0}^{\mu_{\fp}(i)} \phi(\tau) d\tau}{z}\right)$ is increasing on $z \in (0,1)$,
\begin{equation} \label{eqn:upper_bound_exp_knapsack}
c_{\fp}(1) \exp\left( -\frac{2 \sum_{j < i} c_{\fp}(j)}{c_{\fp}(1)} \right) = c_{\fp}(1) \exp\left( -\frac{2 \int_{0}^{\mu_{\fp}(i)} \phi(\tau) d\tau}{c_{\fp}(1)}\right) \le \phi(0)  \exp\left( -\frac{2 \int_{0}^{\mu_{\fp}(i)} \phi(\tau) d\tau}{\phi(0)}\right) 
\end{equation}
By combining \eqref{eqn:upper_bound_sum_by_integral_knapsack}, \eqref{eqn:upper_bound_average_knapsack}, and \eqref{eqn:upper_bound_exp_knapsack}, we get that
\begin{align*}
    c_{\fp}(i) + 2  \sum_{j < i} c_{\fp}(j) \mu_j + c_{\fp}(1) &\exp\left( -\frac{2 \sum_{j < i} c_{\fp}(j)}{c_{\fp}(1)} \right) \\
    &\le \phi(\mu_{\fp}(i)) + 2 \int_{0}^{\mu_{\fp}(i)} \phi(\tau) d\tau + \phi(0)  \exp\left( -\frac{2 \int_{0}^{\mu_{\fp}(i)} \phi(\tau) d\tau}{\phi(0)}\right) \le 1,
\end{align*}
where the last inequality follows from \eqref{eqn:knapsack_limit_integral_value_two} of \Cref{obs:limiting_knapsack_selection_values}.
Thus, \eqref{eqn:knapsack_constraint_hard} of \Cref{def:knapsack_feasible} holds for $\sigma = \fp$.
By using \eqref{eqn:upper_bound_average_knapsack} and \eqref{eqn:upper_bound_average_knapsack}, we can derive  \eqref{eqn:knapsack_constraint_easy} of \Cref{def:knapsack_feasible} by using \eqref{eqn:knapsack_limit_integral_value_one} of \Cref{obs:limiting_knapsack_selection_values}. Similar arguments also apply to $\sigma = \bp$. Thus, $(c_{\fp}(i), c_{\bp}(i))_{i=1}^n$ satisfy \Cref{def:knapsack_feasible}.

Observe now that for any $i \in [n]$, we have that $\mu_{\bp}(i) + \mu_i = 1 - \mu_{\fp}(i)$. Thus,
\begin{align*}
    \frac{c_{\fp}(i) + c_{\bp}(i)}{2} &= \int_{\mu_{\fp}(i)}^{\mu_{\fp}(i) + \mu_i} \frac{\phi(\tau)}{2 \mu_i} d\tau + \int_{\mu_{\bp}(i)}^{\mu_{\bp}(i) + \mu_i} \frac{\phi(\tau)}{2 \mu_i} d\tau \\
            &=  \int_{\mu_{\fp}(i)}^{\mu_{\fp}(i) + \mu_i} \frac{\phi(\tau)}{2 \mu_i} d\tau + \int_{1 - \mu_{\fp}(i) - \mu_i}^{1- \mu_{\fp}(i)} \frac{\phi(\tau)}{2 \mu_i} d \tau\\
            &= \frac{1}{\mu_i} \left(\int_{\mu_{\fp}(i)}^{\mu_{\fp}(i) + \mu_i} \frac{\phi(\tau) + \phi(1 - \tau)}{2} d\tau \right) = \frac{1}{3},
\end{align*}
where the third equality applies a change of variables, and the last equality applies \Cref{obs:limiting_knapsack_selection_values}. Thus, 
$
    \min_{i \in [n]} \frac{c_{\fp}(i) + c_{\bp}(i)}{2} = \frac{1}{3},
$
and so the proof is complete.




\end{document}